\newtheorem{lemmastyleS}{Lemma}
\newtheorem{remarkstyleS}{Remark}
\newtheorem{propositionstyleS}{Proposition}
\let\c@corollarystyleS\c@theoremstyleS
\let\c@lemmastyleS\c@theoremstyleS
\let\c@remarkstyleS\c@theoremstyleS
\let\c@propositionstyleS\c@theoremstyleS
\title{Emptiness and Universality Problems in\\
  Timed Automata with Positive Frequency\thanks{This work has been
    partly supported by the ESF project GASICS, the ANR project
    ANR-2010-BLAN-0317, the Tournesol Hubert Curien partnership STP,
    the ARC project AUWB-2010--10/15-UMONS-3, the FRFC project
    2.4515.11 and a grant from the National Bank of Belgium.  }}
\author{Nathalie Bertrand\inst{1}
  \and
  Patricia Bouyer\inst{2}
  \and
  Thomas Brihaye\inst{3}
\and 
Am\'elie Stainer\inst{1}}
\institute{INRIA Rennes, France \and LSV - CNRS \& ENS Cachan, France
  \and Universit\'e de Mons, Belgium}
\begin{document}

\pagestyle{plain}

\maketitle

\setlength{\abovedisplayskip}{4pt}
\setlength{\belowdisplayskip}{4pt}
\begin{abstract}
  The languages of infinite timed words accepted by timed automata are
traditionally defined using B\"uchi-like conditions. These acceptance
conditions focus on the set of locations visited infinitely often
along a run, but completely ignore quantitative timing aspects. In
this paper we propose a natural quantitative semantics for timed
automata based on the so-called frequency, which measures %\emph{i.e.},
the proportion of time spent in the accepting locations. %  We define
% %timed languages accepted with positive frequency,      
% frequency-languages and
We study 
various properties of timed languages accepted with positive
frequency, and 
% we study 
in particular 
% the decidability of
the emptiness and universality problems.
% for such timed languages.

% Classically, when studying infinite timed words, B\"uchi-like
% conditions are used in ordrer to define the timed languages accepted
% by timed automata. These acceptance conditions focus on the set of
% locations visisted infinitely often, but completely ignore the
% (quantitative) time aspects of the underlying executions.  In this
% paper we propose a rather natural quantitative semantics for timed
% automata based on the proportion of time spent in the accepting states
% (called the \textit{frequency}). This allows us to define
% frequency-languages \fbox{TB:frequency timed languages???} and to
% obtain (un)decidability results concerning language emptiness and
% universality depending on the number of clocks in the
% automata. \fbox{TB: Etre plus precis dans les resultats?}

\end{abstract}
\begin{bibunit}[myalpha]
\section{Introduction}\label{section:intro}

The model of timed automata, introduced by Alur and Dill in the
90's~\cite{AD-tcs94} is commonly used to represent real-time
systems. Timed automata consist of an extension of finite automata
with continuous variables, called clocks, that evolve synchronously
with time, and can be tested and reset along an execution. Despite
their uncountable state space, checking reachability, and more
generally $\omega$-regular properties, is decidable \textit{via} the
construction of a finite abstraction, the so-called region automaton.
% , which abtracts the behaviours of a timed automaton.
This fundamental result made timed automata very popular in the
formal methods community,
% to model real-time systems,
and lots of work has been done towards their verification, including
the development of dedicated tools like Kronos
%~\cite{BDMOTY98} 
or Uppaal.
%~\cite{BDL04}. \pat{pas de ref ici si manque de place amha}

More recently a huge effort has been made for modelling % more
quantitative aspects encompassing timing constraints, such as
costs~\cite{ALP-hscc01,BFHLPRV-hscc01} or
probabilities~\cite{KNSS-tcs02,BBBBG-lics08}.
% More recently, several extensions of timed automata have been
% proposed in order to encompass quantities such as
% costs~\cite{ALP-hscc01,BFHLPRV-hscc01} or
% probabilities~\cite{KNP-per09,BBBBG-lics08}\nat{adequate ref for
%   PRISM?}, with the intention to express and verify properties such
% as:
It is now possible to express and check properties such as:
 ``the minimal cost to reach a given state is smaller than $3$'',
or ``the probability to visit infinitely often a given location is
greater than $1/2$''. As a consequence, from qualitative verification,
the emphasis is now put on quantitative verification of timed
automata.

In this paper we propose a quantitative semantics for timed automata
based on the proportion of time spent in critical states (called the
\textit{frequency}). Contrary to probabilities or
volume~\cite{AD-concur09}
% ~\cite{AD-formats09,AD-concur09}\pat{en citer un
%   seul~?}\thomas{OK aussi. Le CONCUR?}
that give a value to sets of
behaviours of a timed automaton
% allow one to measure the set of behaviours 
(or a subset thereof), the frequency assigns a real value (in $[0,1]$)
to each execution of the system. It can thus be used in a
language-theoretic approach to define quantitative languages
associated with a timed automaton, or boolean languages based on
quantitative criteria \emph{e.g.}, one can consider the set of timed
words for which there is an execution of frequency greater than a
threshold $\lambda$.

Similar notions were studied in the context of untimed systems. For
finite automata, %and more generally games on finite arena,
mean-payoff conditions have been
investigated~\cite{CDH-acmtocl10,ADMW-fossacs09,CDEHR-concur10}: with each run is
associated the limit average of weights encountered along the
execution. Our notion of frequency extends mean-payoff conditions to
timed systems by assigning to an execution the limit average of time
spent in some distinguished locations. It can also be seen as a timed
version of the asymptotic frequency considered in quantitative
fairness games~\cite{BFMM-qapl10}. Concerning probabilistic models, a
similar notion was introduced in constrained probabilistic B\"uchi
automata yielding the decidability of the emptiness problem under the
probable
semantics~\cite{TBG-fsttcs09}. %\nat{citer aussi Meyer et al @concur10?}
Last, the work closest to ours deals with double-priced timed
automata~\cite{BBL-fmsd08}, where the aim is to synthesize schedulers
which optimize on-the-long-term the reward of a system. %\pat{il faut
 % en dire plus~?}\thomas{OK pour moi.}
% \fbox{dvp}

% \fbox{restriction a une horloge}
Adding other quantitative aspects to timed automata comes often with a
cost (in terms of decidability and complexity), and it is often
required to restrict the timing behaviours of the system to get some
computability results, see for instance~\cite{OW05}. The tradeoff is
then to restrict to single-clock timed automata.
% , and this is what we do in this work. \pat{boaf boaf}
%
Beyond introducing the concept of frequency, which we believe very
natural, the main contributions of this paper are the following.
First of all, using a refinement of the region graph, we show how to
compute the infimum and supremum values of frequencies in a given
single-clock timed automaton, as well as a way to decide whether these
bounds are realizable (\emph{i.e.}, whether they are minimum and
maximum respectively). The computation of these bounds together with
their realizability can be used to decide the emptiness problem for
languages defined by a threshold on the frequency.
% for single-clock timed automata.
Moreover, in the restricted case of deterministic timed automata, it
allows to decide the universality problem for these languages. Last
but not least %:-)\thomas{On laisse le smiley?}
we discuss the universality problem for frequency-languages. Even
under our restriction to one-clock timed automata, this problem is
non-primitive recursive, and we provide a decision algorithm in the
case of Zeno words when the threshold is $0$. Our restriction to
single-clock timed automata is crucial since at several points the
techniques employed do not extend to two clocks or more. In
particular, the universality problem becomes undecidable for timed
automata with several clocks.
% The rest of the paper is organized as follows...\nat{peut \^etre pas
%   n\'ecessaire, d\'ej\`a dit plus haut}
%Due to space limitation, most proofs are skipped in the core of the
%paper, and are postponed to the Appendix.

\section{Definitions and preliminaries}\label{section:defs}

In this section, we recall the model of timed automata, introduce the
concept of frequency, and show how those can be used to define timed
languages. We then compare our semantics to the standard semantics
based on B\"uchi acceptance.

% Finally, we define the corner-point abstraction which is an useful
% tool to decide several problems on languages.
\subsection{Timed automata and frequencies}

We start with notations and useful definitions concerning timed
automata~\cite{AD-tcs94}.

Given $X$ a finite set of clocks, a \emph{(clock) valuation} is a
mapping $v:X \rightarrow \mathbb{R}_+$. We write $\mathbb{R}_+^X$ for
the set of valuations. We note $\overline{0}$ the valuation that
assigns $0$ to all clocks. If $v$ is a valuation over $X$ and $t\in
\mathbb{R}_+$, then $v+t$ denotes the valuation which assigns to every
clock $x\in X$ the value $v(x)+t$.  For $X' \subseteq X$ we write
$v_{[X' \leftarrow0]}$ for the valuation equal to $v$ on $X \setminus
X'$ and to $\overline{0}$ on $X'$.

% Given $M$ a non-negative integer, an {\it $M$-bounded guard} (or
% simply guard)
A \emph{guard} over $X$ is a finite conjunction of constraints of the
form $x \sim c$ where $x\in X,\; c\in \mathbb{N}$ and $\mathord\sim
\in \{<,\le,=,\ge,>\}$. We denote by $G(X)$ the set of 
%$M$-bounded
guards over $X$. Given $g$ a guard and $v$ a valuation, we write $v
\models g$ if $v$ satisfies $g$ (defined in a natural way).
% , which is defined by $v \models (x \sim c)$ if $v(x) \sim c$

\begin{definition}
  \label{def-ta}
  A \emph{timed automaton} 
  % (TA for short)
  is a tuple $\A=(L,L_0,F,\Sigma,X,E)$ such that: $L$ is a finite set
  of locations, $L_0\subseteq L$ is the set of initial
  locations, 
  % \pat{ca vous dirait d'avoir plusieurs \'etats initiaux possibles ?
  %   Ou bien ca fait planter quelque part ?}
  $F\subseteq L$ is the set of accepting locations, $\Sigma$ is a
  finite alphabet, X is a finite set of clocks
  % $M\in \mathbb{N}$
  and $E\subseteq L\times G(X) \times \Sigma \times 2^X \times L$ is a
  finite set of edges.
\end{definition} 
% Constant $M$ is called the maximal constant of $\A$.  For example,
% the timed automaton $\A$ of the Fig.~\ref{fig:exTA} has $1$ for
% maximal constant.

% , and we will refer to $(|X|,M)$ as the \emph{resources} of $\A$.
The semantics of a timed automaton $\A$ is given as a timed transition
system $\mathcal{T}_{\A}=(S,S_0,S_F,(\mathbb{R}_+ \times
\Sigma),\rightarrow)$ with set of states $S=L \times \mathbb{R}^X_+$,
initial states $S_0=\{(\ell_0,\overline{0}) \mid \ell_0 \in L_0\}$,
final states $S_F =F \times \mathbb{R}^X_+$ and transition relation
$\mathord\rightarrow \subseteq S\times (\mathbb{R}_+ \times \Sigma)
\times S$, composed of moves of the form
% $(\ell,v)\stackrel{\tau,a}{\longrightarrow}(\ell',v')$
$(\ell,v)\xrightarrow{\tau,a}(\ell',v')$ with $\tau >0$ whenever there
exists an edge $(\ell,g,a,X',\ell') \in E$ such that $v+\tau \models
g$ and $v'=(v+\tau)_{[X'\leftarrow 0]}$.

A \emph{run} % (or \emph{execution})
$\rho$ of $\A$ is an infinite
sequence of moves starting in some $s_0 \in S_0$, \textit{i.e.},
$\rho=s_0 \xrightarrow{\tau_0,a_0} s_1\cdots
\xrightarrow{\tau_k,a_k}s_{k+1} \cdots$.
% The run is {\it
%   initial} whenever $s_0 \in S_0$.
% $\rho=s \stackrel{\tau_0,a_0}{\longrightarrow}s_1\cdots
% \stackrel{\tau_k,a_k}{\longrightarrow}s_{k+1} \cdots$.
A \emph{timed word} over $\Sigma$ is an element $(t_i,a_i)_{i \in
  \mathbb{N}}$ of $(\mathbb{R}_+ \times \Sigma)^\omega$ such that
$(t_i)_{i \in \mathbb{N}}$ is increasing.
%\pat{on ne definit pas les mots finis~?} 
The timed word is said to be \emph{Zeno} if the sequence $(t_i)_{i \in
  \mathbb{N}}$ is bounded from above. The timed word associated with
$\rho$ is $w=(t_0,a_0) \ldots (t_k,a_k) \ldots$ where $t_i =
\sum_{j=0}^i \tau_j$ for every $i$. A timed automaton $\A$ is
\emph{deterministic}
% (abbreviated DTA)
whenever, given two edges $(\ell,g_1,a,X'_1,\ell')$ and
$(\ell,g_2,a,X'_2,\ell')$ in $E$, $g_1 \wedge g_2$ cannot be
satisfied. In this case,
% In particular, 
for every timed word $w$, there is at most one run reading $w$. An
example of a (deterministic) timed automaton is given in
Fig.~\ref{fig:exTA}. As a convention locations in $F$ will be depicted
in black.
% \pat{ai mis les etats en noir car je pense que ca passe mieux a
%   l'impression, mais je n'y suis pas fortement attachee}
\begin{figure}[htbp]
\begin{center}
\begin{tikzpicture}
\everymath{\scriptstyle}
\draw(-.8,0) node (init) {};
\draw(0,0) node [circle,draw,inner sep=1.5pt] (l0) {$\ell_0$};
\draw(3,0) node [circle,draw,inner sep=1.5pt,fill=black] (l1) {$\textcolor{white}{\ell_1}$};
\draw(6,0) node [circle,draw,inner sep=1.5pt] (l2) {$\ell_2$};

\draw[-latex'] (init) -- (l0) node[pos=.5,above]{};
\draw[-latex'] (l1) .. controls +(30:1cm) and +(150:1cm) .. (l2) node[pos=.5,above]{$x<1,a,x:=0$};
\draw[-latex'] (l2) .. controls +(210:1cm) and +(330:1cm) .. (l1) node[pos=.5,below]{$x<1,a$};
\draw[-latex'] (l0) -- (l1) node[pos=.5,above]{$x=1,a,x:=0$};
\end{tikzpicture}\caption{Example of a timed automaton $\A$ with $L_0
  = \{\ell_0\}$ and $F=\{\ell_1\}$.}\label{fig:exTA}
\end{center}
\end{figure}
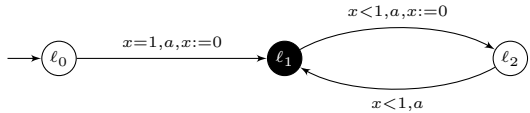

\begin{definition}
  \label{def-freq}
  Given $\A=(L,L_0,F,\Sigma,X,E)$ a timed automaton and a run $\rho =
  (\ell_0,v_0) \xrightarrow{\tau_0,a_0} (\ell_1,v_1)
  \xrightarrow{\tau_1,a_1} (\ell_2,v_2) \cdots $
  % a run in 
  of $\A$, the \emph{frequency} of $F$ along $\rho$, denoted
  $\freq{\rho}$, is defined as $\limsup_{n \rightarrow \infty}
  (\sum_{i \leq n | \ell_i \in F} \tau_i)/(\sum_{i \leq n} \tau_i)$.
%  \frac{\sum_{i \leq n | \ell_i \in F} \tau_i}{\sum_{i \leq n}
%    \tau_i}$.
% : 
%   % \pat{je noterais bien plutot $\mathsf{freq}_{\A}(\rho)$, vous en
%   %   dites quoi ?}\thomas{OK pour moi!}
%   \[
%   \freq{\rho} = \limsup_{n \rightarrow \infty} \frac{\sum_{i \leq n |
%       \ell_i \in F} \tau_i}{\sum_{i \leq n} \tau_i}.
%   \] 
\end{definition} 
Note that the choice of $\limsup$ is arbitrary, and the choice of
$\liminf$ would be as relevant. Furthermore notice that the limit may
not exist in general.
% does not necessarily exist.  below we will see a run of $\A$ where
% there is no limit.

%Run $\rho$ is said \textit{accepting with positive frequency}
%\pat{vocabulaire ok ? a uniformiser}
%  (under the quantitative semantics) 
%if $\freq{\rho}$ is positive, and 
A timed word $w$ is said \emph{accepted with positive frequency} by
$\A$ if there exists a run $\rho$ which reads $w$ and such that
$\freq{\rho}$ is positive. The \emph{positive-frequency language} of
$\A$ is the set of timed words that are accepted with positive
frequency by $\A$.  
%
%.\pat{ai enleve des defs
%  inutiles} % We write $\mathcal{L}^{>0}(\A)$
% % (resp. $\mathcal{L}^{>0}_{nZ}(\A)$, $\mathcal{L}^{>0}_{Z}(\A)$) 
% for
% % the language of $\A$, that is
% the set of timed words $w$ 
% % (resp. non-Zeno, Zeno timed words) 
% such that there exists an accepting run with positive frequency
% reading $w$, and we call such a language a \emph{positive-frequency
%   language}.
%We say that
%two timed automata $\A$ and $\B$ are \emph{equivalent} whenever
%$\L(\A) = \L(\B)$.
Note that we could define more generally languages where the frequency
of each word should be larger than some threshold $\lambda$,
% $\mathcal{L}^{\sim \lambda}(\A)$ with the obvious meaning
% (comparison with some threshold $\lambda)$,
but even though some of our results apply to this more general
framework we prefer focusing on languages with positive frequency.

\begin{example}
  % As an example, let us
  % \amelie{ai modifie cet ex pour ajouter les actions et que rien ne
  %   depace en marge... a relire}\thomas{me semble OK}
  We illustrate the notion of frequency on runs of the deterministic
  timed automaton $\A$ of Fig.~\ref{fig:exTA}. First, the only run in
  $\A$ `reading' the word $(1,a).((\frac{1}{3},a).(\frac{1}{3},a))^*$
  has frequency $\frac{1}{2}$ because
  % . Indeed, 
  the sequence $\frac{n/3}{1+(2n)/3}$ converges to $\frac{1}{2}$.
  Second, the Zeno run reading
  $(1,a).(((\frac{1}{2^k},a).(\frac{1}{2^k},a))^k)_{k\ge1}$ in $\A$ has frequency
  $\frac{1}{3}$ since the sequence
  $\frac{\sum_{k\ge1}{1}/{2^k}}{1+\sum_{k\ge1}{1}/{2^{k-1}}}$
  converges to $\frac{1}{3}$.  Finally, the run in $\A$ reading the word 
  $(1,a).(((\frac{1}{2},a).(\frac{1}{4},a))^{2^{2k}}.((\frac{1}{4},a).(\frac{1}{2},a))^{2^{2k+1}})_{k\ge1}$
  has frequency $\frac{4}{9}$. Note that the sequence under
  consideration does not converge, but its $\limsup$ is $\frac{4}{9}$.
\end{example}

\subsection{A brief comparison with usual semantics}% (subsection 2 or 3?)}
The usual semantics for timed automata considers a B\"uchi acceptance
condition. We naturally explore differences between this usual
semantics, and the one we introduced based on positive frequency. The
expressiveness of timed automata under those acceptance conditions is
not comparable, as witnessed by the automaton represented in
Fig.~\ref{fig:expr}: on the one hand, its positive-frequency language
% under the frequency-acceptance
is not timed-regular (\textit{i.e.} accepted by a timed automaton with
a standard B\"uchi acceptance condition), and on the other hand, its
B\"uchi language
% under the standard semantics 
cannot be recognized by a timed automaton with a positive-frequency
acceptance condition.
% \thomas{J'ajouterai qq details quitte a les mettre en annexe...}

\begin{figure}[ht!]
\begin{center}
\subfigure[Expressiveness.]{
\begin{tikzpicture}
\everymath{\scriptstyle}
\draw(-3.8,0) node (init) {};
\draw(-3,0) node [circle,draw,inner sep=1.5pt,fill=black] (l2) {$\textcolor{white}{\ell_0}$};
\draw(-1,0) node [circle,draw,inner sep=1.5pt] (l0) {$\ell_1$};

\draw[-latex'] (init) -- (l2) node[pos=.5,above]{};
\draw[-latex'] (l2) .. controls +(30:.8cm) and +(150:.8cm) .. (l0) node[pos=.5,above]{$x=1,a,\{x\}$};
\draw[-latex'] (l0) .. controls +(120:1cm) and +(60:1cm) .. (l0) node[pos=.5,above]{$x=1,b,\{x\}$};
\draw[-latex'] (l0) .. controls +(210:.8cm) and +(330:.8cm) .. (l2) node[pos=.5,below]{$x=1,a,\{x\}$};
\end{tikzpicture}\label{fig:expr}}
\subfigure[Universality (non-Zeno).]{
\begin{tikzpicture}
\everymath{\scriptstyle}
\draw(-3.8,0) node (init) {};
\draw(-.1,) node (init') {};
\draw(-3,0) node [circle,draw,inner sep=1.5pt] (l2) {$\ell_0$};
\draw(-1,0) node [circle,draw,inner sep=1.5pt,fill=black] (l0) {$\textcolor{white}{\ell_1}$};

\draw[-latex'] (init) -- (l2) node[pos=.5,above]{};
\draw[-latex'] (l2) .. controls +(30:.8cm) and +(150:.8cm) .. (l0) node[pos=.5,above]{$\Sigma$};
%\draw[-latex'] (l0) .. controls +(120:1cm) and +(60:1cm) .. (l0) node[pos=.5,above]{$x=1,b,\{x\}$};
\draw[-latex'] (l0) .. controls +(210:.8cm) and +(330:.8cm) .. (l2) node[pos=.5,below]{$\Sigma$};
\end{tikzpicture}\label{fig:univnZ}}
\subfigure[Universality (Zeno).]{
\begin{tikzpicture}
\everymath{\scriptstyle}
\draw(-3.8,0) node (init) {};
\draw(-4,-.6) node (init') {};
\draw(-3,0) node [circle,draw,inner sep=1.5pt,fill=black] (l2) {$\textcolor{white}{\ell_0}$};
\draw(-1,0) node [circle,draw,inner sep=1.5pt] (l0) {$\ell_1$};

\draw[-latex'] (init) -- (l2) node[pos=.5,above]{};
\draw[-latex'] (l2) -- (l0) node[pos=.5,above]{$\Sigma$};
\draw[-latex'] (l0) .. controls +(120:1cm) and +(60:1cm) .. (l0) node[pos=.5,above]{$\Sigma$};
\end{tikzpicture}\label{fig:univZ}}
\label{fig:countbuch}\caption{Automata for the comparison with the usual semantics.}
\end{center}
\end{figure}
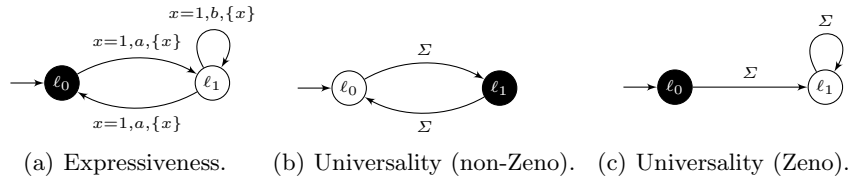

%One of the main contributions 
The contribution of this paper is to study properties of the
positive-frequency languages. We will show that we can get very fine
information on the set of frequencies of runs in \emph{single-clock}
timed automata, which implies the decidability of the emptiness
problem for positive-frequency languages.
% (note that we will also be able to decide the emptiness of any
% $\mathcal{L}^{\sim c}(\A)$ if $\A$ is a single-clock timed
% automaton).
We also show that our technics do not extend to multi-clock timed
automata.

% recognized by timed automata with a positive-frequency acceptance
% condition.
We will also 
% In particular we
consider the universality problem and variants thereof (restriction to
Zeno or non-Zeno timed words).
% variants of the universality problem,
% % for timed automata under frequency-acceptance,
% focusing on universality for
% % finite\amelie{garder mots finis ou non?},
% Zeno or non-Zeno timed words. 
On the one hand, clearly enough, a (non-Zeno)-universal timed
automaton with a positive-frequency acceptance condition is
(non-Zeno)-universal for the classical B\"uchi-acceptance. The timed
automaton of Fig.~\ref{fig:univnZ} is a counterexample to the
converse. % reverse implication.
% \thomas{Convention pour le mot vide?}
On the other hand, a Zeno-universal timed automaton under the
classical semantics is necessarily Zeno-universal under the
positive-frequency acceptance condition, but the automaton depicted in
Fig.~\ref{fig:univZ} shows that the converse
% reverse implication 
does not hold.
%

%\section{Set of frequencies}
\section{Set of frequencies of runs in one-clock timed automata}
\label{section:set}
%- + formalisation de la contraction (dilatation) soit en ajoutant une sous-section soit???

% In this section, we establish key results on the set of frequencies
% of executions in single-clock timed automata.\pat{revoir la premiere
%   phrase}
In this section, we give a precise description of the set of
frequencies of runs in single-clock timed automata.
% that we can exactly characterize the set of frequencies of a given
% one-clock timed automaton.  \thomas{Tentative de reformulation de la
%   premiere phrase???}
To this aim, we use the \emph{corner-point
  abstraction}~\cite{BBL-fmsd08}, a refinement of the region
abstraction, and exploit the links between frequencies in the timed
automaton and ratios in its corner-point abstraction. We fix a
single-clock timed automaton $\A = (L,L_0,F,\Sigma,\{x\},E)$.

%Decidability consequences for emptiness and universality problems are
%then discussed.

\subsection{The corner-point abstraction} %(subsection 2 or 3?)}
Even though the corner-point abstraction can be defined for general
timed automata~\cite{BBL-fmsd08}, we focus on the case of single-clock
timed automata.
% For simplicity, in this section we consider only single-clock timed
% automata.
%
%The \textit{maximal constant} of a timed automaton is the largest constant which appears in its guards. 

%Given $\A$ a single-clock timed automaton, t
If $M$ is the largest constant appearing in the guards of $\A$, the
usual \textit{region abstraction} of $\A$ is the partition
$\mathit{Reg}_\A$ of the set of valuations $\mathbb{R}_+$ made of the
singletons $\{i\}$ for $0\le i\le M$, the open intervals $(i,i+1)$
with $0\le i\le M-1$ and the unbounded interval $(M,\infty)$
represented by $\bot$.
% , where $M$ is the largest constant appearing in the guards of $\A$.
A piece of this partition is called a \textit{region}.  The
corner-point abstraction refines the region abstraction by associating
\textit{corner-points} with regions. The singleton regions have a
single corner-point represented by $\bullet$ whereas the open
intervals $(i,i+1)$ have two corner-points $\bullet$-- (the left
end-point of the interval) and --$\bullet$ (the right end-point of the
interval). Finally, the region $\bot$ has a single corner-point
denoted $\alpha_\bot$. We write $(R,\alpha)$ for the region $R$
pointed by the corner $\alpha$ and $(R,\alpha) + 1$ denotes its
\textit{direct time successor} defined by:
\[ (R,\alpha) +1 = \left\{ \begin{array}{ll}((i,i+1),\bullet\text{--})& \quad \text{if}\ (R,\alpha)=(\{i\},\bullet)\ \text{with}\ i<M,\\
    ((i,i+1),\text{--}\bullet)& \quad \text{if}\ (R,\alpha)=((i,i+1),\bullet\text{--}),\\
    (\{i+1\},\bullet)& \quad \text{if}\ (R,\alpha)=((i,i+1),\text{--}\bullet),\\
    (\bot,\alpha_\bot)& \quad \text{if}\ (R,\alpha)=(\{M\},\bullet)\ \text{or}\ (\bot,\alpha_\bot).
\end{array}\right.
\]
%In a region of the form $(i,i+1)$, we have naturally $\bullet$-- $+1=$ --$\bullet$. In the same way, $\bullet +1$ is the corner-point $\bullet\text{--}$ of the region $(i,i+1)$ if $\bullet$ is associated to $\{i\}$ with $i<M$.
%Moreover, the corner-point --$\bullet$ associated to $(i,i+1)$ has $\bullet$ associated to $\{i+1\}$ for direct time successor. Finally, the corner-point $\alpha_\bot$ is the direct time successor of the corner-point $\bullet$ associated to the region $\{M\}$ and of itself.
Using these notions, we define the corner-point abstraction as follows.
%of a timed automaton $\A$.
\begin{definition}
  % Given $\A$ a single-clock timed automaton, t
  The \emph{(unweighted) corner-point abstraction} of $\A$ is the
  finite automaton
  $\A_{cp}=(L_{cp},L_{0,cp},F_{cp},\Sigma_{cp},E_{cp})$ where $L_{cp}
  = L\times \mathit{Reg}_\A \times
  \{\bullet,\bullet\text{--},\text{--}\bullet, \alpha_\bot\}$ is the
  set of states, $L_{0,cp}= L_0\times \{0\} \times \{\bullet\}$ is the
  set of initial states, $F_{cp}= F\times \mathit{Reg}_\A \times
  \{\bullet,\bullet\text{--},\text{--}\bullet, \alpha_\bot\}$ is the
  set of accepting states, $\Sigma_{cp}=\Sigma \cup \{\varepsilon\}$,
  and $E_{cp} \subseteq L_{cp}\times \Sigma_{cp} \times L_{cp}$ is the
  finite set of edges defined as the union of discrete transitions and
  idling transitions:
  \begin{itemize}
  \item \emph{discrete transitions:}
    $(\ell,R,\alpha)\xrightarrow{a}(\ell',R',\alpha')$ if $\alpha$ is
    a corner-point of $R$ and there exists a transition $\ell
    \xrightarrow{g,a,X'}\ell'$ in $\A$, such that $R\subseteq g$ and
    $(R',\alpha')= (R,\alpha)$ if $X'=\emptyset$, otherwise
    $(R',\alpha')= (\{0\},\bullet)$,
  \item \emph{idling transitions:}
    $(\ell,R,\alpha)\xrightarrow{\varepsilon}(\ell,R',\alpha')$ if
    $\alpha$ (resp. $\alpha'$) is a corner-point of $R$ (resp. $R'$)
    and $(R',\alpha')=(R,\alpha) +1$.
  \end{itemize}
\end{definition}
We decorate this finite automaton with two weights for
representing frequencies, one which we call the cost, and the other
which we call the reward (by analogy with double-priced timed automata
in~\cite{BBL-fmsd08}).  The \emph{(weighted) corner-point abstraction}
$\A_{cp}^F$ is obtained from $\A_{cp}$ 
% \pat{dire qu'on parlera de l'un ou l'autre indifferemment} Given $F
% \subseteq L$ a set of locations of $\A$, we define $\A_{cp}^F$, a
% double priced transition system,
by labeling idling transitions in $\A_{cp}$ as follows: transitions
$(\ell,R,\alpha) \xrightarrow{\varepsilon} (\ell,R,\alpha')$ with
$(R,\alpha')=(R,\alpha)+1$ ($\alpha'=\alpha+1$ for short) are assigned
cost $1$ (resp. cost $0$) and reward $1$ if $\ell \in F$
(resp. $\ell \notin F$), and all other transitions are assigned both
cost and reward $0$. 
% Next we will call corner-point abstraction either $\A_{cp}$ or
% $\A_{cp}^F$.
To illustrate this definition, the corner-point abstraction of the
timed automaton in Fig.~\ref{fig:exTA} is represented in
Fig.~\ref{fig:excorner}.

\begin{figure}
\begin{center}
\begin{tikzpicture}
\everymath{\scriptstyle}
\draw(-1.1,0) node (init) {};
\draw(0,0) node [rectangle,draw,inner sep=2.5pt,rounded corners=2pt] (l00) {$\ell_0,\{0\},\;\bullet\;$};
\draw(2.5,0) node [rectangle,draw,inner sep=2.5pt,rounded corners=2pt] (l01) {$\ell_0,(0,1),\;\bullet\text{---}\;$};
\draw(5,0) node [rectangle,draw,inner sep=2.5pt,rounded corners=2pt] (l02) {$\ell_0,(0,1),\;\text{---}\bullet\;$};
\draw(7.5,0) node [rectangle,draw,inner sep=2.5pt,rounded corners=2pt] (l03) {$\ell_0,\{1\},\;\bullet\;$};
\draw(9.5,0) node [rectangle,draw,inner sep=2.5pt,rounded corners=2pt] (l04) {$\ell_0,\bot,\;\bot\;$};
\draw(0,-1.7) node [rectangle,draw,inner sep=2.5pt,fill=black,rounded corners=2pt] (l10) {$\textcolor{white}{\ell_1,\{0\},\;\bullet\;}$};
\draw(2.5,-1.7) node [rectangle,draw,inner sep=2.5pt,fill=black,rounded corners=2pt] (l11) {$\textcolor{white}{\ell_1,(0,1),\;\bullet\text{---}\;}$};
\draw(5,-1.7) node [rectangle,draw,inner sep=2.5pt,fill=black,rounded corners=2pt] (l12) {$\textcolor{white}{\ell_1,(0,1),\;\text{---}\bullet\;}$};
\draw(7.5,-1.7) node [rectangle,draw,inner sep=2.5pt,fill=black,rounded corners=2pt] (l13) {$\textcolor{white}{\ell_1,\{1\},\;\bullet\;}$};
\draw(9.5,-1.7) node [rectangle,draw,inner sep=2.5pt,fill=black,rounded corners=2pt] (l14) {$\textcolor{white}{\ell_1,\bot,\;\bot\;}$};
\draw(0,-3.4) node [rectangle,draw,inner sep=2.5pt,rounded corners=2pt] (l20) {$\ell_2,\{0\},\;\bullet\;$};
\draw(2.5,-3.4) node [rectangle,draw,inner sep=2.5pt,rounded corners=2pt] (l21) {$\ell_2,(0,1),\;\bullet\text{---}\;$};
\draw(5,-3.4) node [rectangle,draw,inner sep=2.5pt,rounded corners=2pt] (l22) {$\ell_2,(0,1),\;\text{---}\bullet\;$};
\draw(7.5,-3.4) node [rectangle,draw,inner sep=2.5pt,rounded corners=2pt] (l23) {$\ell_2,\{1\},\;\bullet\;$};
\draw(9.5,-3.4) node [rectangle,draw,inner sep=2.5pt,rounded corners=2pt] (l24) {$\ell_2,\bot,\;\bot\;$};

\draw[-latex'] (l04) .. controls +(120:1cm) and +(60:1cm) .. (l04) node[pos=.5,above]{$\varepsilon,0/1$};
\draw[-latex'] (l14) .. controls +(120:1cm) and +(60:1cm) .. (l14) node[pos=.5,above]{$\varepsilon,1/1$};
\draw[-latex'] (l24) .. controls +(120:1cm) and +(60:1cm) .. (l24) node[pos=.5,above]{$\varepsilon,0/1$};
\draw[-latex'] (l03) -- (l04) node[pos=.5,above]{$\varepsilon,0/0$};
\draw[-latex'] (l12) -- (l13) node[pos=.5,above]{$\varepsilon,0/0$};
\draw[-latex'] (l13) -- (l14) node[pos=.5,above]{$\varepsilon,0/0$};
\draw[-latex'] (l22) -- (l23) node[pos=.5,above]{$\varepsilon,0/0$};
\draw[-latex'] (l23) -- (l24) node[pos=.5,above]{$\varepsilon,0/0$};
\draw[-latex'] (init) -- (l00) node[pos=.5,above]{};
\draw[-latex'] (l00) -- (l01) node[pos=.5,above]{$\varepsilon,0/0$};
\draw[-latex'] (l01) -- (l02) node[pos=.5,above]{$\varepsilon,0/1$};
\draw[-latex'] (l02) -- (l03) node[pos=.5,above]{$\varepsilon,0/0$};
\draw[-latex'] (l10) -- (l11) node[pos=.5,above]{$\varepsilon,0/0$};
\draw[-latex'] (l11) -- (l12) node[pos=.5,above]{$\varepsilon,1/1$};
\draw[-latex'] (l20) -- (l21) node[pos=.5,above]{$\varepsilon,0/0$};
\draw[-latex'] (l21) -- (l22) node[pos=.5,above]{$\varepsilon,0/1$};
\draw[-latex'] (l21) -- (l11) node[pos=.25,right]{$a,0/0$};
\draw[-latex'] (l22) -- (l12) node[pos=.25,right]{$a,0/0$};
\draw[-latex'] (l03) .. controls +(215:2.5cm) and +(50:2cm) .. (l10) node[pos=.475,above]{$a,0/0$};
\draw[-latex'] (l11) .. controls +(220:1.5cm) and +(60:1.5cm) .. (l20) node[pos=.5,above]{$a,0/0$};
\draw[-latex'] (l12) .. controls +(215:2cm) and +(50:1.5cm) .. (l20) node[pos=.25,above]{$a,0/0$};
\end{tikzpicture}\caption{The corner-point abstraction $\A_{cp}^F$ of
  $\A$ represented Fig.~\ref{fig:exTA}.}\label{fig:excorner}
\end{center}
\end{figure}
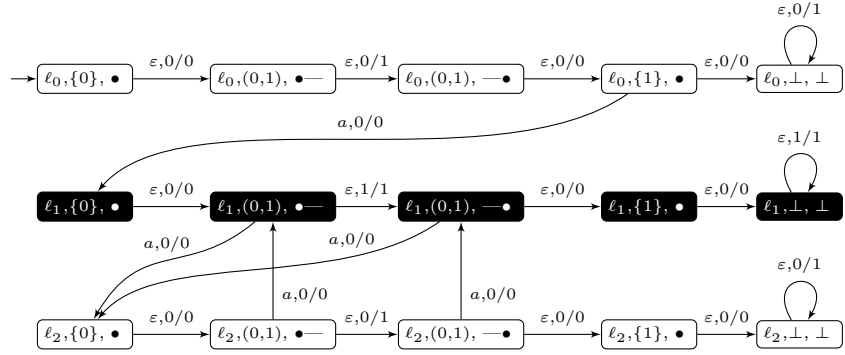
%\end{itemize}

There will be a correspondence between runs in $\A$ and runs in
$\A_{cp}$. As time is increasing in $\A$ we forbid runs in $\A_{cp}$
where two actions have to be made in $0$-delay (this is easy to do as
there should be no sequence $\dots \xrightarrow{\sigma}
(\ell,R,\alpha) \xrightarrow{\sigma'} \dots$, where both $\sigma$ and
$\sigma'$ are actions and $R$ is a punctual region).

% . A (finite or infinite) sequence of transitions
% $(\ell_0,R_0,\alpha_0) \xrightarrow{\sigma_0} (\ell_1,R_1,\alpha_1)
% \xrightarrow{\sigma_1} \dots$ (and thus a run) in $\A_{cp}$ is
% \emph{feasible} whenever $\sigma_i \in \Sigma$ and $R_{i+1}$ is
% punctual imply $\sigma_{i+1} = \varepsilon$. \pat{expliquer mieux ?}
% We will only be interested in feasible sequences of transitions,
% which is an easy property to check.

Given $\pi$ a run in $\A_{cp}^F$ the \textit{ratio} of $\pi$, denoted
$\Rat{\pi}$, is defined, provided it exists, as the $\limsup$ of the
ratio of accumulated costs divided by accumulated rewards for finite
prefixes. Run $\pi$ is said \emph{reward-converging}
(resp. \emph{reward-diverging}) if the accumulated reward along $\pi$
is bounded (resp. unbounded). Reward-converging runs in $\A_{cp}^F$
are meant to capture Zeno behaviours of $\A$.

\medskip Given $\rho$ a run in $\A$ we denote by
$\mathsf{Proj}_{cp}(\rho)$ the set of all runs in $\A^F_{cp}$
compatible with $\rho$ in the following sense.
% Formally,
We assume $\rho = (\ell_0,v_0) \xrightarrow{\tau_0,a_0} (\ell_1,v_1)
\xrightarrow{\tau_1,a_1} \cdots$, where move $(\ell_i,v_i)
\xrightarrow{\tau_i,a_i} (\ell_{i+1},v_{i+1})$ comes from an edge
$e_i$.  A run\footnote{For simplicity, we omit here the transitions
  labels} $\pi = (\ell_0,R_0^1,\alpha_0^1) \rightarrow
(\ell_0,R_0^2,\alpha_0^2) \rightarrow \cdots \rightarrow
(\ell_0,R_0^{k_0},\alpha_0^{k_0}) \rightarrow
(\ell_1,R_1^1,\alpha_1^1) \rightarrow \cdots \rightarrow
(\ell_1,R_1^{k_1},\alpha_1^{k_1}) \cdots$ of $\A^F_{cp}$ is in
$\mathsf{Proj}_{cp}(\rho)$ if for all indices $n \geq 0$:
% With $\rho = (\ell_0,v_0) \xrightarrow{\tau_0,a_0} (\ell_1,v_1)
% \xrightarrow{\tau_1,a_1} \cdots$ (assuming move $(\ell_i,v_i)
% \xrightarrow{\tau_i,a_i} (\ell_{i+1},v_{i+1})$ comes from an edge
% $e_i$) we associate the set of all runs\footnote{For simplicity, we
%   omit here the transitions labels} $\pi = (\ell_0,R_0^1,\alpha_0^1)
% \rightarrow (\ell_0,R_0^2,\alpha_0^2) \rightarrow \cdots \rightarrow
% (\ell_0,R_0^{k_0},\alpha_0^{k_0}) \rightarrow
% (\ell_1,R_1^1,\alpha_1^1) \rightarrow \cdots \rightarrow
% (\ell_1,R_1^{k_1},\alpha_1^{k_1}) \cdots $ in $\A^F_{cp}$ such that
% for all index $n \geq 0$:
\begin{itemize}
\item for all $i \leq k_n$, $\alpha_n^i$ is a corner-point of $R_n^i$,
\item for all $i \leq k_n-1$, $(R_n^{i+1},\alpha_n^{i+1}) =
  (R_n^i,\alpha_n^i) + 1$,
  % is the direct time successor of $(R_n^i,\alpha_n^i)$,
\item $(R_{n+1}^{1},\alpha_{n+1}^1)$ is the successor pointed-region
  of $(R^{k_n}_n,\alpha^{k_n}_{n})$ by transition $e_n$ (that is
  $(R_{n+1}^{1},\alpha_{n+1}^1) = (\{0\}, \bullet)$ if $e_n$ resets
  the clock $x$ and otherwise $(R_{n+1}^{1},\alpha_{n+1}^1)
  =(R^{k_n}_n,\alpha^{k_n}_{n})$),
\item $v_n \in R_n^1$ and if $R_n^{k_n} \neq \bot$, $v_n + \tau_n \in
  R_n^{k_n}$,
\item if $R_n^{k_n} = \bot$, the sum $\mu_n$ of the rewards since
  region $\{0\}$ has been visited for the last time has to be equal to
  $\lfloor v_n + \tau_n \rfloor$ or $\lceil v_n + \tau_n
  \rceil$.\footnote{Roughly, in the unbounded region $\bot$, the
    number of times an idling transition is taken should reflect how
    `big' the delay $\tau_n$ is.} Note that $\mu_n$ can be seen as the
  abstraction of the valuation $v_n$.
\end{itemize}
% Note that all runs in $\mathsf{Proj}_{cp}(\rho)$ are feasible.

\begin{remark}
  As defined above, the size of $\A_{cp}^F$ is exponential in the size
  of $\A$ because the number of regions is $2M$ (which is exponential
  in the binary encoding of $M$).  We could actually take a rougher
  version of the regions~\cite{LMS-concur04}, where only constants
  appearing in $\A$ should take part in the region partition. This
  partition, specific to single-clock timed automata is only
  polynomial in the size of $\A$.
  % This rougher version is only polynomial-size in the size of $\A$
  % and is correct in single-clock timed automata.
  We choose to simplify the presentation by considering the standard
  unit intervals.
\end{remark}

We will now see that the corner-point abstraction is a useful tool to
deduce properties of the set of frequencies of runs in the original
timed automata.

\subsection{From $\A$ to $\A_{cp}^F$, and \emph{vice-versa}}
\label{subsec:AotoAcp}
We first show that given a run $\rho$ of $\A$, there exists a run in
$\mathsf{Proj}_{cp}(\rho)$, whose ratio is smaller (resp. larger) than
the frequency of $\rho$. % Such two runs of $\A_{cp}^F$ can be
% effectively built from $\rho$, through the so-called
% \emph{contraction} (resp. \emph{dilatation}) operations. Intuitively
% it consists in minimizing (resp. maximizing) the time elapsed in
% $F$-locations. \fbox{suppression ref a l'annexe pour def formelle, ok???}
\newcounter{linkAAcp}
 \setcounter{linkAAcp}{\value{theorem}}
\begin{lemma}[From $\A$ to $\A_{cp}^F$]
  \label{prop:linkAAcp}
  For every run $\rho$ in $\A$, there exist $\pi$ and $\pi'$ in
  $\A_{cp}^F$ that can effectively be built and belong to
  $\mathsf{Proj}_{cp}(\rho)$ such that:
  % satisfy $\pi, \pi' \in \mathsf{Proj}_{cp}(\rho)$ and:
  \[
  \Rat{\pi} \leq \freq{\rho} \leq \Rat{\pi'}.
  \]
  % Moreover
  Run $\pi$ (resp. $\pi'$) minimizes (resp. maximizes) the ratio among
  runs in $\mathsf{Proj}_{cp}(\rho)$.
\end{lemma}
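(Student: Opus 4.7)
The plan is to construct $\pi'$ and $\pi$ in $\mathsf{Proj}_{cp}(\rho)$ explicitly by a greedy selection of corner-points along $\rho$, and then to verify the bounds by comparing cumulative costs and rewards in $\A_{cp}^F$ against cumulative times in $\A$.

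I first examine the local freedom in the projection. For a delay $\tau_i$ taking the clock from $v_i$ to $v_i + \tau_i$ in location $\ell_i$, once the start-corner $\alpha_i^1$ has been fixed by the matching constraint (either $(\{0\},\bullet)$ after a reset, or the end-corner of step $i-1$), the chain of idling transitions is determined except for the choice of the end-corner $\alpha_i^{k_i}$ in the region containing $v_i + \tau_i$. An elementary count shows that the resulting local reward $r_i$ ranges over at most three consecutive integers straddling $\tau_i$, so that both values $\ge \tau_i$ and values $\le \tau_i$ are reachable, provided the start-corner of the chain is the ``right'' one.

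To build $\pi'$, I apply the following rule at every junction between consecutive delays: pick the late corner ($\text{--}\bullet$) whenever a delay in $F$ is followed by a delay not in $F$, and the early corner ($\bullet\text{--}$) whenever a delay not in $F$ is followed by a delay in $F$. Between two delays of the same acceptance status the choice does not matter for the ratio, since the total reward over any maximal \emph{homogeneous} sub-block is essentially rigid. The rule is globally consistent, because at an accepting-to-non-accepting junction both delays prefer the late corner, and dually at the opposite type of junction. The resulting $\pi'$ satisfies, for every prefix of length $n$:
\[
\sum_{i\le n,\,\ell_i\in F} r_i \;\ge\; \sum_{i\le n,\,\ell_i\in F} \tau_i, \qquad \sum_{i\le n,\,\ell_i\notin F} r_i \;\le\; \sum_{i\le n,\,\ell_i\notin F} \tau_i.
\]
Denoting the left-hand sums by $A$ and $B-A$ and the right-hand ones by $a$ and $b-a$, we have $A \ge a$, $B-A \le b-a$ and $0 \le a \le b$, which elementarily gives $A/B \ge a/b$. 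Passing to the $\limsup$ yields $\Rat{\pi'} \ge \freq{\rho}$. The run $\pi$ is built by inverting the preferences at each junction, and $\Rat{\pi} \le \freq{\rho}$ follows dually.

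For the extremality claims, any other $\hat\pi \in \mathsf{Proj}_{cp}(\rho)$ differs from $\pi'$ (resp.~$\pi$) by a sequence of local corner swaps, each of which merely reassigns one unit of reward between two adjacent steps without changing totals; by the same elementary inequality, such a swap can only decrease (resp.~increase) the partial ratios, and hence the $\limsup$. The main obstacle is checking that the greedy choices are globally compatible in spite of the coupling imposed by non-resetting actions between consecutive delays; this is handled by the block analysis sketched above, which reduces the compatibility check to the inter-block junctions. A secondary technical point is that $\Rat{\pi'}$ takes the $\limsup$ over all prefixes of the corner-point run (including intermediate idling transitions within a single delay); the bounded perturbation they introduce does not affect the $\limsup$ in the reward-diverging case, and the reward-converging (Zeno) case is handled similarly by direct bounds on the finite total cost and reward.
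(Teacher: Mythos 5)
Your overall architecture matches the paper's: build a specific $\pi'$ (resp.\ $\pi$) in $\mathsf{Proj}_{cp}(\rho)$ by a greedy choice of corners that lets transitions fire late when leaving $F$ and early when leaving $\bar F$ (the paper calls these the \emph{$F$-dilatation} and \emph{$F$-contraction}), prove the prefix-wise domination
$C_n \ge c_n$ and $R_n - C_n \le r_n - c_n$ (your two displayed sums, which are exactly the paper's Lemma~A), and conclude by the elementary algebraic manipulation you give, which is verbatim the paper's. So the \emph{plan} is the right one.

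The gap is in your justification of the prefix-wise inequalities via the ``block analysis''. Two concrete problems. First, the junction rule is not always applicable: if a $\bar F$-delay both inherits the late corner $\text{--}\bullet$ from the preceding $F\to\bar F$ junction and stays in the same region, its end corner is \emph{forced} to remain $\text{--}\bullet$, so the next $\bar F\to F$ junction cannot receive the early corner your rule asks for. This propagation of a ``stuck'' corner is precisely what the paper's Cases~4.2 and 5.2 are about. Second, once a block is forced to start at the wrong boundary corner, the \emph{per-block} inequality you appeal to simply fails. Take an $F$-block whose start corner is the forced late corner: with start value $u\in(k,k+1)$ and end value $v$, the dilatation cost over that block is $\lceil v\rceil-\lceil u\rceil$, which can be strictly less than $v-u$ (e.g.\ $u=0.3$, $v=1.8$ gives cost $1<1.5$). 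The cumulative prefix-wise inequality still holds, but only because the surplus accrued in the \emph{previous} $F$-block (where the boundary was fresh) compensates the deficit; establishing that compensation requires the paper's global backtracking argument to the last non-forced corner, and a maximal homogeneous block is the wrong window for it — the correct window spans $F$- and $\bar F$-delays indiscriminately, up to the last fresh corner. Your block decomposition, keyed to $F$-membership alone, does not capture this, so the step ``the resulting $\pi'$ satisfies, for every prefix of length $n$, [the two inequalities]'' is not actually proved. Everything downstream of that claim (the algebra, the limit, the extremality via corner swaps) is fine once the claim is in hand, but the claim is the heart of the lemma.
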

% \begin{lemma}[From $\A$ to $\A_{cp}^F$]
%   \label{prop:linkAAcp}
%   For every run $\rho$ in $\A$, its contraction $\pi$ and dilatation
%   $\pi'$ in $\A_{cp}^F$ can effectively be built, they belong to
%   $\mathsf{Proj}_{cp}(\rho)$ and satisfy:
%   % satisfy $\pi, \pi' \in \mathsf{Proj}_{cp}(\rho)$ and:
%   \[
%   \Rat{\pi} \leq \freq{\rho} \leq \Rat{\pi'}.
%   \]
%   % Moreover
%   Run $\pi$ (resp. $\pi'$) minimizes (resp. maximizes) the ratio among
%   runs in $\mathsf{Proj}_{cp}(\rho)$.
% %moreover, $\pi$ and $\pi'$ can be chosen in $\mathsf{Proj}_{cp}(\rho)$.
% \end{lemma}
Such two runs of $\A_{cp}^F$ can be effectively built from $\rho$,
through the so-called \emph{contraction} (resp. \emph{dilatation})
operations. Intuitively it consists in minimizing (resp. maximizing)
the time elapsed in $F$-locations.

Note that the notion of contraction cannot be adapted to the case of
timed automata with several clocks, as illustrated by the timed
automaton in Fig.~\ref{fic:cex2clockslem1}. Consider indeed the run
alternating delays $(\frac{1}{2}+\frac{1}{n})$ and
$1-(\frac{1}{2}+\frac{1}{n})$ for $n \in \mathbb{N}$, and switching
between the left-most cycle ($\ell_1-\ell_2-\ell_1$) and the
right-most cycle ($\ell_3-\ell_4-\ell_3$) following the rules: % for
% $k\in \mathbb{N}^*$,
%\pat{ai un peu modifie la phrase} 
in round $k$, take $2^{2k}$ times the cycle $\ell_1-\ell_2-\ell_1$,
then switch to $\ell_3$ and take $2^{2k+1}$ times the cycle
$\ell_3-\ell_4-\ell_3$ and return back to $\ell_1$ and continue with
round $k+1$. This run cannot have any contraction since its frequency
is $\frac{1}{2}$, whereas all its projections in the corner-point
abstraction have ratio $\frac{2}{3}$, the $\limsup$ of a
non-converging sequence. This strange behavior is due to the fact that
the delays in $\ell_1$ and $\ell_3$ need to be smaller and smaller,
and this converging phenomenon requires at least two clocks.
% Note that the notion of contraction cannot be adapted to the case of
% timed automata with several clocks, as illustrated by the timed
% automaton in Fig.~\ref{fic:cex2clockslem1}. Consider indeed the
% execution alternating delays $(\frac{1}{2}+\frac{1}{n})$ and
% $1-(\frac{1}{2}+\frac{1}{n})$ for $n \in \mathbb{N}$, which for $k\in
% \mathbb{N}^*$ takes $2^{2k}$ times the cycle $\ell_1-\ell_2-\ell_1$,
% then goes from $\ell_1$ to $\ell_3$ and takes $2^{2k+1}$ times the
% cycle $\ell_3-\ell_4-\ell_3$ and returns to $\ell_1$. This execution
% cannot have any contraction since its frequency is $\frac{1}{2}$,
% whereas all its projections in the corner point abstraction have ratio
% $\frac{2}{3}$, the $\limsup$ of a non-converging sequence. This
% strange behavior is due to the fact that the delays in $\ell_1$ and
% $\ell_3$ need to be smaller and smaller, and this converging
% phenomenom can only be obtained with two clocks or more.
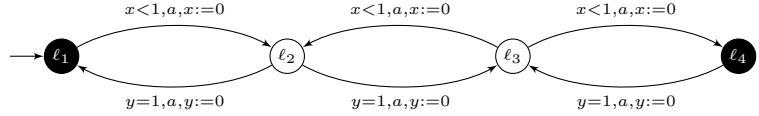
\begin{figure}[ht!]
\begin{center}
\begin{tikzpicture}
\everymath{\scriptstyle}
\draw(2.2,0) node (init) {};
\draw(3,0) node [circle,draw,inner sep=1.5pt,fill=black] (l1) {$\textcolor{white}{\ell_1}$};
\draw(6,0) node [circle,draw,inner sep=1.5pt] (l2) {$\ell_2$};
\draw(12,0) node [circle,draw,inner sep=1.5pt,fill=black] (l4) {$\textcolor{white}{\ell_4}$};
\draw(9,0) node [circle,draw,inner sep=1.5pt] (l3) {$\ell_3$};

\draw[-latex'] (init) -- (l1) node[pos=.5,above]{};
\draw[-latex'] (l1) .. controls +(30:1cm) and +(150:1cm) .. (l2) node[pos=.5,above]{$x<1,a,x:=0$};
\draw[-latex'] (l2) .. controls +(210:1cm) and +(330:1cm) .. (l1) node[pos=.5,below]{$y=1,a,y:=0$};
\draw[-latex'] (l3) .. controls +(150:1cm) and +(30:1cm) .. (l2) node[pos=.5,above]{$x<1,a,x:=0$};
\draw[-latex'] (l2) .. controls +(330:1cm) and +(210:1cm) .. (l3) node[pos=.5,below]{$y=1,a,y:=0$};
\draw[-latex'] (l3) .. controls +(30:1cm) and +(150:1cm) .. (l4) node[pos=.5,above]{$x<1,a,x:=0$};
\draw[-latex'] (l4) .. controls +(210:1cm) and +(330:1cm) .. (l3) node[pos=.5,below]{$y=1,a,y:=0$};
\end{tikzpicture}\caption{A counterexample with two clocks for
  Lemma~\ref{prop:linkAAcp}.}\label{fic:cex2clockslem1}
\end{center}
\end{figure}

We now want to know when and how runs in $\A_{cp}^F$ can be lifted to
$\A$. To that aim we distinguish between reward-diverging and
reward-converging runs.

\newcounter{linkAcpA1}
 \setcounter{linkAcpA1}{\value{theorem}}

\begin{lemma}[From $\A_{cp}^F$ to $\A$, reward-diverging case]
  \label{prop:linkAcpA1}
  For every reward-diverging 
  % feasible\pat{a mentionner dans les preuves}
  run $\pi$ in $\A_{cp}^F$, there exists a non-Zeno run $\rho$ in $\A$
  such that $\pi\in \mathsf{Proj}_{cp}(\rho)$ and $\freq{\rho} =
  \Rat{\pi}$.
\end{lemma}
\begin{proof}[Sketch]
  The key ingredient is that given a reward-diverging run $\pi$ in
  $\A_{cp}^F$, for every $\varepsilon >0$, one can build a non-Zeno
  run $\rho_\varepsilon$ of $\A$ with the following strong property:
  for all $n \in \mathbb{N}$, the valuation of the $n$-th state along
  $\rho_\epsilon$ is $\frac{\epsilon}{2^n}$-close to the abstract
  valuation in the corresponding state in $\pi$. The accumulated
  reward along $\pi$ diverges, hence $\freq{\rho_{\varepsilon}}$ is
  equal to $\Rat{\pi}$.
\end{proof}
%\thomaslong{New sketch, ai commente tout ce qui precedait}
  % The key ingredient %to prove this lemma
  % is that given a reward-diverging execution $\pi$ in $\A_{cp}^F$, for
  % every $\varepsilon >0$, one can build a non-Zeno execution
  % $\rho_\varepsilon$ of $\A$ such that, for all $n \in \mathbb{N}$,
  % $|\pi[n] - \rho_\epsilon[n]| \leq \frac{\epsilon}{2^n}$. \fbox{TB:
  %   Is this notation defined?}\amelie{je ne pense pas...c'est ce que je voulais Žviter :-)}
  %    The frequency of $\rho_\epsilon$
  % \fbox{NB: manque un indice $\varepsilon$, non?} \fbox{TB:Je crois
  %   aussi, OK maintenant?}  is thus equal to the ratio of $\pi$ since
  % the accumulated reward along $\pi$ diverges.
  % % This lemma is proved detailing why it is possible to construct
  % % $\rho$ an execution of $\A$ mimicking $\pi$ with more precision at
  % % each discrete transition. The frequency of $\rho$ is thus equal to
  % % the ration of $\pi$ thanks to the divergence of the reward of $\pi$.
  % \fbox{TB: ai change le sketch mais je suis peut etre trop
  %   technique??? NB: ok pour moi}  
  %   \fbox{AS:ok, mais les notations sont ˆ dŽfinies?}

The restriction to single-clock timed automata is crucial in
Lemma~\ref{prop:linkAcpA1}. Indeed, consider the two-clocks timed
automaton depicted in Fig.~\ref{fig:countlemnZ}.
%, where $F = \{\ell_0\}$.
In its corner-point abstraction there exists a reward-diverging run
$\pi$ with $\Rat{\pi} =0$, however every run $\rho$ satisfies
$\freq{\rho} >0$. %\thomas{TB: Add details in appendix.}
\begin{figure}
  \begin{center}
    \subfigure[A counterexample with two clocks.]{
      \begin{tikzpicture}
        \everymath{\scriptstyle}
        \draw(-3.8,0) node (init) {};
        \draw(-3,0) node [circle,draw,inner sep=1.5pt] (l2) {$\ell_0$};
        \draw(-.5,0) node [circle,draw,inner sep=1.5pt,fill=black] (l0) {$\textcolor{white}{\ell_1}$};
        \draw(1.5,0) node [circle,draw,inner sep=1.5pt] (l1) {$\ell_2$};
        \draw[-latex'] (init) -- (l2) node[pos=.5,above]{};
        \draw[-latex'] (l2) -- (l0) node[pos=.5,above]{$0<x<1,a,y:=0$};
        \draw[-latex'] (l0) .. controls +(30:1cm) and +(150:1cm)
        .. (l1) node[pos=.5,above]{$x>1,a,x:=0$}; 
        \draw[-latex'] (l1) .. controls +(210:1cm) and +(330:1cm)
        .. (l0) node[pos=.5,below]{$y=1,a,y:=0$};
      \end{tikzpicture}
      \label{fig:countlemnZ}}
    \subfigure[Zeno case.]{
        \begin{tikzpicture}
          \everymath{\scriptstyle} 
          \draw(-2.8,0) node (init) {}; 
                   \draw(-2,0) node [circle,draw,inner sep=1.5pt,fill=black] (l-1)
          {$\textcolor{white}{\ell_0}$}; 
          \draw(0,0) node [circle,draw,inner sep=1.5pt] (l0)
          {$\ell_1$}; 
\draw(2,0) node [circle,draw,inner sep=1.5pt,fill=black] (l1) {$\textcolor{white}{\ell_2}$};

\draw[-latex'] (init) -- (l-1) node[pos=.5,above]{};
\draw[-latex'] (l0) -- (l1) node[pos=.5,above]{$x=1,a,x:=0$};
\draw[-latex'] (l-1) -- (l0) node[pos=.5,above]{$x=1,a,x:=0$};
\draw[-latex'] (l1) .. controls +(120:1cm) and +(60:1cm) .. (l1) node[pos=.5,above]{$x>0,a,x:=0$};
\path (0,-.7) node {};
        \end{tikzpicture}
        \label{fig:cexlemZ}}
    \end{center}
\caption{Counterexamples to extensions of Lemma~\ref{prop:linkAcpA1}.}
  \end{figure}
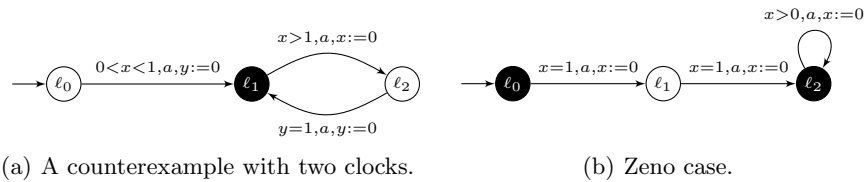

\newcounter{linkAcpA2}
 \setcounter{linkAcpA2}{\value{theorem}}

\begin{lemma}[From $\A_{cp}^F$ to $\A$, reward-converging case]
  \label{prop:linkAcpA2}
  For every reward-converging 
  % feasible
  run $\pi$ in $\A_{cp}^F$, if $\Rat{\pi}>0$, then for every
  $\varepsilon >0$, there exists a Zeno run $\rho_\varepsilon$ in $\A$
  such that $\pi \in\mathsf{Proj}_{cp}(\rho_\varepsilon)$ and
  $|\freq{\rho_\varepsilon} - \Rat{\pi}| < \varepsilon$.
\end{lemma} 
%For any Zeno execution $\pi$ in $\A_{cp}^F$, for all $\varepsilon
%>0$,
%\begin{itemize}
%\item if $\Rat{\pi}>0$, then there exists $\rho_\varepsilon$ in $\A$
%  with $|\freq{\rho_\varepsilon,F} - \Rat{\pi}| < \varepsilon$, and
%\item if $\Rat{\pi} =0$
%\begin{itemize}
%\item either for all $\rho$ with $\pi \in \mathsf{Proj}_{cp}(\rho)$,
%  $\freq{\rho,F} =0$,
%\item or for all $\rho$ with $\pi \in \mathsf{Proj}_{cp}(\rho)$,
%  $\freq{\rho,F} =1$,
%\item or for all $\rho$ with $\pi \in \mathsf{Proj}_{cp}(\rho)$,
%  $\freq{\rho,F} \in (0,1)$.
%\end{itemize}
%\end{itemize}
\begin{proof}[Sketch]
  A construction similar to the one used in the proof of
  Lemma~\ref{prop:linkAcpA1} is performed. % \thomas{Isn't it enough?}
%   \thomaslong{I would say that the following sentences are not
%     necessary if we keep the following example...} 
%     \amelielong{Comme vous voulez, c'etait pour clarifier pourquoi la preuve du cas non-Zeno ne marchait pas ici...}
  Note however that the result is slightly weaker, since in the
  reward-converging
  % Zeno
  case, one cannot neglect imprecisions (even the smallest) forced
  \emph{e.g.}, by the prohibition of the zero delays.
  % This result is proved using the same construction as the proof of
  % the Lemma~\ref{prop:linkAcpA1}.  Note that the result is less strong
  % as the Lemma~\ref{prop:linkAcpA1}, indeed the convergence of the
  % reward in $\pi$ does not allow to neglige the small imprecisions
  % forced, for example, by the interdiction of the zero delays.
\end{proof}
Note that Lemma~\ref{prop:linkAcpA2} does not hold in case $\Rat{\pi}
=0$, where we can only derive that the set of frequencies of runs
$\rho$ such that $\pi \in \mathsf{Proj}_{cp}(\rho)$ is either $\{0\}$
or $\{1\}$ or included in $(0,1)$.  Also an equivalent to
Lemma~\ref{prop:linkAcpA1} for Zeno runs (even in the single-clock
case!) is hopeless. The timed automaton $\A$ depicted in
Fig.~\ref{fig:cexlemZ}, where $F = \{\ell_0,\ell_2\}$ is a
counterexample. Indeed, in $\A_{cp}^F$ there is a reward-converging
run $\pi$ with $\Rat{\pi}=\frac{1}{2}$, whereas all Zeno runs in $\A$ have
frequency larger than $\frac{1}{2}$.%\amelie{exemple modifie, larger than = "$>$"?}
% \begin{figure}
% \begin{center}
% \begin{tikzpicture}
%  \everymath{\scriptstyle}
% \draw(-.8,0) node (init) {};
% \draw(0,0) node [circle,draw,inner sep=1.5pt] (l0) {$\ell_0$};
% \draw(3,0) node [circle,draw,inner sep=1.5pt,fill=gray] (l1) {$\ell_1$};

% \draw[-latex'] (init) -- (l0) node[pos=.5,above]{};
% \draw[-latex'] (l0) -- (l1) node[pos=.5,above]{$x=1,x:=0$};
% \draw[-latex'] (l1) .. controls +(30:25pt) and +(330:25pt) .. (l1) node[pos=.5,right]{$x>0,x:=0$};
% \end{tikzpicture}\caption{A counterexample in the Zeno case for
%   Lemma~\ref{prop:linkAcpA1}.\textcolor{red}{(in a subfigure of
%     Fig.~\ref{fig:countlemnZ}?)}}\label{fig:cexlemZ}
% \end{center}
% \end{figure}

\subsection{Set of frequencies of runs in $\A$}
\label{subsec:set}
We use the strong relation between frequencies in $\A$ and ratios in
$\A_{cp}^F$ proven in the previous subsection to establish key
properties of the set of frequencies. % of runs in $\A$.
% In this section, the synthesis of our knowledge about the set of
% frequencies in a timed automaton is made with a theorem using the
% results of the Section~\ref{subsec:AotoAcp}.
\newcounter{setfreq}
 \setcounter{setfreq}{\value{theorem}}
\begin{theorem}
  \label{th:setfreq}
  Let $\mathcal{F}_{\A} = \{\freq{\rho} \mid \rho\ \text{run of}\
  \A\}$ be the set of frequencies of runs in $\A$. We can compute
  $\inf \mathcal{F}_{\A}$ and $\sup \mathcal{F}_{\A}$. Moreover we can
  decide whether these bounds are reached or not. Everything can be
  done in NLOGSPACE.
  % polynomial time. \pat{complexite a verifier et affiner}
\end{theorem}
% \begin{theorem}
% \label{th:setfreq}
% Let $\A$ be a timed automaton and $F$ a set of locations. The infimum
% and supremum of the frequencies of $F$ for executions in $\A$ can be
% computed. Moreover, whether these bounds are reached is decidable.
% \end{theorem}
The above theorem is based on the two following lemmas dealing
respectively with the set of non-Zeno and Zeno runs in $\A$.
\newcounter{non-Zeno}
 \setcounter{non-Zeno}{\value{theorem}}
\begin{lemma}[non-Zeno case]\label{lm:non-Zeno}
  Let 
  % $\A$ be a timed automaton, $\A_{cp}$ its corner-point abstraction
  % and
  $\{C_1,\cdots,C_k\}$ be the set of reachable
  % feasible\pat{ok ?}  
  SCCs of $\A_{cp}^F$. The set of frequencies of non-Zeno runs of $\A$
  is then $\cup_{1\le i \le k} [m_i,M_i]$ where $m_i$ (resp. $M_i$) is
  the minimal (resp. maximal) ratio for a reward-diverging 
  % feasible
  cycle in $C_i$.
\end{lemma}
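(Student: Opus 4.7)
The plan is to establish the equality as a double inclusion between the set $\mathcal{F}^{\mathrm{nZ}}_{\A}$ of frequencies of non-Zeno runs and the union $\bigcup_i [m_i, M_i]$, using the bridge provided by Lemmas \ref{prop:linkAAcp} and \ref{prop:linkAcpA1}.

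For the inclusion $\bigcup_i [m_i, M_i] \subseteq \mathcal{F}^{\mathrm{nZ}}_{\A}$, I would fix an SCC $C_i$ and a target value $r \in [m_i, M_i]$. Inside $C_i$ there exist, by definition, two reward-diverging cycles $\gamma_{\min}$ and $\gamma_{\max}$ realizing the extremal ratios $m_i$ and $M_i$. Applying the classical interpolation argument for ratio objectives in finite doubly-weighted graphs, I would build an infinite reward-diverging run $\pi$ staying in $C_i$ by concatenating suitable blocks of $\gamma_{\min}$ and $\gamma_{\max}$ whose lengths are chosen so that the $\limsup$ of partial cost-over-reward ratios equals $r$; this yields $\Rat{\pi} = r$. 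Lemma \ref{prop:linkAcpA1} then lifts $\pi$ to a non-Zeno run $\rho$ of $\A$ with $\freq{\rho} = \Rat{\pi} = r$, so $r \in \mathcal{F}^{\mathrm{nZ}}_{\A}$.

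For the inclusion $\mathcal{F}^{\mathrm{nZ}}_{\A} \subseteq \bigcup_i [m_i, M_i]$, I would pick an arbitrary non-Zeno run $\rho$ and apply Lemma \ref{prop:linkAAcp} to obtain projections $\pi, \pi' \in \mathsf{Proj}_{cp}(\rho)$ realizing the minimal and maximal ratios among projections of $\rho$, with $\Rat{\pi} \leq \freq{\rho} \leq \Rat{\pi'}$. Because $\rho$ is non-Zeno, the total delay $\sum_i \tau_i$ diverges; since the reward on idling transitions of $\A_{cp}^F$ mirrors the unit-time progress of $\rho$, both $\pi$ and $\pi'$ are reward-diverging. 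Any reward-diverging run in the finite graph $\A_{cp}^F$ must eventually remain inside a reachable SCC, and standard arguments on ratios in finite weighted graphs confine its ratio to the corresponding interval $[m_i, M_i]$.

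The main obstacle will be to promote the two-sided bound $\freq{\rho} \in [\Rat{\pi}, \Rat{\pi'}]$ to membership in a \emph{single} interval $[m_j, M_j]$ rather than merely being sandwiched between values taken from two possibly different intervals. To handle this, I would observe that the sequence of location-region pairs visited by $\rho$ eventually stays in a single SCC of the region graph (by finiteness and a classical argument on states visited infinitely often), and that the extremal projections $\pi$ and $\pi'$ obtained via the contraction and dilatation operations can be arranged to live in the same SCC $C_j$ of $\A_{cp}^F$ above this underlying region SCC, by choosing corner-points consistent with the time evolution of $\rho$. The sandwich inequality then yields $\freq{\rho} \in [\Rat{\pi}, \Rat{\pi'}] \subseteq [m_j, M_j]$, concluding the proof.
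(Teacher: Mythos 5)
Your overall structure (double inclusion, lift from $\A_{cp}^F$ via Lemma~\ref{prop:linkAcpA1}, sandwich via Lemma~\ref{prop:linkAAcp}) matches the paper's proof, and your ``put contraction and dilatation in the same SCC'' resolution of the single-interval issue is also the paper's key observation. However, there is a genuine gap in your reverse inclusion: you assert that, because $\rho$ is non-Zeno, both the contraction $\pi$ and the dilatation $\pi'$ are reward-diverging, on the grounds that ``the reward on idling transitions mirrors the unit-time progress of $\rho$.'' This is not true. Reward~$1$ is only collected on the \emph{interior} idling edge of an open region, $\bigl(\ell,(i,i{+}1),\bullet\text{--}\bigr)\to\bigl(\ell,(i,i{+}1),\text{--}\bullet\bigr)$, and the contraction is precisely designed to \emph{skip} these edges whenever $\ell\in F$ (it fires discrete transitions as early as possible from $F$-locations). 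So if $\rho$ is non-Zeno but, say, spends all its time in $F$-locations resetting before every integer boundary, the contraction of $\rho$ can be reward-converging. The paper's proof explicitly splits into two cases and handles this: when the contraction (resp.\ dilatation) is reward-converging, it observes that $\freq{\rho}=1$ (resp.\ $0$), that the \emph{other} projection is then reward-diverging with ratio $1$ (resp.\ $0$), and hence that $1$ (resp.\ $0$) still lies in $\bigcup_i[m_i,M_i]$. Without this case split your argument does not go through.

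A secondary, smaller issue: your justification that $\pi$ and $\pi'$ ``can be arranged to live in the same SCC'' is stated but not argued. The paper's argument in the reward-diverging case is more concrete: for a non-Zeno $\rho$, either the clock is reset infinitely often (so both projections hit $(\ell,\{0\},\bullet)$ infinitely often for some common $\ell$) or the clock eventually stays in the unbounded region $\bot$ forever (so both projections eventually follow identical transitions among states $(\ell,\bot,\alpha_\bot)$). Both alternatives force the two projections into the same SCC. You should make this case distinction explicit rather than appealing to a vague ``can be arranged'' claim, since the contraction and dilatation are \emph{canonically} determined by $\rho$, not chosen freely.
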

\begin{proof}[Sketch]
  First, the set of ratios of reward-diverging 
  % feasible 
  runs in $\A_{cp}^F$ is exactly $\cup_{1\le i \le k} [m_i,M_i]$.
  Indeed, given two extremal 
  % feasible
  cycles $c_m$ and $c_M$ of ratios $m$ and $M$ in an SCC $C$ of
  $\A_{cp}^F$, we show that every ratio $m \le r \le M$ can be
  obtained as the ratio of a run ending in $C$ by combining in a
  proper manner $c_m$ and $c_M$.  Then, using
  Lemmas~\ref{prop:linkAAcp} and \ref{prop:linkAcpA1} we derive that
  the set of frequencies of non-Zeno runs in $\A$ coincides with
  the set of ratios of reward-diverging 
  % feasible 
  runs in $\A_{cp}^F$.
\end{proof}
% \begin{lemma}
%\label{lem:sccAcp-nZ}
%Let $C_i$ be a SCC of $\A_{cp}^F$. If $\mathcal{R}_i$ denotes the set
%of ratios of reward-diverging simple cycles in $C_i$, then the set of
%ratios of reward-diverging executions of $\A_{cp}^F$ ending in $C$ is
%the interval $[m_i,M_i]$, where $m_i=\min(\mathcal{R}_i)$ and $M_i
%=\max(\mathcal{R}_i)$.
%%   $[m,M]$ where $m= \min(\mathcal{F}_C)$ and $M=
%%   \max(\mathcal{F}_C)$.
%\end{lemma}
\newcounter{Zeno}
 \setcounter{Zeno}{\value{theorem}}
\begin{lemma}[Zeno case]
  \label{lm:Zeno}
  Given $\pi$ a reward-converging 
  % feasible
  run in $\A_{cp}^F$, it is decidable whether there exists a Zeno run
  $\rho$ such that $\pi$ is the contraction of $\rho$ and
  $\freq{\rho} = \Rat{\pi}$.
\end{lemma}
\begin{proof}[Sketch]
  Observe that every fragment of $\pi$ between reset transitions can
  be considered independently, since compensations cannot occur in
  Zeno runs: even the smallest deviation (such as a delay
  $\varepsilon$ in $\A$ instead of a cost $0$ in $\pi$) will introduce
  a difference between the ratio and the frequency. A careful
  inspection of cases %(see Appendix for details)
  allows one to establish the result stated in the lemma.
 % This lemma is proved by considering
%   independently each fragments between resets.  Indeed, the Zenoness
%   of the executions ensures that no errors will be neglected
%   \fbox{TB:Not sure to understant the previous sentence}. The
%   minimization of the frequency of $\rho$ has therefore to be locally
%   optimal.  The proof is then a technical study of the possible cases.
\end{proof}

Using Lemmas~\ref{lm:non-Zeno} and \ref{lm:Zeno}, let us briefly
explain how we derive Theorem~\ref{th:setfreq}.
%(the details are in the appendix).
%\begin{proof}[Sketch of the proof of the Theorem~\ref{th:setfreq}]
For each 
% feasible
SCC $C$ of the corner-point abstraction $\A_{cp}^F$, the bounds of the
set of frequencies of runs whose contraction ends up in $C$ can be
computed thanks to the above lemmas. We can also furthermore decide
whether these bounds can be obtained by a real run in $\A$. The result
for the global automaton follows. % \pat{ai change le texte, verifier
%   que c'est ok}
% % are studied thanks to the above lemmas. 
% The global bounds are respectively the minimum and the maximum of the
% bounds of SCC of $\A_{cp}^F$. Moreover, they are realized for an
% execution in $\A$ if they are realized in at least one SCC of
% $\A_{cp}^F$.

\begin{remark}
  The link between $\A$ and $\A_{cp}^F$ differs in several aspects
  from~\cite{BBL-fmsd08}. First, a result similar to
  Lemma~\ref{prop:linkAAcp} was proven, but the runs $\pi$ and $\pi'$
  were not in $\mathsf{Proj}_{cp}(\rho)$, and more importantly it
  heavily relied on the reward-diverging hypothesis. Then the
  counter-part of Theorem~\ref{th:setfreq} was weaker
  in~\cite{BBL-fmsd08} as there was no way to decide whether the
  bounds were reachable or not. 
  % \pat{que veut-on dire de plus sur~\cite{BBL-fmsd08}~?} \thomas{Ca
  %   me semble suffisant}
\end{remark}

% \fbox{NB : comparaison avec BBL08 ici ? sinon en conclu}

%\fbox{TB: Ai legerement racourci, version orginiale commentee}
  % Assuming the above lemmas, the theorem is proved considering each
  % SCC of the corner-point abstraction $\A_{cp}^F$. For each SCC $C$,
  % the bounds of the set of frequencies of the execution whose
  % contraction ends in $C$ are studied thanks to both lemmas. The
  % global bounds are respectively the minimum and the maximum of the
  % bounds of SCC of $\A_{cp}^F$. Moreover, they are realized for an
  % execution in $\A$ if they are in at least one SCC of $\A_{cp}^F$.
%\end{proof}

%\input{sec-conseq}

\section{Emptiness and Universality Problems}
\label{section:univ}

\paragraph{The emptiness problem.}
In our context, the emptiness problem asks, given a timed automaton
$\A$ whether there is a 
% infinite
timed word which is accepted by $\A$ with positive frequency. We also
consider variants where we focus on
% its restriction to 
non-Zeno or Zeno timed words. As a consequence of
Theorem~\ref{th:setfreq}, we get the following
result. % \pat{v\'erifier la complexit\'e}
\begin{theorem}\label{th:empt}
  The emptiness  problem for infinite (resp. non-Zeno, Zeno) timed
  words 
  % with positive frequency 
  in single-clock timed automata is decidable. It is
  furthermore NLOGSPACE-Complete.
  % Given $\A$ a single-clock timed automaton, we can decide in ***
  % whether $\mathcal{L}^{>0}(\A)$ (resp. $\mathcal{L}^{>0}_{nZ}(\A)$,
  % $\mathcal{L}^{>0}_{Z}(\A)$)\pat{a verifier} is empty or not.
\end{theorem}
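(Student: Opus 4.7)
The plan is to read Theorem~\ref{th:empt} off the structural results of Section~\ref{subsec:set}. Emptiness is the question whether $\mathcal{F}_\A \cap (0,1] \neq \emptyset$, restricted to the relevant class of runs. For the non-Zeno case, Lemma~\ref{lm:non-Zeno} expresses the set of non-Zeno frequencies as a union $\bigcup_i [m_i,M_i]$ indexed by the reachable SCCs $C_i$ of $\A_{cp}^F$, so positive non-Zeno frequency is realized iff some $M_i>0$, which is equivalent to the existence of a reachable SCC of $\A_{cp}^F$ containing at least one $F$-idling transition (i.e.\ an idling transition out of an $F$-location, the only kind of transition carrying positive cost). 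For the Zeno case, Lemmas~\ref{prop:linkAAcp} and~\ref{prop:linkAcpA2} together show that a positive-frequency Zeno run in $\A$ exists iff $\A_{cp}^F$ admits a reward-converging run $\pi$ with $\Rat{\pi}>0$; such a $\pi$ uses only finitely many idling transitions (reward being bounded and each idling having reward $1$), at least one of them an $F$-idling, and then continues forever with discrete transitions only, so existence reduces to reachability of a cycle made of discrete transitions only. The infinite-word case is the disjunction of the two.

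For the NLOGSPACE upper bound I use the polynomial-size variant of the corner-point abstraction available in the single-clock setting (see the Remark following the definition of $\A_{cp}^F$). Each of the three emptiness questions becomes a conjunction of standard reachability tests in $\A_{cp}^F$: nondeterministically guess a reachable state of $\A_{cp}^F$ and an $F$-idling transition outgoing from it, then certify either that its target reaches back the source (non-Zeno case) or that its target reaches a cycle containing no idling transition (Zeno case). Each step is a logspace nondeterministic reachability query over states of logarithmic description, yielding an NLOGSPACE algorithm.

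The NLOGSPACE-hardness follows by a standard logspace reduction from directed graph $s$-$t$ reachability: given $(G,s,t)$, view $G$ as a one-clock, one-letter timed automaton with empty (hence trivially satisfied) guards, starting in $s$, and add a single $F$-location $f$, a transition from $t$ to $f$, and an $f$-self-loop, both with empty guards; then $t$ is reachable from $s$ in $G$ iff the positive-frequency language of the construction is nonempty (once $f$ has been reached, looping in $f$ with suitable delays drives the frequency arbitrarily close to $1$, in both the non-Zeno and the Zeno regimes). The main subtlety lies in the Zeno case: Lemma~\ref{lm:Zeno} only addresses \emph{exact} preservation of the ratio, but for emptiness it suffices to invoke the looser Lemma~\ref{prop:linkAcpA2} on reward-converging runs $\pi$ with $\Rat{\pi}>0$, thereby sidestepping the more delicate boundary case $\Rat{\pi}=0$.
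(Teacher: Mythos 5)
Your plan — characterize non-emptiness directly via the corner-point abstraction — is the same high-level route as the paper, which derives Theorem~\ref{th:empt} as a corollary of Theorem~\ref{th:setfreq} (compute $\sup$ of the relevant frequency set, test positivity). Your non-Zeno characterization and the NLOGSPACE-hardness reduction are correct. However, the Zeno case has two genuine problems.

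First, the equivalence ``a positive-frequency Zeno run exists iff $\A_{cp}^F$ admits a reward-converging run $\pi$ with $\Rat{\pi}>0$'' does \emph{not} follow from Lemmas~\ref{prop:linkAAcp} and~\ref{prop:linkAcpA2} as you claim. Lemma~\ref{prop:linkAcpA2} gives the ``if'' direction, but Lemma~\ref{prop:linkAAcp} only supplies the contraction and dilatation of a Zeno run $\rho$, and neither is guaranteed to be reward-converging. For instance, if $\rho$ alternates between an $F$- and a $\bar F$-location with resetting transitions and strictly decreasing positive delays $\tau_k$, the dilatation rounds each $F$-delay up to one idling with reward $1$, hence diverges, while the contraction rounds each $\bar F$-delay up, and also diverges. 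The claim is in fact true, but it requires a separate construction: take the reward-minimizing projection (bounded reward since the Zeno run has bounded duration, so forced within-region idlings are bounded), then insert a single extra $F$-idling at some step with $\ell_n\in F$ and $\tau_n>0$; you must then argue this is a legal element of $\mathsf{Proj}_{cp}(\rho)$.

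Second, and more seriously for the algorithm, your reduction to ``reachability of a cycle made of discrete transitions only'' is wrong because not every idling transition has reward $1$: only within-region idlings $(\ell,R,\bullet\text{--})\to(\ell,R,\text{--}\bullet)$ and $\bot$-self-loops do; the boundary-crossing idlings (e.g.\ from $(\ell,\{0\},\bullet)$ to $(\ell,(0,1),\bullet\text{--})$) have reward $0$. A reward-converging run thus ends in a cycle with no reward-$1$ transitions, which may well include reward-$0$ idlings. Concretely, take $\A$ with a single location $\ell\in F$ and self-loop $0<x<1,a,x:=0$: the Zeno language is non-empty (every Zeno run has frequency $1$), and the witnessing reward-converging run in $\A_{cp}^F$ ends in the cycle $(\ell,\{0\},\bullet)\to_\varepsilon(\ell,(0,1),\bullet\text{--})\to_a(\ell,\{0\},\bullet)$, which contains an idling; your test for a discrete-only cycle would incorrectly report emptiness. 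The correct NLOGSPACE test is reachability of a cost-$1$ transition followed by reachability of a cycle containing no positive-reward transition.
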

Note that the problem is open for timed automata with $2$ clocks or more.
\paragraph{The universality problem.}
We now focus on the universality problem, which asks, whether all timed
words are accepted with positive frequency in a given timed
automaton. We also consider variants thereof which distinguish between
Zeno and non-Zeno timed words. Note that these
% for timed automata with a positive-frequency acceptance condition,
% and also study its restriction to Zeno (resp. non-Zeno) timed
% words. Note that these
variants are incomparable: there are timed automata that, with
positive frequency,
% , under the positive-frequency acceptance condition,
recognize all Zeno timed words but not all non-Zeno timed words, and
\textit{vice-versa}.

\medskip A first obvious result concerns deterministic timed automata.
One can first check syntactically whether all infinite timed words can
be read (just locally check that the automaton is complete). Then we
notice that
% Moreover, for the restricted class of deterministic timed automata,
considering all timed words exactly amounts to considering all
runs. Thanks to Theorem~\ref{th:setfreq}, one can decide, in this
case, whether there is or not a run of frequency $0$. If not, the
automaton is universal, otherwise it is not universal.
\begin{theorem}\label{th:univdet}
  The universality problem for infinite (resp. non-Zeno, Zeno) timed
  words 
  % with positive frequency 
  in deterministic single-clock timed automata is decidable. It is
  furthermore NLOGSPACE-Complete.
%   Given a single-clock deterministic timed automaton $\A$, one can
%   decide whether $\mathcal{L}^{>0}(\A)$
%   (resp. $\mathcal{L}^{>0}_{nZ}(\A)$, $\mathcal{L}^{>0}_{Z}(\A)$) is
%   the set of all timed words.  Furthermore the problem is
%   NLOGSPACE-Complete.\nat{check also here}
  % The universality problem of frequency languages is
  % NLOGSPACE-Complete\nat{check also here} for single clock timed
  % automata.
\end{theorem}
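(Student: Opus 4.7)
The plan is to reduce universality to the (non-)existence of a run of frequency~$0$, and then invoke Theorem~\ref{th:setfreq}. Since $\A$ is deterministic, each timed word has \emph{at most} one run, so in order that the accepted positive-frequency language coincides with the set of all timed words, one first needs every timed word to be the trace of \emph{some} run, i.e.\ $\A$ must be \emph{complete}. For a single-clock automaton this is a purely local property: at every location $\ell$ and every letter $a \in \Sigma$, the guards of the $a$-labelled outgoing edges must cover $\mathbb{R}_+$. Since guards are one-dimensional intervals, this can be decided in NLOGSPACE (even in deterministic logspace) by scanning outgoing edges and comparing endpoints.

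Once completeness and determinism are secured, the set of accepted timed words is in bijection with the set of runs of $\A$. Hence $\A$ is universal (respectively non-Zeno-universal, Zeno-universal) for the positive-frequency semantics if and only if every run (resp.\ every non-Zeno run, every Zeno run) of $\A$ has strictly positive frequency. By Theorem~\ref{th:setfreq} one can compute $\inf \mathcal{F}_{\A}$ and decide whether it is attained: universality holds iff the infimum is strictly positive, or is $0$ but not reached. For the variants, we instantiate Theorem~\ref{th:setfreq} separately on the non-Zeno and Zeno components of $\mathcal{F}_{\A}$: the non-Zeno case uses the description of frequencies as $\bigcup_i [m_i,M_i]$ from Lemma~\ref{lm:non-Zeno}, while the Zeno case relies on the decidability result of Lemma~\ref{lm:Zeno} applied to each reward-converging behaviour of $\A_{cp}^F$.

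For the NLOGSPACE upper bound, observe that (in the coarser version of the corner-point abstraction mentioned after Fig.~\ref{fig:excorner}) $\A_{cp}^F$ has polynomial size, so computing extremal ratios over cycles, locating SCCs, and checking reward-(non)divergence can all be done by guessing short paths/cycles and accumulating partial sums in logarithmic space; combined with the NLOGSPACE completeness check, the whole procedure stays in NLOGSPACE. The matching NLOGSPACE-hardness is obtained by a straightforward reduction from (non-)reachability in a directed graph: encode the graph as a single-clock timed automaton with no clock constraints and a suitable assignment of $F$, so that non-universality of the positive-frequency language reduces to reachability of a designated sink; closure of NLOGSPACE under complement (Immerman--Szelepcs\'enyi) takes care of the orientation.

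The main obstacle in carrying out this plan is the Zeno variant: Theorem~\ref{th:setfreq} hides a delicate case analysis coming from Lemma~\ref{lm:Zeno} and the weakness of Lemma~\ref{prop:linkAcpA2} when $\Rat{\pi}=0$, where the realisable frequencies can only be said to lie in $\{0\}$, $\{1\}$, or $(0,1)$. One must therefore verify, SCC by SCC of $\A_{cp}^F$, that a reward-converging candidate witness of frequency $0$ actually lifts to a Zeno run of $\A$ of frequency $0$, rather than to a Zeno run of strictly positive frequency; this is the only non-routine step, and it is the reason we appeal to Theorem~\ref{th:setfreq} rather than to a direct argument on runs of $\A$.
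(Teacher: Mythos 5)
Your proof takes essentially the same route as the paper's (which is stated only as a short remark preceding the theorem): check completeness locally, observe that for deterministic complete automata the set of all runs coincides with the set of all timed words, and then invoke Theorem~\ref{th:setfreq} to decide whether some run has frequency~$0$; the Zeno/non-Zeno variants follow by restricting to the corresponding subsets of runs. Your additional elaboration on the NLOGSPACE upper bound and on hardness via graph reachability is consistent with the paper's claim, which the paper itself only asserts without detail, and one tiny slip (``the set of \emph{accepted} timed words is in bijection with the set of runs'' should read ``the set of \emph{all} timed words'') does not affect the argument since the very next sentence phrases the criterion correctly.
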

\begin{remark}
Note that results similar to Theorems~\ref{th:empt} and~\ref{th:univdet} hold when considering languages defined with a threshold $\lambda$ on the frequency.
  %Note that a similar result holds when considering languages defined
 % with a threshold $\lambda$ on the frequency.
  % , for instance $\mathcal{L}^{>\lambda}$, $\mathcal{L}^{\ge
  %   \lambda}$, $\mathcal{L}^{<\lambda}$, and $\mathcal{L}^{\le
  %   \lambda}$. \pat{ok ?}
\end{remark}

If we relax the determinism assumption this becomes much harder!
% We now relax the determinism assumption and consider the general case.
\newcounter{ucomp}
 \setcounter{ucomp}{\value{theorem}}
\begin{theorem}\label{ucomp:infinite}
  % \nat{je couperais bien ce th en plusieurs pour mettre en valeur
  %   les r\'esultats}
  The universality problem for infinite (resp. non-Zeno, Zeno) timed
  words 
  % with positive frequency 
  in a single-clock timed automaton is non-primitive recursive. If two
  clocks are allowed, this problem is undecidable.
\end{theorem}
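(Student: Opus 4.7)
The plan is to prove both hardness results by reductions from classical sources, with the two statements treated separately but guided by a common idea: build a timed automaton in which every timed word with a ``bad'' behaviour (encoding a halting computation, in the two-clock case, or witnessing a violation of Büchi universality, in the one-clock case) has a unique run of frequency $0$, whereas every other timed word admits a run of positive frequency. Universality for positive frequency then coincides with the absence of a ``bad'' word, which is the negation of the source problem.

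For the undecidable case with two clocks, I would reduce from the halting problem of deterministic $2$-counter Minsky machines. Given such a machine $M$, the construction follows the usual Alur--Dill style encoding of configurations of $M$ as timed words: the values of the two counters are represented by distances between consecutive event time-stamps, and two clocks are used to compare successive configurations (one to measure the counter value in the current configuration, the other to check the increment/decrement relation with the next). The nondeterministic automaton $\mathcal{A}_M$ guesses, for each input timed word, some local defect in the encoding (wrong counter update, wrong instruction, etc.) and routes the run into a sink cycle with frequency $1$. A faithful encoding of a halting computation, on the contrary, forces every run into a cycle outside $F$ and yields frequency $0$. Hence $\mathcal{A}_M$ is positive-frequency universal iff $M$ does not halt.

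For the non-primitive recursive lower bound with a single clock, I would reduce from the universality problem for single-clock Büchi timed automata, which is known to be decidable but non-primitive recursive (Ouaknine--Worrell). Given such a Büchi automaton $\mathcal{B}$, the goal is to build a single-clock automaton $\mathcal{A}_{\mathcal{B}}$ whose positive-frequency language equals the Büchi language of $\mathcal{B}$. The idea is to \emph{calibrate} stays in $\mathcal{B}$: replace each accepting location by a small gadget imposing a uniformly bounded-below delay, and each non-accepting location by a gadget imposing a uniformly bounded-above delay before any action can be taken. A run visiting $F$ infinitely often then accumulates a linear amount of time in $F$ against a linear amount of time outside and thus has positive frequency (this is exactly the kind of reward-diverging behaviour lifted from the corner-point abstraction via Lemma~\ref{prop:linkAcpA1}), while a run visiting $F$ only finitely often ends up with frequency $0$. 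The delicate point, and the main obstacle, is to implement these gadgets while preserving the single-clock restriction: the clock in $\mathcal{B}$ is already in use to enforce $\mathcal{B}$'s guards, and the calibration must synchronize with the existing clock resets of $\mathcal{B}$ without introducing a second clock. I would handle this by inserting the gadgets only at transitions of $\mathcal{B}$ that reset the clock, and by propagating the calibration through a duplication of states that records whether the original clock value is ``small'' or ``large'' relative to the calibration interval.

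Finally, the Zeno and non-Zeno variants are handled by two local modifications of the above constructions: for the non-Zeno case, one equips the gadgets with a guard enforcing a minimum positive delay at each step; for the Zeno case, one adds a tight self-loop allowing arbitrarily fast repetitions within a bounded time horizon. Both modifications preserve correctness of the reductions: a halting computation (respectively, a word violating Büchi universality) can still be encoded as a Zeno or non-Zeno word, and the positive-frequency bookkeeping carries over unchanged.
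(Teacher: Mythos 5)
Your approach diverges substantially from the paper's, and the divergence exposes real problems.

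The paper proves both hardness results with a \emph{single} uniform reduction, from the universality problem for \emph{finite} timed words: this is undecidable with two clocks (Alur--Dill) and non-primitive recursive with one clock (Ouaknine--Worrell). Given an automaton $\A$ over $\Sigma$ accepting finite words, one builds $\B$ over $\Sigma\cup\{c\}$ by adding (i) a fresh accepting sink reachable on $c$ from every accepting location of $\A$, which then loops on $\Sigma\cup\{c\}$, and (ii) a separate accepting sink from the initial state looping on $\Sigma$ only, to absorb all $c$-free words. Then $\A$ is (finite-word) universal iff $\B$ is positive-frequency universal: any word containing a $c$ is accepted by $\B$ iff the prefix before its first $c$ reaches an accepting state of $\A$, and the infinite tail can be chosen Zeno or non-Zeno at will, so all three variants follow at once. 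This construction does not touch the clock structure, so the single-clock constraint is preserved trivially.

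Your proposal replaces this with two heavyweight, separate reductions, and the one-clock branch is where the gap lies. First, the reduction's source problem is wrong: Büchi universality for one-clock timed automata is \emph{not} known to be decidable-but-non-primitive-recursive; the Ouaknine--Worrell NPR result concerns \emph{finite} words, while the Büchi case is undecidable in general (and, as a side remark in the paper, decidable only for Zeno words). So your reduction would either rest on a false premise or, if actually carried out, prove more than the theorem claims, contradicting the paper's own decidability result for Zeno positive-frequency universality. Second, even setting that aside, the ``calibration gadget'' approach requires enforcing uniform lower/upper bounds on dwell times while respecting the guards of the original automaton with a single shared clock; you flag this yourself as ``the main obstacle'' and the proposed patch (insert gadgets only at resetting transitions, duplicate states to track whether the clock is ``small'' or ``large'') is not worked out and does not obviously handle locations between resets. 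The Minsky-machine reduction for the two-clock case would work, but it re-derives Alur--Dill undecidability from scratch rather than using it as a black box, which is exactly what the paper's simpler reduction avoids.
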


\begin{wrapfigure}{r}{4cm}
%\begin{figure}[ht]
  \centering
    \begin{tikzpicture}
      \everymath{\scriptstyle}
      \draw(1.3,.3) node (init) {$\textstyle{\A}$};
      \draw(-.6,0) node (init') {};
      \draw(1,0) node (init'') {};
      \draw(.8,-1) node (init''c) {};
      \draw(.75,0) node [ellipse,draw,thin,minimum height=1.2cm,minimum width=3.5cm] (A) {};
      \draw(.2,0) node [circle,draw,inner sep=3pt,double] (l1) {};
      \draw(1.4,-1) node [circle,draw,inner sep=3pt,fill=red,red] (l0c) {};
      \draw(0.2,-1) node [circle,draw,inner sep=3pt,red,fill=red] (l1c) {};
      
      \draw[-latex',dashed] (init') -- (l1) node[pos=.5,above]{};
      \draw[-latex',red] (init''c) -- (l0c) node[pos=.5,above]{};
      \draw[-latex',dashed] (l1) -- (init'') node[pos=.5,above]{};
      \draw[-latex',red] (l1) -- (l1c) node[pos=.3,left]{$c$};
      \draw[-latex',red] (l1c) .. controls +(150:1cm) and +(-150:1cm) .. (l1c) node[pos=.5,left]{$\Sigma \cup \{c\}$};
      \draw[-latex',red] (l0c) .. controls +(30:1cm) and +(-30:1cm) .. (l0c) node[pos=.5,right]{$\Sigma$};
    \end{tikzpicture}
\caption{}
%    \caption{Reduction for the proof of Theorem~\ref{ucomp:infinite}.}
\label{fig:redinfinite}
\end{wrapfigure}

\noindent \textit{Proof (Sketch).} % \begin{proof}[Sketch]
% We first sketch the hardness proof (in case of one-clock automata)
% and the undecidability proof (in the general case).
The proof is done by reduction to the universality problem for finite
words in timed automata (which is known to be undecidable for timed
automata with two clocks or more~\cite{AD-tcs94} and non-primitive
recursive for one-clock timed automata~\cite{OW05}).
Given a timed automaton $\A$ that accepts finite timed words, we
construct a timed automaton $\B$ with an extra letter $c$ which will
be interpreted with positive frequency.
% accept infinite timed words with positive frequency
% by adding an extra letter to the alphabet,
%\amelie{je ne comprends pas cette phrase :-/} 
% say $c$, 
% read this letter f
From all accepting locations of $\A$, we allow $\B$ to read $c$ and
then accept everything (with positive frequency). The construction is
illustrated on Fig.~\ref{fig:redinfinite}.
  % The idea is that if a timed automaton is universal for finite
  % words, then the way to have it universal for infinite words is to
  % add one letter (which is some kind of marker) and to read that
  % extra letter precisely when we reach a final location. Then after
  % having read the marker, everything should be accepted.
  It is easy to check that $\A$ is universal over $\Sigma$ iff $\B$ is
  universal over $\Sigma \cup \{c\}$.
  % \pat{ai commente les details}
  \hfill $\Box$

% if $w = (t_0,a_0)(t_1,a_1) \dots (t_n,a_n)$
%   is a finite timed word, then 
%   \begin{eqnarray*}
%     w\ \text{is accepted by}\ \A & \text{iff} & \forall t_n<t_{n+1}<t_{n+2} \dots\ \forall \alpha_{n+2} \alpha_{n+3} \dots \in (\Sigma \cup\{c\})^\omega,\\
%     & & w \cdot (t_{n+1},c) (t_{n+2},\alpha_{n+2}) (t_{n+3},\alpha_{n+3}) \dots\ \text{is accepted by}\ \B\ 
%   \end{eqnarray*}
%   By construction, $\B$ accepts all timed words on alphabet $\Sigma$. Thus,
%   \begin{eqnarray*}
%     \A\ \text{is universal over}\ \Sigma & \text{iff} & \B\ \text{is universal over}\ \Sigma \cup \{c\}
%   \end{eqnarray*}
% \hfill $\Box$
%\end{proof}

\begin{theorem}
  \label{theo:univ-zeno-decidable}
  The universality problem for Zeno timed words with positive
  frequency in a one-clock timed automaton is decidable.
\end{theorem}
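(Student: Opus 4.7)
The first step is to exploit a drastic simplification available for Zeno runs. Since the semantics forces each delay $\tau_i>0$, and for a Zeno run the total time $T=\sum_{i\ge 0}\tau_i$ converges to a finite positive value, the partial sums $\sum_{i\le n,\;\ell_i\in F}\tau_i$ form a non-decreasing bounded sequence converging to some $S\in[0,T]$. Hence
\[
\freq{\rho}=\limsup_{n\to\infty}\frac{\sum_{i\le n,\;\ell_i\in F}\tau_i}{\sum_{i\le n}\tau_i}=\frac{S}{T},
\]
and because each $\tau_i$ is strictly positive, $S>0$ if and only if $\{i : \ell_i\in F\}\neq\emptyset$. So for a Zeno run, positive frequency is equivalent to visiting $F$ at least once.

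With this characterisation, Zeno universality with positive frequency becomes a reachability-style universality question: every Zeno timed word must admit some run of $\A$ that visits $F$ at some finite step. The plan is to build an auxiliary one-clock timed automaton $\B$ that tracks whether $F$ has been visited; its locations are $L\times\{0,1\}$ (with $L\times\{1\}$ absorbing in the flag and serving as the B\"uchi set), and its transitions mirror those of $\A$ while flipping the flag on entry into $F$. Then the infinite timed language of $\B$ is exactly the set of words having an $F$-visiting run in $\A$, and the theorem reduces to deciding whether every Zeno word lies in $L(\B)$.

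To decide this Zeno-restricted universality, I would adapt the well-quasi-order technique of Ouaknine and Worrell for one-clock timed automata. A counter-example is a Zeno word $w$ witnessing that every $\B$-run on $w$ stays in $L\times\{0\}$; abstractly, this corresponds to an infinite safe path in a subset-construction-like abstraction of $\B$ whose accumulated time remains bounded. I would search for such a witness by a fixed-point computation over abstract subset configurations layered on top of the region (or corner-point) abstraction. The Zeno restriction bounds the clock value and forces eventual stabilisation of the region component, which is the structural ingredient needed for a well-quasi-order on configurations to guarantee termination of the fixed point. Zenoness of the abstract path can be tested at the corner-point level via the reward-convergence criterion studied in Section~\ref{subsec:AotoAcp}, by appeal to Lemmas~\ref{prop:linkAAcp} and \ref{prop:linkAcpA2}.

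The main obstacle is precisely this termination argument. The ordinary subset construction on a one-clock timed automaton is not finite-state, and a naive region-based quotient does not suffice, as already witnessed by the non-primitive-recursive lower bound of Theorem~\ref{ucomp:infinite}. The technical heart of the proof is to devise the right well-quasi-order on subset configurations, compatible with the Zeno/bounded-time constraint, and to establish simultaneously (i) that it soundly captures the existence of a Zeno counter-example and (ii) that the induced fixed point terminates. The remaining pieces---constructing $\B$, and detecting reward-convergence in its abstraction---are comparatively routine.
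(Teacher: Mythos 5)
Your first two steps match the paper exactly: for a Zeno run every $\tau_i>0$ forces positive frequency to coincide with ``visit $F$ at least once,'' and the paper likewise builds a two-copy automaton $\B$ (with a white/black tag playing the role of your flag) so that the question becomes Zeno-universality of $\B$ with a reachability-flavoured B\"uchi condition. From there, however, you leave the ``technical heart'' — the termination argument — unresolved, and the direction you speculate about (a new well-quasi-order tailored to Zeno/bounded-time subset configurations, with Zenoness detected via reward-convergence in the corner-point abstraction of Lemmas~\ref{prop:linkAAcp} and~\ref{prop:linkAcpA2}) is not what makes the proof go through. That would leave you looking for an infinite Zeno path inside an infinite-state subset abstraction, and it is not clear that a custom WQO exists that simultaneously controls both the subset structure and the accumulated time.

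The paper's resolution is a phase decomposition that you do not arrive at. It first proves a \emph{uniform stabilization} lemma: for a fixed Zeno word $w$ there is a single bound $n_0$ after which \emph{every} run on $w$ keeps its clock in a fixed unit interval (or at $0$). This splits the problem into (a) a finite prefix, handled by the standard Ouaknine--Worrell abstraction $\mathsf{Abst}_\B$ with its usual well-structured-transition-system order — no new WQO is designed; and (b) an infinite tail, which after stabilization behaves like a \emph{finite} B\"uchi automaton $\B_f$ over locations, tags, and the surviving integer part of the clock, so that universality of $\B_f^{Q_0}$ is decidable in PSPACE by classical means. The set $\mathcal{Z}$ of ``good'' state sets $Q_0$ (those for which $\B_f^{Q_0}$ is universal) is upward-closed, so non-universality of $\B$ on Zeno words reduces to reaching, in $\mathsf{Abst}_\B$, a configuration whose state set lies outside $\mathcal{Z}$ — a reachability question in a WSTS against an upward-closed target, hence decidable. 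The corner-point abstraction and reward-convergence play no role in this argument; invoking them here is a misdirection. In short, your reduction and framework choice are right, but the missing ingredient is precisely the prefix/tail factorization via uniform stabilization plus the upward-closedness of $\mathcal{Z}$, which replaces the speculative Zeno-aware WQO you flagged as an open obstacle.
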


\begin{proof}[Sketch]
  This decidability result is rather involved and requires some
  technical developments for which there is no room here.
%  All details are given in Appendix~\ref{app:univ}.
  It is based on the idea that for a Zeno timed word to be accepted
  with positive frequency it is (necessary and) sufficient to visit an
  accepting location once. Furthermore the sequence of timestamps
  associated with a Zeno timed word is converging, and we can prove
  that from some point on, in the automaton, all guards will be
  trivially either verified or denied: for instance if the value of
  the clock is $1.4$ after having read a prefix of the word, and if
  the word then converges in no more than $0.3$ time units,
  % we know that after having read a prefix of the word the value of
  % the clock is $1.4$ and that the word converges in no more than
  % $0.3$ time units,
  then only the constraint $1<x<2$ will be satisfied while reading the
  suffix of the word, unless the clock is reset, in which case only
  the constraint $0<x<1$ will be satisfied. Hence the algorithm is
  composed of two phases: first we read the prefix of the word (and we
  use a now standard abstract transition system to do so,
  see~\cite{OW05}), and then for the tail of the Zeno words, the
  behaviour of the automaton can be reduced to that of a finite
  automaton (using the above argument on tails of Zeno words).
  % It uses an abstraction into a well-structured transition which
  % records an abstraction of, inspired by~\cite{}. \pat{mettre la
  %   bonne ref}
  % 
  % \patlong{bon, je commence par finir la preuve complete dans
  %   npr.tex et je reviens en faire un resume ici ensuite}
\end{proof}

%
%\pat{je trouve que la table n'apporte pas grand chose, je la commente
%  donc}
%
% Our results are summarized in the following table:
% \begin{center}
%   \begin{tabular}{|l||c|c|c|}
%     \hline
%     Universality~& $\mathcal{L}^{>0}_{\infty}$ & $\mathcal{L}^{>0}_{nZ}$ & $\mathcal{L}^{>0}_{Z}$\\
%     \hline
%     \hline
%     ~one-clock~ & ~NPR~ & ~NPR~ & ~Decidable \& NPR~ \\
%     \hline
%     ~several clocks~ & ~undecidable~ & ~undecidable~ & ~undecidable~ \\
%     \hline
%   \end{tabular}
% \end{center}

%\input{sec-relWork}

\section{Conclusion}\label{section:conclusion}

In this paper we introduced a notion of (positive-)frequency
acceptance for timed automata and studied the related emptiness and
universality problems. This semantics is not comparable to the
classical B\"uchi semantics. For deterministic single-clock timed
automata, emptiness and universality are decidable by investigating
the set of possible frequencies based on the corner-point
abstraction. For (non-deterministic) single-clock timed automata, the
universality problem restricted to Zeno timed words is decidable but
non-primitive recursive.
% for nondeterministic single-clock timed automata, as in the
%classical setting. 
The restriction to single-clock timed automata is justified on the one
hand by the undecidability of the universality problem in the general
case. On the other hand, the techniques we employ to study the set of
possible frequencies do not extend to timed automata with several
clocks. A remaining open question is the decidability status of the
universality problem for non-Zeno timed words, which is only known to
be non-primitive recursive.  Further investigations include a deeper
study of frequencies in timed automata with multiple clocks, and also
the extension of this work to languages accepted with some frequency
larger than a given threshold. 
%\pat{relire conclusion}

\putbib[ICALP11-2]
\end{bibunit}

%\nocite{*}
% \bibliographystyle{abbrv}
% \bibliography{ICALP11-2}
\begin{bibunit}[myalpha]
\newpage

\appendix
\newgeometry{textwidth=14cm}

\section*{Technical appendix}

\noindent In this appendix, we present all proofs omitted in the core
of the paper.

\subsection*{Proofs for Section~\ref{subsec:AotoAcp}}
\paragraph{Definition of $F$-dilatation.} Let $\A$ be a timed
automaton, $F \subseteq L$ a set of locations, and $\rho = (\ell_0,0)
\xrightarrow{\tau_0,a_0} (\ell_1,v_1) \cdots$ an initial run in
$\A$. We note $e_0, e_1, \cdots$ the edges fired along $\rho$. We
define the $F$-dilatation (or simply dilatation) of $\rho$ as the
run $\pi = (\ell_0,R_0^1=\{0\},\alpha_0^1=\bullet) \rightarrow
(\ell_0,R_0^2,\alpha_0^2) \rightarrow \cdots \rightarrow
(\ell_0,R_0^{k_0},\alpha_0^{k_0}) \rightarrow
(\ell_1,R_1^1,\alpha_1^1) \rightarrow \cdots \rightarrow
(\ell_1,R_1^{k_1},\alpha_1^{k_1}) \cdots \in \mathsf{Proj}_{cp}(\rho)$
in $\A_{cp}^F$ defined inductively as follows.
%\begin{itemize}
% \item $\pi$ starts in $(\ell_0,R^1_0,\alpha^1_{0})$ with $R^1_0 = \{0\}$
% the region of $v_0$ which is the zero valuation and $\alpha^1_{0}$ the single corner-point of $\{0\}$.
%\item 
Assume $n$ transitions of $\rho$ are reflected in $\pi$: $\pi$ starts
with $(\ell_0,R_0^1,\alpha_0^1) \rightarrow \cdots \rightarrow
(\ell_0,R_0^{k_0},\alpha_0^{k_0}) \rightarrow
(\ell_1,R_1^1,\alpha_1^1) \rightarrow \cdots \rightarrow
(\ell_n,R_n^{1},\alpha_n^{1})$ with $v_n \in R^1_n$ and $\alpha^1_{n}$
corner-point of $R^1_n$.
  \begin{itemize}
  \item if $v_n + \tau_n \le M$:
\begin{itemize}
\item if $v_n + \tau_n \in R^1_n = (c,c+1)$, $\alpha^1_{n} = \bullet$--
  and $\ell_n \in F$, then we let time elapse as much as possible and
  choose in $\A_{cp}^F$ the portion of path
  $(\ell_n,R^1_n,\bullet\text{--}) \rightarrow
  (\ell_n,R^1_n,\text{--}\bullet) \rightarrow
  (\ell_{n+1},R^1_{n+1},\alpha^1_{n+1})$ where
  $(R^1_{n+1},\alpha^1_{n+1})$ is the successor pointed region of
  $(R^1_n,\text{--}\bullet)$ by transition $e_n$.
\item if $v_n + \tau_n \in R^1_n = (c,c+1)$, $\alpha^1_{n}= \bullet$--
  and $\ell_n \notin F$, we choose to fire $e_n$ as soon as possible
  by selecting the following portion of path:
  $(\ell_n,R^1_n,\bullet\text{--}) \rightarrow
  (\ell_{n+1},R^1_{n+1},\alpha^1_{n+1})$ where
  $(R^1_{n+1},\alpha^1_{n+1})$ is the successor pointed region of
  $(R^1_n,\bullet\text{--})$ by transition $e_n$.
\item if $v_n + \tau_n \in R^1_n =(c,c+1)$ and
  $\alpha^1_{n}=\text{--}\bullet$ is the last corner of $R^1_n$ (that
  is the second one), we need to immediately fire $e_n$ in $\A^F_{cp}$
  and thus choose $(\ell_n,R^1_n,\text{--}\bullet) \rightarrow
  (\ell_{n+1},R^1_{n+1},\alpha^1_{n+1})$ where
  $(R^1_{n+1},\alpha^1_{n+1})$ is the successor pointed region of
  $(R^1_n,\text{--}\bullet)$ by transition $e_n$.
\item if $v_n +\tau_n \notin R^1_n$ and $\ell_n \notin F$, we fire $e_n$ as
  soon as possible, that is, we let time elapse until region $R^{k_n}_n$ with
  $v_n + \tau_n \in R^{k_n}_n$ and its first corner-point $\alpha^{k_n}_n$, and then fire
  $e_n$: $(\ell_n,R^1_n,\alpha^1_{n}) \rightarrow \cdots \rightarrow
  (\ell_n,R^{k_n}_n,\alpha^{k_n}_n) \rightarrow (\ell_{n+1},R^1_{n+1},\alpha^1_{n+1})$
  where $(R^1_{n+1},\alpha^1_{n+1})$ is the successor pointed region of
  $(R^{k_n}_n,\alpha^{k_n}_n)$ by transition $e_n$.
\item if $v_n + \tau_n \notin R^1_n$ and $\ell_n \in F$, we fire $e_n$ as
  late as possible, that is, we let time elapse until region $R^{k_n}_n$ with
  $v_n + \tau_n \in R^{k_n}_n$ and its last corner-point $\alpha^{k_n}_n$, and then
  fire $e_n$: $(\ell_n,R^1_n,\alpha^1_{n}) \rightarrow \cdots
  \rightarrow (\ell_n,R^{k_n}_n,\alpha^{k_n}_n) \rightarrow
  (\ell_{n+1},R^1_{n+1},\alpha^1_{n+1})$ where
  $(R^1_{n+1},\alpha^1_{n+1})$ is the successor pointed region of $(R^{k_n}_n,\alpha^{k_n}_n)$
  by transition $e_n$.
  \end{itemize}
 \item if $v_n + \tau_n > M$:
 \begin{itemize}
 \item if $R^1_n \neq \bot$, we let time elapse until region $\bot$
   and add a delay to respect the definition of the projection in
   $\A_{cp}^F$ which depends on $\ell_n$ and then fire $e_n$:
   $(\ell_n,R^1_n,\alpha^1_{n}) \rightarrow \cdots \rightarrow
   (\ell_n,R^{i}_n,\alpha^{i}_n) \rightarrow (\ell_{n},\bot,\bot)\big(
   \rightarrow (\ell_{n},\bot,\bot)\big)^{\nu_n} \rightarrow
   (\ell_{n+1},R^1_{n+1},\alpha^1_{n+1})$ where $\nu_n =
   \left\{\begin{array}{ll} \lceil v_n + \tau_n \rceil - M & \mbox{ if
       }\ell_n\in F\\ \lfloor v_n + \tau_n \rfloor - M& \mbox{ if
       }\ell_n\notin F \end{array}\right.$ and
   $(R^1_{n+1},\alpha^1_{n+1})$ is the successor pointed region of
   $(\bot,\bot)$ by transition $e_n$.
 \item if $R^1_n = \bot$, respecting the definition of the projection
   give two possible delays, our choice depends on $\ell_n$, then we
   fire $e_n$: $(\ell_n,\bot,\bot) \big( \rightarrow
   (\ell_{n},\bot,\bot)\big)^{\nu_n} \rightarrow
   (\ell_{n+1},R^1_{n+1},\alpha^1_{n+1})$ where $\nu_n =
   \left\{\begin{array}{ll} \lceil v_n + \tau_n \rceil - \nu_{n-1} &
       \mbox{ if }\ell_n\in F\\ \lfloor v_n + \tau_n \rfloor - \nu_{n-1}&
       \mbox{ if }\ell_n\notin F \end{array}\right.$ and
   $(R^1_{n+1},\alpha^1_{n+1})$ is the successor pointed region of
   $(\bot,\bot)$ by transition $e_n$.

 \end{itemize}
\end{itemize}
%\fbox{\textcolor{red}{expliquer l'idee + pk on a tout considere + remplacer c par 1?}}
%\end{itemize}
Similarly, we define the $F$-contraction of $\rho$ as its
$\bar{F}$-dilatation, \emph{i.e.} the run $\pi \in
\mathsf{Proj}_{cp}(\rho)$ of $\A_{cp}^F$ which fires transition $e_n$
as soon as possible when $\ell_n \in F$ and as late as possible when
$\ell_n \notin F$.

\noindent\rule{\linewidth}{.5pt}

\setcounter{theorem}{\value{linkAAcp}}
\begin{lemma}[From $\A$ to $\A_{cp}^F$]
  \label{prop:linkAAcp}
  For every run $\rho$ in $\A$, its contraction $\pi$ and dilatation
  $\pi'$ in $\A_{cp}^F$ can effectively be built, they are in
  $\mathsf{Proj}_{cp}(\rho)$ and they satisfy:
  % satisfy $\pi, \pi' \in \mathsf{Proj}_{cp}(\rho)$ and:
  \[
  \Rat{\pi} \leq \freq{\rho} \leq \Rat{\pi'}.
  \]
  % Moreover
  Run $\pi$ (resp. $\pi'$) minimizes (resp. maximizes) the ratio among
  runs in $\mathsf{Proj}_{cp}(\rho)$.
%moreover, $\pi$ and $\pi'$ can be chosen in $\mathsf{Proj}_{cp}(\rho)$.
\end{lemma}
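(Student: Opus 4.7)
My plan is to proceed in three steps, relying throughout on the elementary monotonicity of fractions: if $0\le A\le A'$ and $0\le B'\le B$ with positive denominators, then $\frac{A'}{A'+B'}\ge\frac{A}{A+B}$ (equivalently $A'B\ge AB'$). First, I would check that $\pi$ and $\pi'$ produced by the contraction and dilatation constructions are valid runs of $\A_{cp}^F$ and belong to $\mathsf{Proj}_{cp}(\rho)$. This is a routine case-by-case verification against the five defining clauses: the case distinctions of the construction (position of $v_n+\tau_n$ relative to $M$, starting corner-point of the fragment, and membership of $\ell_n$ in $F$) cover precisely the valid sub-paths, and the exponents $\nu_n$ in the $\bot$-region are tuned so that the accumulated reward equals the required $\lfloor v_n+\tau_n\rfloor$ or $\lceil v_n+\tau_n\rceil$.

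Second, I would establish the extremality $\Rat{\pi}\le\Rat{\tilde\pi}\le\Rat{\pi'}$ for every $\tilde\pi\in\mathsf{Proj}_{cp}(\rho)$. For each move $n$ of $\rho$, let $N_n(\tilde\pi)$ count the cost/reward-contributing idling transitions (the $\bullet\text{--}\to\text{--}\bullet$ steps within a single unit interval, and the self-loops in $\bot$) performed in the corresponding fragment. By inspection, the contraction forces $N_n(\pi)$ to its minimum when $\ell_n\in F$ and to its maximum when $\ell_n\notin F$, while the dilatation does the opposite. Summing over prefixes and applying the fraction inequality with $A'=\sum_{i\le n,\,\ell_i\in F}N_i(\pi')$, $B'=\sum_{i\le n,\,\ell_i\notin F}N_i(\pi')$, $A=\sum_{i\le n,\,\ell_i\in F}N_i(\tilde\pi)$, $B=\sum_{i\le n,\,\ell_i\notin F}N_i(\tilde\pi)$ yields $\Rat{\tilde\pi}\le\Rat{\pi'}$ after taking $\limsup$, and symmetrically for $\Rat{\pi}\le\Rat{\tilde\pi}$.

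Third, to locate $\freq{\rho}$ between these extremes, the key per-fragment geometric estimate is $N_n^{\min}\le\tau_n\le N_n^{\max}$: $N_n^{\max}$ counts the unit intervals partially swept by the delay $\tau_n$, each of length $1$, hence is at least $\tau_n$; symmetrically $N_n^{\min}\le\tau_n$. Substituting into the same cross-multiplied inequality with $\bigl(\sum_F N_n^{\max},\sum_{\bar F} N_n^{\min}\bigr)$ against $\bigl(\sum_F \tau_n,\sum_{\bar F} \tau_n\bigr)$ gives $\Rat{\pi'}\ge\freq{\rho}$ and dually $\Rat{\pi}\le\freq{\rho}$.

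The main obstacle I anticipate is the Zeno edge case in which the run eventually remains inside a single unit interval: then short fragments starting past the $\text{--}\bullet$ corner contribute $N_n^{\max}=0$ despite $\tau_n>0$, and the per-fragment bound breaks. The remedy is to aggregate fragments sharing a region: the initial fragments entering the region carry enough cost and reward so that the prefix-wise cross-multiplied inequality still holds, and once the run is trapped it contributes nothing on either side, so the ratio stabilizes and the bound propagates to the $\limsup$.
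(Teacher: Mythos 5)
Your overall plan is in the right spirit—contraction minimizes time in $F$, dilatation maximizes it, and one tries to sandwich $\freq\rho$ by cross-multiplication—but the per-fragment estimate in your third step is not correct, and this is exactly the hard part of the proof. Concretely, the bound $\tau_n\le N_n^{\max}$ fails whenever the fragment for move $n$ starts at a corner of the form $\text{--}\bullet$. For example, if $\ell_{n-1},\ell_n\in F$, the dilatation ends the $(n-1)$-th fragment at $\text{--}\bullet$ of some interval $(c,c+1)$; if $v_n=1.1$ and $\tau_n=1.8$ (so $v_n+\tau_n=2.9$), the dilatation's $n$-th fragment contributes only one rewarded idling transition (the one in $(2,3)$), since the one in $(1,2)$ was already consumed by the previous fragment. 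Thus $N_n^{\max}=1<1.8=\tau_n$. This is not a Zeno edge case at all—it occurs generically whenever the dilatation hands off a $\text{--}\bullet$ corner across an $F$-to-$F$ transition (and dually $\bullet\text{--}$ across $\bar F$-to-$\bar F$). Your remedy paragraph gestures at "aggregating fragments sharing a region," but the right aggregation is not by shared region: you must go back to the last index $n-i$ whose fragment ended at a corner \emph{other} than $\text{--}\bullet$ (respectively $\bullet\text{--}$), because only that reference point guarantees that the corner-point value underestimates (respectively overestimates) the real valuation. Without pinning this down, the argument does not close.

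The paper's proof handles this through a sharper invariant than the cross-multiplied ratio inequality: it propagates by induction on the prefix length $n$ the \emph{pair} of linear inequalities $C_n\ge c_n$ and $R_n-C_n\le r_n-c_n$ (dilatation cost dominates real cost; dilatation non-$F$ reward is dominated by real non-$F$ time). These two facts immediately yield $\frac{C_n}{R_n}\ge\frac{c_n}{r_n}$ by an elementary manipulation, and—crucially—they telescope cleanly across the "bad-corner" fragments: in the aggregation step one compares accumulated quantities from rank $n-i$ to rank $n+1$, using the induction hypothesis at $n-i$ plus the observations $\mathcal{T}_{n,i}\ge\tau_{n,i}\ge c_{n,i}$ on the block. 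Your cross-multiplied form $A'B\ge AB'$ is the eventual conclusion, but it is strictly weaker than the paired inequalities and does not feed back into the induction as readily. If you want to salvage your route, you should replace the per-fragment comparison by the paper's two-inequality invariant and carry out the case analysis on the starting corner of each fragment (the paper's cases 1–5, with 4.2 and 5.2 being the non-trivial aggregated ones).
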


\begin{proof}%[of Lemma~\ref{prop:linkAAcp}]
  The proof is based on the following intuitive lemma, whose proof is
  tedious but not difficult. Given a run $\rho = (\ell_0,v_0)
  \xrightarrow{\tau_0,a_0} (\ell_1,v_1) \cdots $, in the sequel we
  abusevely denote by $\freq{\rho_n}$ the quantity given by $ (\sum_{i
    \leq n | \ell_i \in F} \tau_i)/(\sum_{i \leq n} \tau_i)$. In the
  same spirit, given $\pi$ a run in $\A_{cp}^F$, we abusively denote
  by $\Rat{\pi_n}$, the ratio of accumulated costs divided by
  accumulated rewards for the finite prefix of length $n$.%\thomas{OK???}

\begin{lemmastyleS}
\label{lem-cases}
% Let $\A$ be a timed automaton, 
Let $\rho$ be a run of $\A$, and $\pi_n$ be the dilatation of $\rho_n$
(for $n \in \IN$). For all $n \in \IN$, if $\freq{\rho_n} =
\frac{c_n}{r_n}$ and $\Rat{\pi_n} = \frac{C_n}{R_n}$ then $C_n \ge
c_n$ and $(R_n-C_n) \le (r_{n} - c_{n})$.
\end{lemmastyleS}

Assuming the latter lemma, it is easy to conclude that $\freq{\rho}
\leq \Rat{\pi}$ for $\pi$ the dilatation of $\rho$. Indeed, given $n
\in \mathbb{N}$, $R_n - C_n \leq r_n -c_n$ and $c_n >0$ (the case $c_n
=0$ is straightforward) imply $\frac{R_n - C_n}{c_n} \leq \frac{r_n -
  c_n}{c_n}$. Moreover, $C_n \geq c_n$. Hence $\frac{R_n - C_n}{C_n}
\leq \frac{R_n - C_n}{c_n}$. All together, this yields
$\frac{R_n}{C_n} -1 \leq \frac{r_n}{c_n} -1$ which is equivalent to
$\frac{C_n}{R_n} \geq \frac{c_n}{r_n}$. When $n$ tends to infinity, we
obtain $\Rat{\pi} \geq \freq{\rho}$.

Using the fact that the $F$-contraction of $\rho$ is the
$\bar{F}$-dilatation of $\rho$, one obtains $\Rat{\pi'} \le
\freq{\rho}$ for $\pi'$ the contraction of $\rho$.
\end{proof}

\begin{proof}[of Lemma~\ref{lem-cases}]
  The proof is by induction on $n$. The base case, for $n=0$ is
  trivial, since the $R_0$, $C_0$, $r_0$, $c_0$ are all set to $0$ by
  convention. Note that an initialization at step $n=1$ would also be
  possible using cases 1 to 3 in the following cases enumeration.

  Assume now that the lemma holds for $n \in \mathbb{N}$, and let us
  prove it for $n+1$. Consider the prefix of length $n+1$ of $\rho$:
  $\rho_{n+1} = (\ell_0,0) \xrightarrow{\tau_0,a_0} (\ell_1,v_1) \cdots
  (\ell_n,v_n) \xrightarrow{\tau_n,a_n} (\ell_{n+1},v_{n+1})$. We note $e_0,e_1,... $ the edges
  fired along $\rho$. Let us
  detail a careful inspection of cases, depending on the value of $v_n
  + \tau_n$ and whether $\ell_n \in F$.

\begin{description}
\item[Case 1] $\mathsf{frac}(v_n)=0$\footnote{$\mathsf{frac}(v)$
    denotes its fractional part}, $\ell_n \in F$, and $v_n +\tau_n
  \notin \mathbb{N}$.\\
  In this case, $\tau_n = \mathcal{T}_n - \tau'$ with $\mathcal{T}_n \in \mathbb{N}$ and $\tau' \in
  (0,1)$. By definition of the dilatation, $\pi_{n+1}$ is built from
  $\pi_n$ by firing $\mathcal{T}_n$ idling transitions weighted $1/1$ in
  $\A_{cp}$ (possibly interleaved with idling transitions weighted
  $0/0$) followed by the discrete transition weighted $0/0$ that
  corresponds to $e_n$. Thus, $C_{n+1} = C_n + \mathcal{T}_n$, $R_{n+1} = R_n +
  \mathcal{T}_n$, whereas $c_{n+1} = c_n + \tau_n $ and $r_{n+1} = r_n + \tau_n$. In
  particular, $C_{n} \geq c_n$ (induction hypothesis) and $\tau_n < \mathcal{T}_n$
  imply $C_{n+1} \geq c_{n+1}$. Moreover, $R_{n+1} - C_{n+1} = R_n
  -C_n$ and $r_{n+1} -c_{n+1} = r_n -c_n$, and by induction hypothesis
  $R_n - C_n \leq r_n -c_n$. Hence $R_{n+1} - C_{n+1} \leq r_{n+1} -
  c_{n+1}$.
  \begin{figure}[h!]
\begin{center}
\subfigure[Case 1 ($\mathcal{T}_n > \tau_n$).]{
      \begin{tikzpicture}
        \everymath{\scriptstyle}
        \draw(-.5,0) -- (1.5,0);
        \draw[dotted] (1.5,0) -- (3.5,0);
        \draw(3.5,0) -- (5.5,0);
        
        \draw(0,-.1) -- (0,.1);
        \draw(1,-.1) -- (1,.1);
        \draw(4,-.1) -- (4,.1);
        \draw(5,-.1) -- (5,.1);
        
        \draw [snake=brace] (5,-.2) -- (0,-.2);
        \draw(2.5,-.4) node [below] (q) {$\mathcal{T}_n$};
        
        \draw[fill=black] (0,0) circle (1.5pt);
        \draw(0,0.1) node [above] (q) {$v_n$};
        
        \draw[fill=black] (4.85,0) circle (1.5pt);
        \draw(4.85,0.1) node [above] (q) {$v_n+\tau_n$};
      \end{tikzpicture}
     \label{fig-cas1}  }  
    %  \caption{Case 1 ($T_n > t_n$).}
\quad\quad\quad
 \subfigure[Case 2 ($\mathcal{T}_n<\tau_n$).]{
        \begin{tikzpicture}
        \everymath{\scriptstyle}
        \draw(-.5,0) -- (1.5,0);
        \draw[dotted] (1.5,0) -- (3.5,0);
        \draw(3.5,0) -- (5.5,0);
        
        \draw(0,-.1) -- (0,.1);
        \draw(1,-.1) -- (1,.1);
        \draw(4,-.1) -- (4,.1);
        \draw(5,-.1) -- (5,.1);
        
        \draw [snake=brace] (4,-.2) -- (0,-.2);
        \draw(2,-.4) node [below] (q) {$\mathcal{T}_n$};
        
        \draw[fill=black] (0,0) circle (1.5pt);
        \draw(0,0.1) node [above] (q) {$v_n$};
        
        \draw[fill=black] (4.15,0) circle (1.5pt);
        \draw(4.15,0.1) node [above] (q) {$v_n+\tau_n$};
      \end{tikzpicture}
      \label{fig-cas2}}
%      \caption{}
\caption{Cases 1 and 2.}
    \end{center}
  \end{figure}
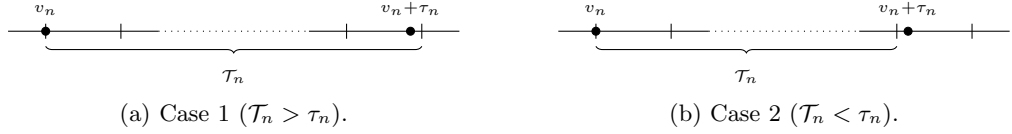
\item[Case 2] $\mathsf{frac}(v_n)=0$, $\ell_n \notin F$, and $v_n +\tau_n
  \notin \mathbb{N}$.\\
  Here, $\tau_n = \mathcal{T}_n + \tau'$ with $\mathcal{T}_n \in \mathbb{N}$ and $\tau' \in (0,1)$. In
  the dilatation, $\mathcal{T}_n$ transitions weighted $0/1$ will be fired before
  taking the transition corresponding to $e_n$. Thus $R_{n+1} = R_n +
  \mathcal{T}_n$, $C_{n+1} = C_n$, whereas $c_{n+1} = c_n $ and $r_{n+1} = r_n + \mathcal{T}_n
  + \tau'$. We immediately deduce that $C_{n+1} \geq c_{n+1}$. Morevoer
  $R_{n+1} - C_{n+1} = R_n + \mathcal{T}_n -C_n \leq r_n - c_n + \mathcal{T}_n < r_n + \mathcal{T}_n +\tau'
  -c_n = r_{n+1} - c_{n+1}$, where the second step uses the induction
  hypothesis.
\item[Case 3] $\mathsf{frac}(v_n)=0$, and $v_n +\tau_n \in \mathbb{N}$.\\
  In this case, $\tau_n= \mathcal{T}_n \in \mathbb{N}$ and exactly $\mathcal{T}_n$
  transitions with reward $1$ will be taken in $\A_{cp}$ before firing
  the transition that corresponds to $e_n$. In other words, the costs
  and rewards are exactly matched in the corner-point abstraction:
  $C_{n+1} - C_n = c_{n+1} - c_n$ and $R_{n+1} - R_n = r_{n+1}
  -r_n$. Notice that the last equalities hold regardless whether
  $\ell_n \in F$. Using the induction hypothesis ($C_n \geq c_n$ and
  $R_n - c_n \leq r_n -c_n$) we easily conclude: $R_{n+1} - C_{n+1} =
  R_n -C_n + r_{n+1} - c_{n+1} + c_n -r_n \leq r_{n+1} - c_{n+1}$, and
  $C_{n+1} = c_{n+1} + C_n -c_n \geq c_{n+1}$.
\begin{figure}[h!]
\begin{center}
\subfigure[Case 3 ($\mathcal{T}_n = \tau_n$).]{
      \begin{tikzpicture}
        \everymath{\scriptstyle}
        \draw(-.5,0) -- (1.5,0);
        \draw[dotted] (1.5,0) -- (3.5,0);
        \draw(3.5,0) -- (5.5,0);
        
        \draw(0,-.1) -- (0,.1);
        \draw(1,-.1) -- (1,.1);
        \draw(4,-.1) -- (4,.1);
        \draw(5,-.1) -- (5,.1);
        
        \draw [snake=brace] (5,-.2) -- (0,-.2);
        \draw(2.5,-.4) node [below] (q) {$\mathcal{T}_n$};
        
        \draw[fill=black] (0,0) circle (1.5pt);
        \draw(0,0.1) node [above] (q) {$v_n$};
        
        \draw[fill=black] (5,0) circle (1.5pt);
        \draw(5,0.1) node [above] (q) {$v_n+\tau_n$};
      \end{tikzpicture}
      \label{fig-cas3}
     }\quad \quad \quad 
\subfigure[Case 4.1 ($\mathcal{T}_n > \tau_n$).]{
       \begin{tikzpicture}
        \everymath{\scriptstyle}
        \draw(-.5,0) -- (1.5,0);
        \draw[dotted] (1.5,0) -- (3.5,0);
        \draw(3.5,0) -- (5.5,0);
        
        \draw(0,-.1) -- (0,.1);
        \draw(1,-.1) -- (1,.1);
        \draw(4,-.1) -- (4,.1);
        \draw(5,-.1) -- (5,.1);
        
        \draw [snake=brace] (5,-.2) -- (0,-.2);
        \draw(2.5,-.4) node [below] (q) {$\mathcal{T}_n$};
        
        \draw[fill=black] (.15,0) circle (1.5pt);
        \draw(.15,0.1) node [above] (q) {$v_n$};
        
        \draw[fill=black] (4.85,0) circle (1.5pt);
        \draw(4.85,0.1) node [above] (q) {$v_n+\tau_n$};
      \end{tikzpicture}
 \label{fig-cas41}}
\caption{Cases 3 and 4.1.}
 \end{center}
  \end{figure}
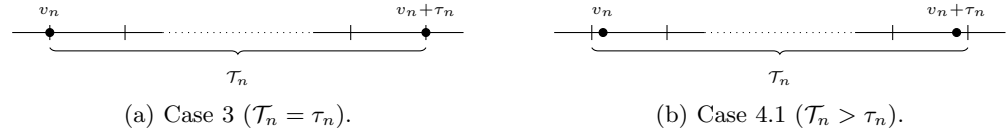  
\item[Case 4] $\mathsf{frac}(v_n) \neq 0$ and $\ell_n \in F$
% and $v_n +t_n \in \mathbb{N}$.\\
\begin{description}
\item[Case 4.1]
  Assume first that the corner in the last state of $\pi_n$ is
  $\bullet$--. Then letting $\mathcal{T}_n = \lceil v_n + \tau_n - \lfloor v_n
  \rfloor\rceil$, in the dilatation, $\mathcal{T}_n$ idling transitions weighted
  $1/1$ will be fired in $\A_{cp}$ before firing the discrete
  transition corresponding to $e_n$. Thus $C_{n+1} = C_n + \mathcal{T}_n$,
  $R_{n+1} = R_n + \mathcal{T}_n$, whereas $c_{n+1} = c_n + \tau_n$ and $r_{n+1} = r_n
  +\tau_n$. We immediately obtain $C_{n+1} \geq c_{n+1}$ using the
  induction hypothesis and the fact that $\mathcal{T}_n > \tau_n$. Moreover, $R_{n+1}
  - C_{n+1} = R_n -C_n \leq r_n - c_n = r_{n+1} - c_{n+1}$.

  Note that the picture on Fig.~\ref{fig-cas41} represents the case
  $v_n + \tau_n \notin \mathbb{N}$, but the reasoning is valid for $v_n +
  \tau_n \in \mathbb{N}$ as well.
\item[Case 4.2] Assume now that the corner in the last state of
  $\pi_n$ is --$\bullet$. In this situation, we cannot conclude
  immediately, since $C_{n+1} = C_n + \mathcal{T}_n$ and $c_{n+1} = c_n + \tau_n$,
  with $\mathcal{T}_n= \lceil v_n + \tau_n - \lceil v_n \rceil \rceil$ is
  incomparable to $\tau_n$ in general. Instead, we need to reason in a
  more global way, taking into account some previous steps in $\rho$
  and $\pi$. Let us consider the least index $i$ such that the corner
  of the last state in $\pi_{n-i}$ is not --$\bullet$. For this index,
  $\pi_{n-i}$ ends either with the pointed region $((\lfloor v_{n-i}
  \rfloor,\lceil v_{n-i} \rceil),\bullet$--$)$ or with
  $(\{v_{n-i}\},\bullet)$.  We then consider the suffix of path
  $\pi_{n+1}$ after $\pi_{n-i}$. Notice that the clock $x$ was not
  reset along this suffix (since no pointed region of the form
  $(\cdot,\cdot,\bullet)$ was reached). For this part, the accumulated
  reward is $\mathcal{T}_{n,i}=\lceil v_n + \tau_n - \lfloor v_{n-i} \rfloor
  \rceil$. %, denoted by $T_{n,i}$
  The corresponding part in $\rho$ has an accumulated delay
  $\tau_{n,i}=v_n + \tau_n - v_{n-i} = \sum_{j=n-i}^n \tau_j$ (since the clock
  has not been reset). %, denoted $t_{n,i}$.
  Note that $\mathcal{T}_{n,i} \ge \tau_{n,i}$.
  
  Let us now discuss the cost accumulated along
  the suffix of path $\pi_{n+1}$ after $\pi_{n-i}$.  By definition of
  the dilatation,
  %, there are only transitions weighted $1/1$ or $0/0$ along this suffix.
  %we know that along this suffix, we only meet
  %transitions weighted by $1/1$ or $0/0$. 
  no idling transition can be fired from a state with an $F$-location
  along this suffix, else the last state of $\pi_{n-i+1}$ has not the
  corner --$\bullet$.
  % In particular, if we cross a
  % location $\ell$ such that $\ell \not\in F$, the corresponding
  % transition will be of the form $0/0$. \fbox{TB: Clear enough?}
  Thus, the accumulated cost along the suffix of path
  $\pi_{n+1}$ after $\pi_{n-i}$ is equal to $\mathcal{T}_{n,i}$. However, the
  corresponding part in $\rho$ has an accumulated cost $c_{n,i}$
  %of $c_{n,i}$
  %\fbox{TB: notation boaf???}  with $c_{n,i} \le t_{n,i}$ 
  smaller than $\tau_{n,i}$ (due to the potential time spent in locations
  not in $F$).
  
  The above discussion
  can be summarised as follows: $ R_{n+1} = R_{n-i} + \mathcal{T}_{n,i}$,
  $C_{n+1} = C_{n-i} + \mathcal{T}_{n,i}$, $r_{n+1} = r_{n-i} + \tau_{n,i}$ and
  $c_{n+1} = c_{n-i} + c_{n,i}$ with $\mathcal{T}_{n,i} \ge \tau_{n,i} \ge
  c_{n,i}$.  We can thus derive that $C_{n+1} \ge c_{n+1}$, using both
  the induction hypothesis stating that $C_{n-i} \ge c_{n-i}$ and the
  fact that $\mathcal{T}_{n,i} \ge c_{n,i}$.  It remains to prove that $R_{n+1}
  - C_{n+1} \le r_{n+1} - c_{n+1}$.  By the above equalities, this is
  equivalent to prove that $R_{n+i} - C_{n+i} \le (r_{n+i} - c_{n+i})
  + (\tau_{n,i} - c_{n,i})$ which is true by the induction hypothesis
  stating that $(R_{n+i} - C_{n+i}) \le (r_{n+i} - c_{n+i})$ and the
  fact that $\tau_{n,i} \ge c_{n,i}$.
  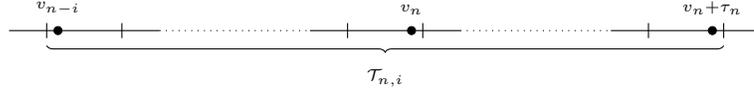
\begin{figure}
\begin{center}
      \begin{tikzpicture}
        \everymath{\scriptstyle}
        \draw(-.5,0) -- (1.5,0);
        \draw[dotted] (1.5,0) -- (3.5,0);
        \draw(3.5,0) -- (5.5,0);
        \draw[dotted] (5.5,0) -- (7.5,0);
        \draw(7.5,0) -- (9.5,0);
        
        \draw(0,-.1) -- (0,.1);
        \draw(1,-.1) -- (1,.1);
        \draw(4,-.1) -- (4,.1);
        \draw(5,-.1) -- (5,.1);
        \draw(8,-.1) -- (8,.1);
        \draw(9,-.1) -- (9,.1);
        
        \draw [snake=brace] (9,-.2) -- (0,-.2); 

        \draw(4.5,-.4) node [below] (q) {$\mathcal{T}_{n,i}$};
        
        \draw[fill=black] (.15,0) circle (1.5pt);
        \draw(.15,0.1) node [above] (q) {$v_{n-i}$};
        
        \draw[fill=black] (4.85,0) circle (1.5pt);
        \draw(4.85,0.1) node [above] (q) {$v_n$};

        \draw[fill=black] (8.85,0) circle (1.5pt);
        \draw(8.85,0.1) node [above] (q) {$v_n+\tau_n$};
      \end{tikzpicture}
      \label{fig-cas42}\caption{Case 4.2}
    \end{center}
  \end{figure}
\end{description}
\item[Case 5] $\mathsf{frac}(v_n) \neq 0$ and $\ell_n \notin
  F$% and $v_n +t_n \in \mathbb{N}$.\\
\begin{description}
\item[Case 5.1]
  Symmetrically to what precedes, the easy case is when the corner in
  the last state of $\pi_n$ is --$\bullet$. Then, letting $\mathcal{T}_n = \lfloor
  v_n + \tau_n\rfloor - \lceil v_n \rceil < \tau_n$, we can write $R_{n+1} =
  R_n + \mathcal{T}_n$ and $C_{n+1} = C_n$. Since $c_{n+1} = c_n$ and
  $r_{n+1} = r_n + \tau_n$, we deduce the desired inequalities.
  \begin{figure}[h!]
    \begin{center}
      \begin{tikzpicture}
        \everymath{\scriptstyle}
        \draw(-.5,0) -- (1.5,0);
        \draw[dotted] (1.5,0) -- (3.5,0);
        \draw(3.5,0) -- (5.5,0);
        
        \draw(0,-.1) -- (0,.1);
        \draw(1,-.1) -- (1,.1);
        \draw(4,-.1) -- (4,.1);
        \draw(5,-.1) -- (5,.1);
        
        \draw [snake=brace] (4,-.2) -- (1,-.2);
        \draw(2.5,-.4) node [below] (q) {$\mathcal{T}_n$};
        
        \draw[fill=black] (.85,0) circle (1.5pt);
        \draw(.85,0.1) node [above] (q) {$v_n$};
        
        \draw[fill=black] (4.15,0) circle (1.5pt);
        \draw(4.15,0.1) node [above] (q) {$v_n+\tau_n$};
      \end{tikzpicture}
      \label{fig-cas51}
      \caption{Case 5.1 ($\mathcal{T}_n<\tau_n$).}
    \end{center}
  \end{figure}
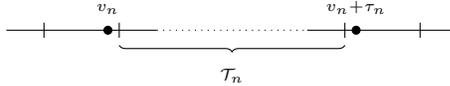
\item[Case 5.2] Assume now that the corner in the last state of
  $\pi_n$ is $\bullet$--. Therefore, the last pointed region in
  $\pi_n$ is $((\lfloor v_{n} \rfloor,\lceil v_{n}
  \rceil),\bullet$--$)$, and we let $\mathcal{T}_n = \lfloor v_n + \tau_n - \lfloor
  v_n \rfloor \rfloor$. By definition of the dilatation, this can only
  happen if $\ell_{n-1} \notin F$.  We then consider the least index
  $i$ such that the last corner in $\pi_{n-i}$ is not $\bullet$--. For
  this index, the last pointed region in $\pi_{n-i}$ is either
  $((\lfloor v_{n-i} \rfloor,\lceil v_{n-i} \rceil),$--$\bullet)$ or
  $(\{v_{n-i}\},\bullet)$. Moreover, all locations $\ell_j$ for $n-i
  \leq j \leq n$ are not in $F$. We define $\mathcal{T}_{n,i} = \lfloor v_n +\tau_n
  \rfloor - \lceil v_{n-i} \rceil$. Note that $\mathcal{T}_n \le \sum_{j=n-i}^n
  \tau_j$. Using this notation, $R_{n+1} = R_{n-i} + \mathcal{T}_{n,i}$, and
  $C_{n+1} = C_{n-j}$. In $\A$, $r_{n+1} = r_{n-j} + \sum_{j=n-i}^n
  \tau_j$ and $c_{n+1} = c_{n-i}$. We trivially derive $C_{n+1} \geq
  c_{n+1}$ using the analogous induction hypothesis at rank
  $n-i$. Moreover, $R_{n+1} - C_{n+1} = R_{n-i} + \mathcal{T}_{n,i} - C_{n-i} <
  R_{n_i} + \sum_{j=n-i}^n \tau_j - C_{n-i} \leq r_{n-i} + \sum_{j=n-i}^n
  \tau_j - c_{n-i} = r_{n+1} - c_{n+1}$, using the induction hypothesis
  at rank $n-i$ in the next to last step.
%  maximal suffix of $\pi_{n+1}$ such
%   that for each $j$, $\ell_j \notin F$ and the last pointed region in
%   $\pi_{n-j}$ is $((\lfloor v_{n-j} \rfloor,\lceil v_{n-j} \rceil),
%   \bullet$--$)$. Similarly to the previous case, one can argue that
%   the accumulated reward in this suffix agrees with the accumulated
%   delay in the corresponding part of $\rho$. The accumulated cost and
%   time spent in $F$ are both zero. Using the induction hypothesis at
%   rank $n-i$, one easily obtains the desired inequalities.
% \item[Case 6] $\{v_n\} \neq 0$, $\ell_n \in F$, and $v_n +t_n \notin
%   \mathbb{N}$.\\
%   This case is very similar to case 4. Assume first that the corner in
%   the last state of $\pi_n$ is $\bullet$--. Then letting $t = \lceil
%   v_n + t_n - \lceil v_n \rceil \rceil$, in the dilatation, $t$ idling
%   transitions weighted with $1/1$ will be fired before the discrete
%   transitions that corresponds to $e_n$. Therefore, $C_{n+1} = C_n +
%   t$ and $R_{n+1} = R_n +t$ whereas $c_{n+1} = c_n + t_n$ and $r_{n+1}
%   = r_n + t_n$.
% \item[Case 7] $\{v_n\} \neq 0$, $\ell_n \notin F$, and $v_n +t_n
%   \notin \mathbb{N}$.\\
\end{description}

  \begin{figure}[h!]
    \begin{center}
      \begin{tikzpicture}
        \everymath{\scriptstyle}
        \draw(-.5,0) -- (1.5,0);
        \draw[dotted] (1.5,0) -- (3.5,0);
        \draw(3.5,0) -- (5.5,0);
        \draw[dotted] (5.5,0) -- (7.5,0);
        \draw(7.5,0) -- (9.5,0);
        
        \draw(0,-.1) -- (0,.1);
        \draw(1,-.1) -- (1,.1);
        \draw(4,-.1) -- (4,.1);
        \draw(5,-.1) -- (5,.1);
        \draw(8,-.1) -- (8,.1);
        \draw(9,-.1) -- (9,.1);
        
        \draw [snake=brace] (8,-.2) -- (1,-.2); 

        \draw(4.5,-.4) node [below] (q) {$T_{n,i}$};
        
        \draw[fill=black] (.85,0) circle (1.5pt);
        \draw(.85,0.1) node [above] (q) {$v_{n-i}$};
        
        \draw[fill=black] (4.15,0) circle (1.5pt);
        \draw(4.15,0.1) node [above] (q) {$v_n$};

        \draw[fill=black] (8.15,0) circle (1.5pt);
        \draw(8.15,0.1) node [above] (q) {$v_n+\tau_n$};
      \end{tikzpicture}
      \label{fig-cas52}
      \caption{Case 5.2.}
    \end{center}
  \end{figure}
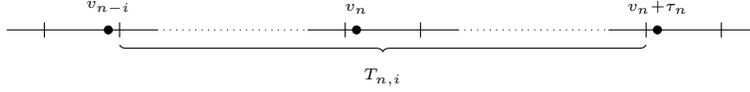
\end{description}
Let us notice that we ignored the unbounded region through the all
proof. However it can be treated exactly in the same way. Indeed, we
can consider the accumulated reward since the last reset and its
difference with the valuation in $\rho$ instead of the corner-point.

Note that in cases 4.2 and 5.2, the induction relies on other cases
(4.1, 5.1, and 1, 2, 3). However, the induction is well-founded since
those cases are treated independently. %\nat{NB: pas tr\`es bien dit}.
\end{proof}
\rule{\linewidth}{.5pt}
\setcounter{theorem}{\value{linkAcpA1}}
\begin{lemma}[From $\A_{cp}^F$ to $\A$, reward-diverging case]
  \label{prop:linkAcpA1}
  For every reward-diverging 
  % feasible\pat{a mentionner dans les preuves}
  run $\pi$ in $\A_{cp}^F$, there exists a non-Zeno run $\rho$ in $\A$
  such that $\pi\in \mathsf{Proj}_{cp}(\rho)$ and $\freq{\rho} =
  \Rat{\pi}$.
\end{lemma}

\begin{proof}%[of Lemma~\ref{prop:linkAcpA1}]
  Given $\rho$ a run and $n \in \mathbb{N}$, we denote by
  $\rho[n]$ the valuation of the $n$-th state along the
  run. Similarly, if $\pi$ belongs to
  $\mathsf{Proj}_{cp}(\rho)$, we consider the states of $\pi$ which
  correspond with a state of $\rho$ (those which are just before a
  discrete transition) and we note $\pi[n]$ the valuation of the
  corner of the $n$-th state if the region is bounded. Otherwise,
  $\pi[n]$ is the sum of all the rewards since the last region
  $\{0\}$.  Lemma~\ref{prop:linkAcpA1} relies on the following lemma:
\begin{lemmastyleS}
\label{lem:pi2rho}
For every reward-diverging run $\pi$ in $\A_{cp}^F$, for all
$\varepsilon >0$, there exists a run $\rho_\varepsilon$ of $\A$
such that, for all $n \in \mathbb{N}$, $|\pi[n] - \rho_\epsilon[n]|
\leq \frac{\epsilon}{2^n}$.
\end{lemmastyleS}
  Let us assume the Lemma~\ref{lem:pi2rho} and consider apart the cases where
  $\Rat{\pi} = 0$ and $\Rat{\pi}>0$.

  Assume first that $\pi$ is a reward-diverging run in
  $\A_{cp}^F$ with $\Rat{\pi} =0$. Given $\varepsilon>0$, let $\rho$
  be a run of $\A$ such that, for all $n \in \mathbb{N}$,
  $|\pi[n]-\rho[n]| < \frac{\varepsilon}{2^n}$. If $C_n$ and $R_n$ are
  the accumulated costs and rewards in the first $n$ steps of $\pi$ in
  $\A_{cp}^F$, then $\Rat{\pi} = \limsup_{n \rightarrow \infty}
  \frac{C_n}{R_n}$ and $\freq{\rho} = \limsup_{n \rightarrow \infty}
  \frac{C_n + \sum_{i \leq n} \alpha_i \varepsilon/{2^i}}{R_n +
    \sum_{i \leq n} \beta_i \varepsilon/{2^i}}$ where for every $i$,
  $\alpha_i \in \{-1,0,1\}$ and $\beta_i \in \{-1,1\}$. Hence
  $\freq{\rho} \leq \limsup_{n \rightarrow \infty} \frac{C_n +
    \varepsilon}{R_n - \varepsilon}$ (because $R_n > \varepsilon$ for
  $n$ large enough). Since $\lim_{n \rightarrow \infty}\frac{C_n}{R_n}
  = 0$ and $\lim_{n \rightarrow \infty} R_n = \infty$, we deduce
  $\limsup_{n \rightarrow \infty} \frac{C_n + \varepsilon}{R_n -
    \varepsilon} =0$ which means $\freq{\rho} =0 = \Rat{\pi}$.

  Assume now that $\pi$ is a reward-diverging run in $\A_{cp}^F$
  with $\Rat{\pi} >0$. Using the same notations as in the previous
  case, $|\frac{C_n}{R_n} - \frac{C_n + \sum_{i \leq n} \alpha_i
    \varepsilon/{2^i}}{R_n + \sum_{i \leq n} \beta_i
    \varepsilon/{2^i}}| \leq \frac{C_n \varepsilon + R_n
    \varepsilon}{R_n (R_n -\varepsilon)}$.
  % \fbox{detail this step?}.
  The latter term tends to $0$ as $n$ tends to infinity. As a
  consequence $\freq{\rho} = \Rat{\pi}$.
%\fbox{single proof for both cases?}
\end{proof}
\begin{proof}[of Lemma~\ref{lem:pi2rho}]
  We show that given a reward-diverging run $\pi$ in
  $\A_{cp}^F$, we can build a run $\rho$ such that $\pi \in
  \mathsf{Proj}_{cp}(\rho)$ and the $\rho[i]$ are as close as we want
  of the $\pi[i]$. More precisely, we show that we can choose suitable
  delays.  In the case where $\pi[i]$ is different than $\pi[i+1]$,
  the choice of the delay allows to be as close as wanted of
  $\pi[i+1]$.  If $\pi[i]$ and $\pi[i+1]$ are equal but an upper bound
  of a region, we can move nearer to $\pi[i+1]=\pi[i]$ by the new
  delay.  If the region is unbounded, and $\pi[i+1]$ larger than the
  maximal constant, it is again a good case.  The only difficulty is
  the case where the new delay force us to move further than
  $\pi[i+1]=\pi[i]$. The solution is to consider globally the sequence
  of the delays in the same corner together with the delay leading to
  it. Thanks to the non-Zenoness, this sequence is necessarily
  finite. Therefore, we can effectively choose suitable delays to
  respect the condition at the end of the sequence and thus all along
  the sequence.  Note this lemma is a simpler version of the Lemma~3
  in \cite{BBL-fmsd08}.
\end{proof}
\rule{\linewidth}{.5pt}
\paragraph{Details on the counterexample Fig.\ref{fig:countlemnZ}.}
We explicit here a reward-diverging run $\pi$ in $\A_{cp}^F$ of zero
ration and explain why every run $\rho$ in $\A$ has a positive
frequency. First, $\pi$ consists (omitting idling transitions weighted
$0/0$) in the following sequence of transitions :
 \begin{multline*}
   (\ell_0,\{0\}^2, \bullet) \xrightarrow{\varepsilon,0/1}(\ell_0,-,-)
   \xrightarrow{a,0/0} \\ \big((\ell_1,-,-)
   \xrightarrow{a,0/0}(\ell_2,-,-)\xrightarrow{\varepsilon,0/1}(\ell_2,-,-) \xrightarrow{a,0/0}\big)^\omega.
   % (\ell_0,\{0\}^2, \bullet) \xrightarrow{\varepsilon,0/1}(\ell_0,(0,1)^2\,|\,x=y, \mbox{right}) \xrightarrow{a,0/0} \big((\ell_1,(0,1)\times\{0\},\text{--}\bullet)\\
   % \xrightarrow{a,0/0}(\ell_2,\{0\}\times(0,1),\mbox{bot.})\xrightarrow{\varepsilon,0/1}(\ell_2,(0,1)^2\,|\,x\le
   % y,\mbox{top-left}) \xrightarrow{a,0/0}\big)^\omega
 \end{multline*}
 The ratio of $\pi$ is thus zero because the accumulated cost of $\pi$
 is zero whereas the reward diverges.  On the other hand, let us
 consider a run $\rho$ of $\A$ and prove that its frequency is
 positive.  Indeed, $\rho$ reads necessarily a word of the form
 $(1-\tau_0,a).\big((\tau_i,a).(1-\tau_i)\big)_{1\le i}$ where $\tau_0
 \in (0,1)$ and $\tau_{i+1}>\tau_i$ for all $0\le i$. The frequency of
 $F = \{\ell_1\}$ in $\rho$ is thus given by:
\[
\freq{\rho} = \limsup_{n\rightarrow \infty} \frac{\sum_{i \leq n}
  \tau_i}{\sum_{i \leq n} 1} > \limsup_{n\rightarrow \infty}
\frac{\sum_{i \leq n} \tau_0}{\sum_{i \leq n} 1}.
\]
Hence, $\freq{\rho} > \tau_0>0$.

\noindent\rule{\linewidth}{.5pt}
\setcounter{theorem}{\value{linkAcpA2}}
\begin{lemma}[From $\A_{cp}^F$ to $\A$, reward-converging case]
  \label{prop:linkAcpA2}
  For every reward-converging 
  % feasible
  run $\pi$ in $\A_{cp}^F$, if $\Rat{\pi}>0$, then for every
  $\varepsilon >0$, there exists a Zeno run $\rho_\varepsilon$ in $\A$
  such that $\pi \in\mathsf{Proj}_{cp}(\rho_\varepsilon)$ and
  $|\freq{\rho_\varepsilon} - \Rat{\pi}| < \varepsilon$.
\end{lemma} 
\begin{proof}%[of Lemma~\ref{prop:linkAcpA2}]
Lemma~\ref{prop:linkAcpA2} uses the following lemma:
\begin{lemmastyleS}
\label{lem:piZeno2rho}
For every reward-converging run $\pi$ in $\A_{cp}^F$, for all
$\varepsilon >0$, there exists a Zeno run $\rho_\varepsilon$ in $\A$
such that $\pi \in \mathsf{proj}_{cp}(\rho_\varepsilon)$ and for all
$n \in \mathbb{N}$, $|\pi[n] - \rho_\varepsilon[n]| < \varepsilon$.
\end{lemmastyleS}
Assuming Lemma~\ref{lem:piZeno2rho} and that $\Rat{\pi}>0$, let
$n_\pi$ be the length of the smallest prefix of $\pi$ such that there
is no transition with non-Zero reward after. Thanks to the convergence
of the reward of $\pi$, $n_\pi$ is necessarily finite. Given
$\varepsilon>0$, the run $\rho_{\varepsilon'}$ given by the
Lemma~\ref{lem:piZeno2rho} with $\varepsilon'=\frac{\varepsilon}{n_\pi
  + 1}$ satisfies the desired
property. % \thomas{La fin de la preuve est ici, je me trompe??? Je
 % suis OK jusque la...}
\end{proof}

\begin{remarkstyleS}
  %\amelie{rq importante pour le lemme9}
  Note that if $\pi$ is a contraction, then $\pi$ is the contraction
  of $\rho_\varepsilon$ defined in the proof of
  Lemma~\ref{lem:piZeno2rho}.
\end{remarkstyleS}

Remark that, if $\pi$ is of ratio $0$, three cases are possible:
  \begin{itemize}
  \item only $F$-locations are along $\pi$ and the reward of $\pi$ is
    $0$, then if $\pi$ is the contraction of $\rho$, $\freq{\rho}=1$,
  \item only $\bar{F}$-locations are along $\pi$, the frequency of
    each run $\rho$ whose contraction is $\pi$, is $0$
  \item otherwise, neither $1$ nor $0$ can be the frequency of an
    run $\rho$ of contraction $\pi$.
  \end{itemize}  
  These results follow immediately from the prohibition of
  zero-delays.

\begin{proof}[of Lemma~\ref{lem:piZeno2rho}]
  Let $\pi$ be a reward-converging run in $\A_{cp}^F$, and
  $\varepsilon \in (0,1)$. As $\pi$ is reward-converging, it ends with
  transitions weighted $0/0$ and its longest prefix
  $\pi'$ not ending with a transition weighted $0/0$
  exists. To prefix $\pi'$, one can associate a finite run $\rho'$ of
  $\A$, as we did for reward-diverging runs (see proof of
  Lemma~\ref{lem:pi2rho}): for all indices $i$ less than the length of
  $\pi'$, $|\pi'[i] - \rho'[i]| < \frac{\varepsilon}{2^i}$. For the
  suffix of $\pi$, composed only of transitions weighted
  $0/0$, we define a corresponding run in $\A$
  with total duration less than $\varepsilon$. This can, \emph{e.g.},
  be achieved by taking successive delays of $\frac{\varepsilon}{2^k}$
  for $k \geq 1$. Concatenating $\rho'$ and the run defined
  above yields a run in $\A$ always $\varepsilon$-close to
  $\pi$.
\end{proof}

\subsection*{Proofs for Section~\ref{subsec:set}}
 \setcounter{theorem}{\value{non-Zeno}}
 \begin{lemma}[non-Zeno case]\label{lm:non-Zeno}
  Let 
  % $\A$ be a timed automaton, $\A_{cp}$ its corner-point abstraction
  % and
  $\{C_1,\cdots,C_k\}$ the set of reachable 
  % feasible\pat{ok ?}  
  SCCs of $\A_{cp}^F$. The set of frequencies of non-Zeno runs of $\A$
  is then $\cup_{1\le i \le k} [m_i,M_i]$ where $m_i$ (resp. $M_i$) is
  the minimal (resp. maximal) ratio for a reward-diverging 
  % feasible
  cycle in $C_i$.
\end{lemma}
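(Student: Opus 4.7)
The plan is to split the proof into two claims that together give the lemma: (a) the set of ratios of reward-diverging runs of $\A_{cp}^F$ is exactly $\bigcup_{1\le i\le k}[m_i,M_i]$; and (b) this set coincides with the set of frequencies of non-Zeno runs of $\A$.

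For (a), I would first observe that since $\A_{cp}^F$ is a finite graph, every infinite run has a suffix entirely contained in a single SCC $C_i$, and because the transient prefix contributes only finitely many units of cost and reward while the reward of a reward-diverging run diverges, the ratio is determined by this suffix. Inside $C_i$, I would decompose a sufficiently long finite path into simple cycles (plus a bounded remainder); each such cycle has ratio in $[m_i,M_i]$, so the accumulated cost/reward ratio is a convex combination of values in $[m_i,M_i]$. Taking limsups preserves this containment, establishing the inclusion $\subseteq$. For the converse, given $r\in[m_i,M_i]$, write $r=\lambda m_i+(1-\lambda)M_i$ and build a reward-diverging run that interleaves the extremal cycles $c_m$ and $c_M$ of $C_i$, connected by finite paths staying inside $C_i$ (available by strong connectivity), in such a way that the proportion of reward accumulated along $c_M$-segments tends to $1-\lambda$; a routine averaging calculation then shows that its ratio is exactly $r$.

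For (b), the inclusion $\supseteq$ is direct: given $r\in\bigcup_i[m_i,M_i]$, invoke (a) to obtain a reward-diverging $\pi$ with $\Rat{\pi}=r$, and Lemma~\ref{prop:linkAcpA1} then lifts $\pi$ to a non-Zeno run $\rho$ in $\A$ with $\freq{\rho}=r$. Conversely, given a non-Zeno run $\rho$ with $\freq{\rho}=f$, Lemma~\ref{prop:linkAAcp} provides its contraction $\pi$ and dilatation $\pi'$ in $\mathsf{Proj}_{cp}(\rho)$ satisfying $\Rat{\pi}\le f\le\Rat{\pi'}$. Both projections are reward-diverging: non-Zenoness of $\rho$ forces the total delay to diverge, and along any projection the accumulated reward is, up to a bounded term per location visit, the accumulated delay of $\rho$.

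The main obstacle I anticipate is the last step of (b): ensuring that the sandwich bound $\Rat{\pi}\le f\le\Rat{\pi'}$ actually places $f$ inside a single interval from (a) rather than allowing it to fall in a gap between two disjoint intervals. My plan is to argue that the two projections $\pi$ and $\pi'$ ultimately live in the same SCC $C_i$ of $\A_{cp}^F$: they follow identical sequences of locations and edges (differing only in corner-point choices along idling transitions), and in the one-clock corner-point abstraction the corner-points $\bullet\text{--}$ and $\text{--}\bullet$ of a region visited infinitely often are mutually reachable by idling moves, so the asymptotic SCC of $\pi$ and of $\pi'$ is the same $C_i$. Combined with (a), this forces $\Rat{\pi},\Rat{\pi'}\in[m_i,M_i]$, hence $f\in[m_i,M_i]$, closing the argument.
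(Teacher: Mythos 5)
Your overall structure—splitting into (a) the set of ratios of reward-diverging runs in $\A_{cp}^F$, and (b) the transfer between that set and frequencies in $\A$—matches the paper's proof, and your treatment of (a) and the $\supseteq$ direction of (b) is the same as the paper's.

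However, there is a genuine gap in your argument for $\subseteq$ in (b). You claim that both the contraction and the dilatation of a non-Zeno run are automatically reward-diverging, on the grounds that the accumulated reward is ``up to a bounded term per location visit'' the accumulated delay of $\rho$. This is false: the per-step rounding error is bounded by $1$, but there can be infinitely many steps, so the total discrepancy is unbounded. Concretely, take a non-Zeno run that repeatedly elapses a delay in $(0,1)$ in an $F$-location and then resets the clock: its total duration diverges, yet its contraction rounds every such delay down to $0$ and accumulates reward $0$ forever, i.e.\ it is reward-converging. The paper explicitly handles this degenerate case separately: if the contraction (resp.\ dilatation) of a non-Zeno $\rho$ is reward-converging, then $\freq{\rho}=1$ (resp.\ $0$), and the \emph{other} projection is reward-diverging with ratio $1$ (resp.\ $0$), so $1$ (resp.\ $0$) still lies in $\cup_i[m_i,M_i]$. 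Your proof as written does not cover this and would incorrectly conclude the sandwich $\Rat{\pi}\le f\le\Rat{\pi'}$ with two reward-diverging endpoints.

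A secondary, smaller issue: your justification that contraction and dilatation end in the same SCC—that $\bullet\text{--}$ and $\text{--}\bullet$ of a region ``are mutually reachable by idling moves''—is not right as stated, since time is monotone and you can go $\bullet\text{--}\to\text{--}\bullet$ but not back without a reset. The paper's argument is cleaner: either the clock is reset infinitely often, so both projections pass through some $(\ell,\{0\},\bullet)$ infinitely often, or from some point on both are in the unbounded region $\bot$ (which has a single corner-point) and follow identical transitions. Your conclusion is correct, but you should use that reasoning rather than appealing to mutual reachability within a region.
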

\begin{proof}%[of Lemma~\ref{lm:non-Zeno}]
  The lemma is based on the following lemma which expresses the set of
  ratios in $\A_{cp}^F$ for reward-diverging runs ending up in a
  given SCC.
  \begin{lemmastyleS}
    \label{lem:sccAcp-nZ}
    Let $C_i$ be an SCC of $\A_{cp}^F$. If $\mathcal{R}_i$ denotes the
    set of ratios of reward-diverging simple cycles in $C_i$, then the
    set of ratios of reward-diverging runs of $\A_{cp}^F$ ending
    in $C_i$ is the interval $[m_i,M_i]$, where
    $m_i=\min(\mathcal{R}_i)$ and $M_i=\max(\mathcal{R}_i)$.
  \end{lemmastyleS}
  Admitting the lemma for now, we conclude as follows. The set of
  ratios for the reward-diverging runs in $\A_{cp}^F$ is thus
  $\cup_{1\le i \le k} [m_i,M_i]$, where $m_i$ is the minimal
  frequency for a simple reward-diverging cycle in SCC $C_i$, and
  $M_i$ the maximal one. By the Lemma~\ref{prop:linkAcpA1}, we know
  that $\cup_{1\le i \le k} [m_i,M_i]$ is included in
  $\mathcal{F}_{nZ}$, the set of frequencies of non-Zeno runs in
  $\A$.  Moreover, thanks to the Lemma~\ref{prop:linkAAcp} and the
  convexity of the intervals $[m_i,M_i]$, we can show the other
  inclusion $\mathcal{F}_{nZ}\subseteq \cup_{1\le i \le k} [m_i,M_i]$
  as follows. Let $\rho$ be a non-Zeno run in $\A$. We
  distinguish between two cases:
  \begin{itemize}
  \item if the contraction and the dilatation of $\rho$ are both
    reward-diverging, then either the clock is reset infinitely often
    along $\rho$ or from some point on, the value of the clock along
    $\rho$ lies in the unbounded region forever. In the first case,
    there is some state of the form $(\ell,\{0\}, \bullet)$ in
    $\A_{cp}$ which is visited infinitely often by both the
    contraction and the dilatation. In the second case, from some
    point on, they will follow the same transitions between states of
    the form $(\ell,\bot,\alpha_\bot)$ (within the unbounded region).
    In both cases, the contraction and the dilatation both end up in
    the same SCC, say $C_i$. Their frequencies, and that of $\rho$
    (thanks to Lemma~\ref{prop:linkAAcp}) thus lie in the interval
    $[m_i,M_i]$.
  \item if the contraction (resp. dilatation) of $\rho$ is
    reward-converging, the frequency of $\rho$ is $1$ (resp. $0$), in
    this case, the dilatation (resp. contraction) is reward-diverging
    and of ratio $1$ (resp. $0$), therefore $1$ (resp. $0$) is in
    $\cup_{1\le i \le k} [m_i,M_i]$.
  \end{itemize}
  % On the one hand, if the contraction and the dilatation of some
  % run $\rho$ are reward-diverging, they belong a common
  % interval $[m_i,M_i]$ and $\rho$'s frequency is in this
  % interval. On the other hand, if the contraction (resp. dilatation)
  % is reward-converging, the frequency of $\rho$ is $1$ (resp. $0$),
  % in this case, the dilatation (resp. contraction) is
  % reward-diverging and of ratio $1$ (resp. $0$), therefore $1$
  % (resp. $0$) is in $\cup_{1\le i \le k} [m_i,M_i]$.
  As a consequence, the set $\mathcal{F}_{nZ}$ of frequencies of
  non-Zeno runs of $\A$ is equal to the set $\cup_{1\le i \le k}
  [m_i,M_i]$ of frequencies of the reward-diverging runs of
  $\A_{cp}^F$.
\end{proof}

\begin{proof}[of Lemma~\ref{lem:sccAcp-nZ}]
  Let $\pi$ be a reward-diverging run of $\A_{cp}^F$.  To $\pi$
  we associate the SCC $C_\pi$ of $\A_{cp}^F$ where $\pi$ ends up
  in. First observe that the influence of the prefix leading $C_{\pi}$
  is negligible in the computation of the ratio because $\pi$ is
  reward-diverging. Precisely, the ratio of the prefix of length $n$
  (for $n$ large enough) is:
  \[
  \Rat{\pi_{|n}} = \frac{p_{\mathit{pref}} + P_n}{q_{\mathit{pref}} +
    Q_n}
  \]
  where $p_{\mathit{pref}}/q_{\mathit{pref}}$ is the ratio of the
  shortest prefix of $\pi$ leading to $C_{\pi}$. The sequence $Q_n$
  diverges when $n$ tends to infinity because $\pi$ is
  reward-diverging.  Hence $\lim_{n \rightarrow \infty} \Rat{\pi_{|n}}
  = \lim_{n \rightarrow \infty} \frac{P_n}{Q_n}$. As a consequence,
  without loss of generality, we assume that $\A_{cp}^F$ is restricted
  to $C_\pi$ and $\pi$ starts in some state of $C_{\pi}$.

  Observe now that reward-converging cycles in $\A_{cp}^F$ necessarily
  have reward (and hence cost) $0$, and thus do not contribute to the
  ratio $\Rat{\pi}$. Hence we can assume w.l.o.g. that $\pi$ does not
  pass through reward-converging cycles. Following the proof
  of~\cite[Prop.~4]{BBL-fmsd08}, we can decompose $\pi$ into
  (reward-converging) cycles and prove that $\Rat\pi$ lies between $m
  = \min (\mathcal{R}_{C_\pi})$ and $M=\max
  (\mathcal{R}_{C_\pi})$. Note that the extremal values ($m$ and $M$)
  are obtained by a run reaching a cycle with extremal ratio,
  and iterating it forever. % \pat{ai change la preuve a cet endroit car
%     je ne suis pas ok avec l'argument de decomposer en cycles simples}

  % For similar reasons, $\Rat{\pi}$ only depends on the
  % reward-diverging simple cycles that are taken infinitely often
  % along $\pi$. Observe that reward-converging cycles in $\A_{cp}^F$
  % necessarily have reward (and hence cost) $0$, and thus do not
  % contribute to the ratio. The value $\Rat{\pi}$ then clearly lies
  % between $m = \min (\mathcal{R}_{C_\pi})$ and $M=\max
  % (\mathcal{R}_{C_\pi})$. Note that the extremal values ($m$ and
  % $M$) are obtained by a run reaching a cycle with extremal
  % ratio, and iterating it forever.

  Let us now show that any value in the interval $[m,M]$ is the ratio
  of some run in $\A_{cp}^F$ which ends up in the SCC
  $C_{\pi}$. The arguments are inspired by
  \cite{CDEHR-concur10}. Given $\lambda \in (0,1)$, we explain how to
  build a run with ratio $r_\lambda = (1 - \lambda) m + \lambda
  M$. To do so, for $(a_n) \in (\mathbb{Q} \cap (m,M))^{\mathbb{N}}$ a
  sequence of rational numbers converging to $\lambda$, we build an
  run $\pi$ such that $|\Rat{\pi_{|f(n)}} - a_n|<\frac{1}{n}$
  for some increasing function $f \in \mathbb{N}^{\mathbb{N}}$.
  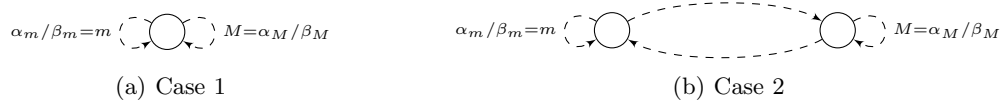
\begin{figure}
    \begin{center}
      \subfigure[Case 1]{
        \begin{tikzpicture}
          \everymath{\scriptstyle}
          \draw(0,0) node [circle,draw,inner sep=1.5pt] (A)
          {\textcolor{white}{le}};
          \draw[-latex',dashed] (A) .. controls +(30:25pt) and
          +(330:25pt) .. (A) node[pos=.5,right]{$M=\alpha_M/\beta_M$};
          \draw[-latex',dashed] (A) .. controls +(150:25pt) and
          +(210:25pt) .. (A) node[pos=.5,left]{$\alpha_m/\beta_m=m$};
        \end{tikzpicture}
        \label{fig:case1}}
      \hspace{1cm}
      \subfigure[Case 2]{
        \begin{tikzpicture}
          \everymath{\scriptstyle}
          \draw(0,0) node [circle,draw,inner sep=1.5pt] (A)
          {\textcolor{white}{le}};
          \draw(3,0) node [circle,draw,inner sep=1.5pt] (A')
          {\textcolor{white}{le}};
          \draw[-latex',dashed] (A') .. controls +(30:25pt) and
          +(330:25pt) .. (A') node[pos=.5,right]{$M=\alpha_M/\beta_M$};
          \draw[-latex',dashed] (A) .. controls +(150:25pt) and
          +(210:25pt) .. (A) node[pos=.5,left]{$\alpha_m/\beta_m=m$};
          \draw[-latex',dashed] (A) .. controls +(30:25pt) and
          +(150:25pt) .. (A') node[pos=.5,right]{};
          \draw[-latex',dashed] (A') .. controls +(210:25pt) and
          +(330:25pt) .. (A) node[pos=.5,left]{};
        \end{tikzpicture}
        \label{fig:case2}}
      \caption{The two possible cases.}
    \end{center}
  \end{figure}

  \paragraph{Case 1.} We first assume for simplicity that in $C_{\pi}$
  two cycles of respective ratio $m$ and $M$ share a state, as
  depicted in Fig.~\ref{fig:case1}, and prove a stronger result: we
  build a run $\pi$ such that $\Rat{\pi_{|f(n)}} = a_n$.  Since
  two cycles, one of minimal ratio, and the other of maximal ratio
  share a common state, it suffices to explain how to combine these
  two cycles to obtain ratio $r_\lambda$. 

  Assume $a_0 = p_0/q_0$ with $(p_0,q_0) \in \mathbb{N}^2$.
  % For any pair $(p_0,q_0) \in \mathbb{N}^2$, with $p_0/q_0 \in (0,1)$,
  Let us show how to build a finite run $\pi$ of ratio $r_{a_0}
  = (1-p_0/q_0)m+(p_0/q_0)M$. Assume $m = \alpha_m/\beta_m$ where
  $\alpha_m$ is the cost of the cycle, and $\beta_m$ its reward, and
  similarly $M = \alpha_M/\beta_M$. Taking $(q_0-p_0)\beta_M$ times
  the cycle of ratio $m$ and then $p_0 \beta_m$ times the cycle of
  ratio $M$ yields an finite run $\pi_0$ with the desired
  property (this will be $\pi_{|f(0)}$). Indeed:
  \[
  \frac{\bigl((q_0-p_0) \beta_M\bigr) \alpha_m + \bigl(p_0
    \beta_m\bigr) \alpha_M}{\bigl((q_0-p_0) \beta_M\bigr) \beta_m +
    \bigl(p_0 \beta_m\bigr) \beta_M} = \frac{(q_0-p_0) \beta_M
    \alpha_m + p_0 \beta_m \alpha_M}{q_0 \beta_M \beta_m} =
  \frac{q_0-p_0}{q_0} m + \frac{p_0}{q_0} M = r_{a_0}.
  \]
  To build an infinite run with ratio $r_\lambda$, we
  incrementally build prefixes $\pi_n$ (which will be $\pi_{|f(n)}$)
  of ratio $r_{a_n}$, starting with $\pi_0$,
  % of ratio $r_{a_0} = r_{p_0/q_0}$.
  as depicted in the picture below.
  \begin{center}
    \scalebox{.9}{
      \begin{tikzpicture}[->,>=stealth',shorten >=1pt,auto,node distance=2cm,
        semithick]
        
        \tikzstyle{every state}=[text=black]
        
        \node (init) at (0,0) {.};
        \node (a0) at (2.5,0) {.};
        \node (a1) at (5,0) {.};
        \node (a2) at (7.5,0) {.};
        \node (a3) at (10,0) {.};
        \node (init3) at (0,-3) {};
        \node (init1) at (0,-1) {};
        \node (init0) at (0,-0.1) {};
        \node (init2) at (0,-2) {};
        \node (a1') at (5,-1) {};
        \node (a0') at (2.5,-0.1) {};
        \node (a2') at (7.5,-2) {};
        \node (a3') at (10,-3) {};
        \node (a4) at (12.5,0) {};
        
        \path (init) edge node [above] {$m^*.M^*$} (a0)
        (a0) edge node [above] {$m^*.M^*$} (a1)     
        (a1) edge node [above] {$m^*.M^*$} (a2)     
        (a2) edge node [above] {$m^*.M^*$} (a3)     
        (a3) edge [dashed] node [above] {} (a4)     
        ;
        \draw[decoration={brace,amplitude=10},decorate,thick,>=] (a1') -- (init1) node[pos=.5,below=6pt, xshift=34pt]{$\pi_1,(\Rat{\pi_1}=r_{a_1})$};
        \draw[decoration={brace,amplitude=10},decorate,thick,>=] (a0') -- (init0) node[pos=.5,below=6pt, xshift=34pt]{$\pi_0,(\Rat{\pi_0}=r_{a_0})$};
        \draw[decoration={brace,amplitude=10},decorate,thick,>=] (a2') -- (init2) node[pos=.5,below=6pt, xshift=34pt]{$\pi_2,(\Rat{\pi_2}=r_{a_2})$};
        \draw[decoration={brace,amplitude=10},decorate,thick,>=] (a3') -- (init3) node[pos=.5,below=6pt, xshift=34pt]{$\pi_3,(\Rat{\pi_3}=r_{a_3})$};
      \end{tikzpicture}}
  \end{center}

  Run $\pi_{n+1}$ has $\pi_n$ as prefix, then iterates the cycle
  of minimal ratio, and finally iterates the cycle of maximal ratio in
  order to compensate $r_{a_n}$ and reach ratio $r_{a_{n+1}}$.  We
  assume $a_n = p_n/q_n$ with $(p_n,q_n) \in \mathbb{N}^2$. In
  $\pi_{n+1}$ the number of iterations of the cycle of ratio $m$
  (resp. the cycle of ratio $M$) is globally $b_{n+1} (q_{n+1} -
  p_{n+1})\beta_M$ (resp. $b_{n+1} p_{n+1}\beta_m$) for some $b_{n+1}
  \in \mathbb{N}_{>0}$.
  % if $a_{n+1} = p_{a_{n+1}}/q_{a_{n+1}}$. 
  This construction ensures that $r_\lambda$ is an accumulation point
  of the set of ratios for the prefixes $\pi_n$. Moreover, since each
  path fragment starts with iterations of the cycle of minimal ratio
  first, $r_\lambda$ is the largest accumulation point of the sequence
  of the ratios of prefixes after each cycle. The sequence of the
  prefixes' ratios is schematized below. The oscillations during a
  cycle become negligible when the length of the run
  increases. In the picture below, they are represented by shorter and
  shorter dashes. %\fbox{am\'eliorer dessin}
\begin{center}
\scalebox{.8}{
\begin{tikzpicture}

  \tikzstyle{every state}=[text=black]

  \node [inner sep =0] (init) at (-1,0) {};
  \node (init') at (-1,0.3) {$r_{a_0}$};
  \node (a0) at (2.,0.4) {};
  \node (a1) at (4.7,1) {};
  \node (a2) at (7,1.2) {};
  \node (a3) at (10,1.36) {};
  \node (a0') at (2,0.7) {$r_{a_1}$};
  \node (a1') at (4.5,1.3) {$r_{a_{2}}$};
  \node (a2') at (7,1.5) {$r_{a_{3}}$};
  \node (a3') at (10,1.86) {$r_{a_{4}}$};
  \node (a4) at (13,1.4) {};

\draw [decorate,decoration={snake,amplitude=1mm}] (init) .. controls +(330:1cm) and +(210:1cm) .. (a0)  ;

\draw [decorate,decoration={snake,amplitude=.8mm}] (a0) .. controls +(330:1cm) and +(210:1cm) .. (a1)  ;

\draw [decorate,decoration={snake,amplitude=.6mm}] (a1) .. controls +(330:1cm) and +(210:1cm) .. (a2)  ;

\draw [decorate,decoration={snake,amplitude=.4mm}] (a2) .. controls +(330:1cm) and +(210:1cm) .. (a3)  ;

\draw [decorate,decoration={snake,amplitude=.2mm}] (a3) .. controls +(330:1cm) and +(210:1cm) .. (a4)  ;

\end{tikzpicture}}
\end{center}
\paragraph{Case 2} In the general case, in the SCC $C_{\pi}$ of
$\A_{cp}^F$ the cycles with minimal and maximal ratios do not
necessarily share a common state: two finite runs connect the
two cycles, as represented on Fig~\ref{fig:case2}. We fix two cycles
of minimal and maximal ratios, and two finite paths $\pi_{mM}$ and
$\pi_{Mm}$ that connect those cycles in $C_\pi$. Similarly to the
first case, we show how to build a sequence of finite runs
$(\pi_n)$ with $|\Rat{\pi_n} - r_{a_n}| <\frac{1}{n}$, and prove that
the influence of the finite paths linking the cycles is negligible
when $n$ tends to infinity. The run $\pi_{n+1}$ is defined as
the concatenation of $\pi_n$ with $\pi_{Mm}$ then iterations of the
cycle of minimal ratio $m$ then $\pi_{mM}$ and ending with iterations
of the cycle with maximal ratio $M$. If $\tilde{p}$ and $\tilde{q}$
are respectively the cost and the reward of $\pi_{mM}$ and $\pi_{Mm}$
together, then the ratio of $\pi_{n+1}$ is:
\[
\Rat{\pi_{n+1}} = \frac{b_{n+1} (q_{n+1} - p_{n+1}) \beta_M \alpha_m +
  b_{n+1} p_{a_{n+1}} \beta_m \alpha_M + (n+1) \tilde{p}}{b_{n+1}
  (q_{n+1} - p_{n+1}) \beta_M \beta_m + b_{n+1} p_{a_{n+1}} \beta_m
  \beta_M + (n+1) \tilde{q}}
\]
% \[
% \Rat{\pi_{n+1}} = \frac{b_{n+1} (q_{a_{n+1}} - p_{a_{n+1}}) p_m + b_{n+1} p_{a_{n+1}}
%   p_M + (n+1) \tilde{p}}{b_{n+1} \bigl((q_{a_{n+1}} - p_{a_{n+1}}) +
%   p_{a_{n+1}}\bigr) q + (n+1) \tilde{q}}.
% \]
Since this value tends to $r_{a_{n+1}}$ when $b_{n+1}$ tends to
infinity, $b_{n+1}$ can be chosen such that $|\Rat{\pi_{n+1}} -
r_{a_{n+1}}| < 1/(n+1)$. This way, $\lim_{n \to\infty} \Rat{\pi_n}$
agrees with $\lim_{n \to\infty} r_{a_n}$, that is $\lim_{n \to\infty}
\Rat{\pi_n} = r_\lambda$. The function $f$ is defined by `$f(n)$ is
the length of $\pi_n$'.
% the limit of $(\Rat{\pi_n})_n$ agrees with the limit of
% $(r_{a_n})_n$ when $n$ tends to $+\infty$, that is $r_\lambda$.
\end{proof}
\rule{\linewidth}{.5pt}
 \setcounter{theorem}{\value{Zeno}}
\begin{lemma}[Zeno case]
  \label{lm:Zeno}
  Given $\pi$ a reward-converging 
  % feasible
  run in $\A_{cp}^F$, it is decidable whether there exists a Zeno run
  $\rho$ such that $\pi$ is the contraction of $\rho$ and
  $\freq{\rho} = \Rat{\pi}$.
\end{lemma}
\begin{proof}%[of Lemma~\ref{lm:Zeno}]
  This proof is composed of two parts. First, we study how to detect
  if the reward-converging execution $\pi$ in $\A_{cp}^F$ is a
  contraction, that is if there exists an execution in $\A$ whose
  contraction is $\pi$. If $\pi$ is a contraction, then by
  Lemma~\ref{prop:linkAcpA2}, we can construct a Zeno execution $\rho$
  in $\A$ whose contraction is $\pi$ (in particular
  $\freq{\rho}\ge\Rat{\pi}$) and such that $\freq{\rho}$ is as near as
  we want from $\Rat{\pi}$. The second step is to decide if we can
  construct an \emph{optimal} $\rho$ in $\A$, that is a Zeno execution
  $\rho$ whose contraction is $\pi$ and such that
  $\freq{\rho}=\Rat{\pi}$.
  %In the second step of the proof, we give
  %necessary and sufficient conditions on $\pi$ for the existence of
  %such a $\rho$. 
  To do so, we see that we can study $\pi$
  independently on each fragment between its resets. For each of these
  fragments of $\pi$, we provide necessary and sufficient conditions
  for them to be exactly reflected in $\A$. Thus, there
   exists an optimal $\rho$ if and only if all of these fragments
   respect these conditions. Indeed, a tiny difference
  between the ratio of a fragment and the corresponding frequency on
  $\rho$ is never neglected because on the one hand $\rho$ is Zeno and
  on the other hand the contraction minimizes the frequency.

  % This proof is composed of two parts. First, we study how to detect
  % if the reward-converging execution $\pi$ in $\A_{cp}^F$ is a
  % contraction, that is if there exists an execution in $\A$ whose
  % contraction is $\pi$. If $\pi$ is a contraction, then by
  % Lemma~\ref{prop:linkAcpA2}, we can construct a Zeno execution $\rho$
  % in $\A$ whose contraction is $\pi$ and of frequency ($\ge\Rat{\pi}$)
  % as near as we want of $\Rat{\pi}$. We want now decide if we can
  % construct an optimal $\rho$ in $\A$, that is a Zeno execution $\rho$
  % whose contraction is $\pi$ and such that $\freq{\rho}=\Rat{\pi}$. In
  % the second step of the proof, we give the conditions on $\pi$ for
  % the existence of a such $\rho$. To do so, we see that we can study
  % $\pi$ independently on each fragment between resets. Thus, there
  % exists an optimal $\rho$ if and only if all of these fragments
  % respect the conditions to be mimiked exactly in $\A$. Indeed, a tiny
  % difference between the ratio of a fragment and the corresponding
  % frequency on $\rho$ is never neglected because on the one hand
  % $\rho$ is Zeno and on the other hand the contraction minimizes the
  % frequency.

  Let $\pi$ be a reward-converging run in $\A_{cp}^F$.  By
  definition of contractions, we easily verify whether $\pi$ is the
  contraction of some run in $\A$. It is the case if and only if the
  two following conditions are satisfied by $\pi$:
  \begin{enumerate}[$(i)$]
  \item from each state of $\pi$ of the form $(\ell, (i,i+1),
    \bullet\text{--})$ where $\ell \notin F$, $\pi$ follows an idling
    transition to $(\ell, (i,i+1), \text{--}\bullet)$;
  \item after each move $(\ell, (i,i+1),
    \bullet\text{--})\xrightarrow{1/1}(\ell, (i,i+1),
    \text{--}\bullet)$ where $\ell \in F$, $\pi$ goes to
    $(\ell,\{i+1\}, \bullet)$ by an idling transition.
  \end{enumerate}
  % on the one hand, from each state of $\pi$ of the form $(\ell,
  % (i,i+1), \bullet\text{--})$ where $\ell \notin F$, $\pi$ fires the
  % idling transition to $(\ell, (i,i+1), \text{--}\bullet)$ and on
  % the other hand, after each move $(\ell, (i,i+1),
  % \bullet\text{--})\xrightarrow{1/1}(\ell, (i,i+1),
  % \text{--}\bullet)$ where $\ell \in F$, $\pi$ goes to
  % $(\ell,\{i+1\}, \bullet)$ by an idling transition.
  We first consider two simple cases:
  \begin{itemize}
  \item If $\Rat{\pi}=0$, then there exists a Zeno run $\rho$ in
    $\A$ whose contraction is $\pi$, such that
    $\freq{\rho}=\Rat{\pi}=0$ if and only if there are only (non
    $F$)-locations along $\pi$. Otherwise, because of the Zenoness of
    $\rho$ and the positivity of all the delays, $\freq{\rho}>0$.
  \item If $\Rat{\pi}=1$ and $\pi$ is the contraction of some run
    $\rho$, then $\freq{\rho}\le\Rat{\pi}$ hence by definition of the
    contraction $\freq{\rho}=\Rat{\pi}$.
  \end{itemize}

  We now assume that $0<\Rat{\pi}<1$. For any run $\rho$ such that
  $\pi$ is the contraction of $\rho$, $\Rat{\pi} \leq
  \freq{\rho}$. Thus in order to have equality we will have to
  minimize delays spent in $F$-locations when building $\rho$.
%
%   , and that there exists a
%   \fbox{Zeno}\pat{ce serait bien d'enlever le Zeno ici} run
%   $\rho$ in $\A$ whose contraction is $\pi$.  Since $\pi$ is the
%   contraction of $\rho$, $\Rat{\pi} \leq \freq{\rho,F}$.
%
  % Let us assume there exists a Zeno run $\rho$ in $\A$ whose
  % contraction is $\pi$. Since $\pi$ is the contraction of $\rho$,
  % $\Rat{\pi} \leq \freq{\rho,F}$. We first treat two simple cases:
  % \begin{itemize}
  % \item If $\Rat{\pi}=0$ there exists a Zeno run $\rho_0$ in
  %   $\A$ whose contraction is $\pi$, such that
  %   $\freq{\rho_0}=\Rat{\pi}=0$ if and only if there are only (non
  %   $F$)-locations along $\pi$. Otherwise, because of the Zenoness
  %   of $\rho_0$ and the positivity of all the delays,
  %   $\freq{\rho_0}>0$.
  % \item If $\Rat{\pi}=1$, then $\freq{\rho}\le\Rat{\pi}$ hence by
  %   definition of the contraction $\freq{\rho}=\Rat{\pi}$.
  % \end{itemize}
  %
%   In the other cases, a necessary condition for the
%   equality to hold is to minimize the frequency when building
%   $\rho$. 
%   Since $\pi$ is reward-converging (resp. $\rho$ Zeno), the
%   cost along $\pi$ (resp. the total time elapsed in $F$-locations
%   along $\rho$) is finite, and the time elapsed in $F$-locations along
%   $\rho$ should be minimized on each fragment of the run separated by
%   a reset of the clock.
%

%  \fbox{Begin here} 

%  \fbox{End here}

  Note that resets of the clock are not reflected in the corner-point
  abstraction, but could easily be. Therefore, in the sequel, we
  abusively speak of resets in $\pi$. In the rest of the proof, we
  will work independently on the reset-free parts of $\pi$, let us
  shortly argue why this reasoning holds in this
  context.  Let $\rho$ be a path containing finitely
  many resets and such that
  $\rho=\rho^1\xrightarrow{a_1}\rho^2\xrightarrow{a_2}\cdots\rho^n$
  where all the $\rho^i$'s are reset-free. Let $\pi$ be the
  contraction of $\rho$, let us notice that $\pi$ can be written as
  $\pi^1\xrightarrow{a_1}\pi^2\xrightarrow{a_2}\cdots\pi^n$ where
  $\pi^i$ corresponds to the contraction of $\rho^i$ (for $1 \le i \le
  n$).
  % Assume that $\freq{\rho_i}=\frac{a_i}{b_i}$ and
  % $\Rat{\pi_i}=\frac{A_i}{B_i}$. We thus have that $\freq{\rho} =
  % \frac{\sum_i{a_i}}{\sum_i{b_i}}$ and $\freq{\rho} =
  % \frac{\sum_i{A_i}}{\sum_i{B_i}}$.
  By definition of the contraction, we know that $\Rat{\pi^i} \le
  \freq{\rho^i}$ for each $1 \le i \le n$.  In particular, if there
  exists $i$ such that $\Rat{\pi^i} \ne \freq{\rho^i}$, it is
  necessarily the case that $\Rat{\pi^i} < \freq{\rho^i}$. In this
  situation, it is clearly impossible to have that $\Rat{\pi} =
  \freq{\rho}$.

  Let us now detail the different cases that can arise:

  \begin{itemize}
  \item Assume there is an unbounded number of resets along $\pi$, and
    assume $\rho$ is a run such that $\pi$ is the contraction of
    $\rho$ and that $\Rat\pi = \freq{\rho}$. Because $\pi$ is
    reward-converging, after some point, all rewards are $0$. Hence,
    only the prefix of $\pi$ before this point contributes to the
    ratio. By construction of the contraction, the ratio of $\pi$ up
    to this point (say it is $a/b$, which is by assumption $<1$) is
    smaller or equal to the frequency of $\rho$ up to that point,
    there must be equality as $\Rat\pi = \freq{\rho}$.  We will
    prove now that from this point on, $\pi$ only visits
    $F$-states. Due to condition~$(i)$ above, if a state
    $(\ell,(i,i+1),\bullet\text{--})$ is visited, then $\ell \in F$
    (otherwise there should be a reward of $1$ on the next edge). If
    $(\ell,\{i\},\bullet)$ is visited, then this is because we have
    just seen a reset, thus $i=0$, and the next move should be an
    idling move (time is strictly increasing), leading to
    $(\ell,(0,1),\bullet\text{--})$. For the same reason we also get
    that $\ell \in F$.
    % Assuming $\pi$ is a contraction, necessarily $\pi$ ends in
    % $F$-states.
    Since zero-delays are forbidden in $\A$, the run $\rho$ 
    % any run $\rho$ in $\A$ with contraction $\pi$
    necessarily spends some positive delay in the $F$-locations. As a
    consequence the equality $\Rat{\pi} = \freq{\rho}$ cannot hold
    because $\Rat\pi = a/b < (a+c)/(b+c) = \freq{\rho}$ for some
    $c>0$ (the duration of the tail of $\rho$). There is no run
    $\rho$ such that $\pi$ is the contraction of $\rho$ and $\Rat{\pi}
    = \freq{\rho}$.
  \item Assume now that the number of resets along $\pi$ is
    finite. The run $\pi$ can be split into fragments between resets.
    As explained at the beginning of the proof, each fragment has to
    be fairly reflected in $\rho$ to not hinder the optimality of the
    construction.  There is finite fragments and an infinite one. The
    finite fragments of $\pi$ are treated as follows.  The two
    conditions that they have to satisfy in order to not block the
    construction of an optimal $\rho$ are: 
    % \pat{je suis perdue a partir de la}
    \begin{itemize}
    \item a discrete transition of $\pi$ going from an $F$-location
      (resp. $\overline{F}$-location) to an $\overline{F}$-location
      (resp. $F$-location) has to be fired from a punctual region
      ($\{i\}$) or from the region $\bot$, moreover this discrete
      transition has to occur after a sub-fragment in $F$-location
      (resp. $\overline{F}$-location) whose reward is positive.
    \item the same way, the end of the fragment has to go from a
      punctual region or from the region $\bot$.
  \end{itemize}
  These conditions are necessary and sufficient, the proof is based on
  the disjunction of cases of the proof of Lemma~\ref{lem-cases}.  In
  the disjunction, we seen that for every $F$-fragment
  (resp. $\bar{F}$-fragment) of $\pi$, the ratio is smaller or equal
  to the frequency of the corresponding fragment in $\rho$.
  Furthermore, the equality holds only for the case $3$ and the cases
  $4.2$ and $5.2$ whether $v_{n-i}$ and $v_n+\tau_n$ belong to
  $\mathbb{N}$. Then every sub-fragment ($F$-fragment or
  $\bar{F}$-fragment) of a finite fragment (separated by resets) has
  to correspond to one of these cases. This is clearly equivalent to
  the two above conditions.
%  we observe separately the fragments with only
%  $F$-locations and with only $\overline{F}$-locations. For the
%  construction of an optimal $\rho$, the reward of each $F$-fragment
%  has to be positive, hence lead to a punctual region or an unbounded
%  one because $\pi$ is a contraction, moreover the exact reward of
%  each $\overline{F}$-fragment has to be possibly elapsed.
  % These fragments have to be minimized independently.
  % In particular, to respect the region along $\pi$,
%  Remembering the proof of Lemma~\ref{lem-cases}, these conditions
%  correspond to the cases in which we can preserve the equality
%  between the ratio of the prefix of $\pi$ and the corresponding
%  frequency.
 % \fbox{reformuler? d\'etailler...}
  
  If no finite fragment hinders the minimization, the infinite suffix of
  $\pi$ without resets has to be considered.  The conditions for this
  infinite fragment are:
  \begin{itemize}
  \item as above, a discrete transition of $\pi$ going from an
    $F$-location (resp. $\bar{F}$-location) to an $\bar{F}$-location
    (resp. $F$-location) has to be fired from a punctual region or
    from the region $\bot$ after a sub-fragment in $F$-location
    (resp. $\bar{F}$-location) whose reward is positive.
  \item moreover, the fragment has to end in $\bar{F}$-locations.
  \end{itemize}
  First, in this infinite fragment the reward is finite. Hence, if
  there is an unbounded alternation of $F$- and $\bar{F}$-locations,
  then there is (at least) an $F$-fragment of reward $0$
  and there is no optimal $\rho$.  Else, as above, the first condition
  is necessary and ensures the good behaviors before stabilization of
  $\pi$ in $F$ or $\bar{F}$. Moreover, if $\pi$ ends in $\bar{F}$,
  then $\rho$ can be constructed choosing delays to minimize the
  frequency by tending to the reward corresponding in
  $\pi$. Otherwise, $\pi$ ends in $F$ with a pointed region
  $((i,i+1),\alpha)$ or $(\bot,\alpha_\bot)$:
  \begin{itemize}
  \item $((i,i+1),\bullet\text{--})$: the sum of the delays can at
    best tend to $i+\varepsilon$,
  \item $((i,i+1),\text{--}\bullet)$: either the last fragment in
    $\bar{F}$ does not end in a suitable region, or $\pi$ is not a
    contraction,
  \item $(\bot,\alpha_\bot)$: the sum of the delays can at best tend
    to the reward of the fragment $+\varepsilon$ because this reward
    has to be always smaller or equal to the sum of the delay to $\pi$
    be the contraction of the built $\rho$.
  \end{itemize}
\end{itemize}
To conclude, a careful inspection of the corner-point abstraction
allows us to decide whether there exists a run $\rho$ in $\A$
whose contraction is $\pi$ and such that $\Rat{\pi} = \freq{\rho}$.
\end{proof}

A similar lemma holds for dilatations:
\begin{lemmastyleS}
  \label{lm:Zeno-bis}
  Given $\pi$ a reward-converging run of $\A_{cp}^F$, it is
  decidable whether there exists a Zeno run $\rho$ such that
  $\pi$ is the dilatation of $\rho$ and $\freq{\rho} = \Rat{\pi}$.
\end{lemmastyleS}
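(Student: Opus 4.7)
The plan is to mirror the proof of Lemma~\ref{lm:Zeno}, exploiting the informal duality between $F$-contraction and $F$-dilatation (the latter coincides with the $\bar{F}$-contraction as observed after Lemma~\ref{prop:linkAAcp}). I would first give the syntactic characterization of those reward-converging runs $\pi$ in $\A_{cp}^F$ that arise as the dilatation of some run of $\A$, by conditions dual to $(i)$--$(ii)$ in the proof of Lemma~\ref{lm:Zeno}: $(i')$ from each state of the form $(\ell,(i,i+1),\bullet\text{--})$ with $\ell\in F$, $\pi$ follows the idling transition to $(\ell,(i,i+1),\text{--}\bullet)$; and $(ii')$ after each idling move $(\ell,(i,i+1),\bullet\text{--})\to(\ell,(i,i+1),\text{--}\bullet)$ with $\ell\notin F$, $\pi$ proceeds to $(\ell,\{i+1\},\bullet)$.

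The boundary cases are handled symmetrically. Since for every $\rho$ whose dilatation is $\pi$ we have $\freq{\rho}\le\Rat{\pi}$ by Lemma~\ref{prop:linkAAcp}, the case $\Rat{\pi}=0$ is immediate, while $\Rat{\pi}=1$ forces $\pi$ to visit only $F$-locations---otherwise the prohibition of zero-delays in $\A$ implies that a positive proportion of the ratio-contributing time is spent outside $F$, preventing equality.

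For the generic case $0<\Rat{\pi}<1$, achieving $\freq{\rho}=\Rat{\pi}$ requires \emph{maximizing} the delays spent in $F$-locations. As in Lemma~\ref{lm:Zeno}, the key observation is that every reset-free fragment of $\pi$ contributes independently to both $\Rat{\pi}$ and $\freq{\rho}$---any strict deficit on a single fragment is irrecoverable because $\rho$ is Zeno---so optimality must hold locally. The proof then splits into the infinite-reset and the finite-reset cases, and checks fragment by fragment dual local conditions: discrete transitions crossing the boundary between $F$ and $\bar F$ must be fired from a punctual region or from $\bot$, after a sub-fragment of positive reward in the matching kind of locations. When only finitely many resets occur, the infinite tail must stabilize in $F$ with a terminal pointed region among $((i,i+1),\bullet\text{--})$, $((i,i+1),\text{--}\bullet)$, or $(\bot,\alpha_\bot)$, each subcase yielding an explicit bound on the achievable total delay in $F$.

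The main difficulty lies in this tail analysis under the $\limsup$ semantics. One cannot simply apply Lemma~\ref{lm:Zeno} to $\A$ equipped with $\bar F$ as accepting set, because the frequencies of a run $\rho$ with respect to $F$ and to $\bar F$ do not complement to $1$ in general when only a genuine $\limsup$ is available. Instead the case distinction on the terminal pointed region of $\pi$ must be carried out directly, checking in each situation whether a Zeno sequence of delays in $\A$ can be tuned so that the total delay accumulated in $F$ approaches the cost along $\pi$ from below without overshooting it; a careful inspection symmetric to that of Lemma~\ref{lm:Zeno} then yields the decidable characterization asserted in the lemma.
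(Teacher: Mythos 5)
The paper itself does not spell out a proof of this lemma: right after proving Lemma~\ref{lm:Zeno} it merely remarks that ``a similar lemma holds for dilatations,'' so the implicit argument is exactly the one you carry out, namely a term-by-term dualization obtained by swapping the roles of $F$ and $\bar F$ in the contraction analysis. Your conditions $(i')$--$(ii')$ correctly mirror $(i)$--$(ii)$ (a dilatation lets time elapse at $\bullet\text{--}$ in an $F$-location, and enters $\bullet\text{--}\to\text{--}\bullet$ in a $\bar F$-location only when passing to $\{i+1\}$), the boundary cases $\Rat\pi\in\{0,1\}$ are handled with the correct inequality direction $\freq\rho\le\Rat\pi$ coming from Lemma~\ref{prop:linkAAcp}, and the requirement that the infinite reset-free tail stabilize in $F$-locations is the correct dual of the requirement that it stabilize in $\bar F$-locations.

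One remark on your technical digression. You argue that one cannot reduce Lemma~\ref{lm:Zeno-bis} to Lemma~\ref{lm:Zeno} applied with $\bar F$ as accepting set because $\limsup$ frequencies with respect to $F$ and $\bar F$ need not complement to $1$. For arbitrary (non-Zeno) runs this caution is warranted, but it is actually a red herring in the present setting: for a Zeno run $\rho$ the total duration $\sum_i\tau_i$ converges, so both $\sum_{\ell_i\in F}\tau_i$ and $\sum_{\ell_i\notin F}\tau_i$ converge and the frequency is a genuine limit, whence $\freq_F\rho+\freq_{\bar F}\rho=1$; similarly, a reward-converging $\pi$ has accumulated cost and reward both converging, so $\Rat{\pi}$ is a true limit and $\Rat_F\pi+\Rat_{\bar F}\pi=1$. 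Since the $F$-dilatation of $\rho$ is by definition the $\bar F$-contraction of $\rho$ in $\A_{cp}^{\bar F}$, the equality $\freq_F\rho=\Rat_F\pi$ is equivalent to $\freq_{\bar F}\rho=\Rat_{\bar F}\pi$, and the lemma in fact follows directly from Lemma~\ref{lm:Zeno} applied to $\A_{cp}^{\bar F}$. This shortcut would have saved you the explicit tail case-analysis; your direct dual argument, however, is also sound and matches the paper's intended route.
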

\rule{\linewidth}{.5pt}
 \setcounter{theorem}{\value{setfreq}}
\begin{theorem}
  \label{th:setfreq}
  Let $\mathcal{F}_{\A} = \{\freq{\rho} \mid \rho\ \text{run of}\
  \A\}$ be the set of frequencies of runs in $\A$. We can compute
  $\inf \mathcal{F}_{\A}$ and $\sup \mathcal{F}_{\A}$. Moreover we can
  decide whether these bounds are reached or not. Everything can be
  done in NLOGSPACE.
  % polynomial time. \pat{complexite a verifier et affiner}
\end{theorem}
\begin{proof}%[of Theorem~\ref{th:setfreq}]
  With each run $\pi$ of the corner-point abstraction
  $\A_{cp}^F$ is associated the SCC it ends up in. Let us argue that
  given $C$ an SCC of $\A_{cp}^F$:
  \begin{enumerate}
  \item the infimum of the frequencies of runs of $\A$ whose
    contraction ends up in $C$ can be computed, and
  \item it is decidable whether the bound is reached.
  \end{enumerate}
  First of all, if there is no reward-converging (simple) cycle in
  $C$, then all runs in $\A_{cp}^F$ ending up in $C$ are
  reward-diverging. For each such run $\pi$, there exists a
  non-Zeno run $\rho$ in $\A$ with $\Rat{\pi} = \freq{\rho}$,
  thanks to Lemma~\ref{prop:linkAcpA1}.  In this case,
  Lemma~\ref{lm:non-Zeno} allows us to conclude.

  Assume now that $C$ contains a reward-converging cycle, and let
  $S_{rc}$ be the set of states in $\A_{cp}^F$ that belong to a
  reward-converging cycle. The set $\mathsf{S}$ of cycle-free finite
  runs in $\A_{cp}^F$ ending up in a state of $S_{rc}$ is finite and
  therefore contains a run $\pi_{\min}$ with minimal ratio $r_{\min}$.
  The infimum $r^*$ of the ratios of runs of $\A_{cp}^F$ ending up in
  $C$ is thus $\min(r_{\min},m)$ where $m$ is the minimal ratio of
  reward-diverging simple cycles co-reachable from $C$. Moreover, it
  is also the infimum of the frequencies of runs of $\A$ whose
  contraction ends up staying in $C$.  This bound is reached by a
  non-Zeno run of $\A$ whose contraction ends up staying in $C$ if and
  only if $m=r^*$ and there is a reward-diverging cycle of ratio $m$
  in $C$ (Lemma~\ref{lm:non-Zeno}). We do not need to consider
  non-Zeno runs whose contraction is reward-converging because their
  frequency is necessarily equal to $1$.  On the other hand, the
  infimum may be reached by a Zeno run whose contraction ends up
  staying in $C$. This contraction may be reward-converging or
  diverging. We distinguish three cases: $r_{\min} > m$, $r_{\min} <m$
  and $r_{\min} = m$.
\begin{description}
\item[Case $r_{\min} > m$] The infimum of ratios of runs in
  $\A_{cp}^F$ ending up in $C$ is then $m$: iterating the cycle of
  minimal ratio $m$ many times and then going to a 
    reward-converging cycle in $C$ yields a ratio which tends to
  this infimum. However, the infimum of the ratios is not reached by
  runs of $\A_{cp}^F$ ending up in $C$. A fortiori, frequency
  $m$ cannot be realized by a run in $\A$ whose contraction is
  reward-converging and ends up in $\A_{cp}^F$.
\item[Case $r_{\min} < m$] The infimum of ratios of runs in
  $\A_{cp}^F$ ending up in $C$ is $r_{\min}$. Note that considering
  only runs whose contraction is reward-converging suffices.  Indeed,
  a contraction reward-diverging of a Zeno run is necessarily of ratio
  $0$, therefore its existence would imply that $m=0$.  Using the
  proof of Lemma~\ref{lm:Zeno}, we can decide if there exists a
  reward-converging run $\pi$ of $\A_{cp}^F$ ending up in $C$ such
  that there exists a Zeno run $\rho$ whose contraction is $\pi$ and
  such that $\freq{\rho} = \Rat{\pi}=r_{\min}$.
  
 In this case, $\pi$ can be decomposed either as follows:
 \begin{multline*}
%\[
1):\;\pi=\pi_0.\pi_p.(\ell^{\bar{F}},\{i\},\bullet) \xrightarrow{0/0} (\ell^{\bar{F}},(i,i+1),\bullet\text{---})\\
\xrightarrow{0/1} (\ell^{\bar{F}},(i,i+1),\text{---}\bullet) \bigl( \xrightarrow{0/0}(\ell_i^{\bar{F}},(0,1),\text{---}\bullet)\bigr)_{i\in \mathbb{N}}
%\]
\end{multline*}
or as follows :
\[
2):\;\pi=\pi_0.\pi_p.(\ell^{\bar{F}},\bot,\alpha_\bot) \bigl( \xrightarrow{0/0}(\ell_i^{\bar{F}},\bot,\alpha_\bot)\bigr)_{i\in \mathbb{N}}
\]
where, first $\pi_0$ ends up resetting the clock and its fragments
between resets satisfy the good conditions that is the ones to be a
contraction and the ones of the proof of the Lemma~\ref{lm:Zeno} over
finite fragments, second the suffix contains only $\bar{F}$-locations
and no resets and finally the factor $\pi_p$ satisfies the good
conditions over the beginning of a fragment. Note that $\pi_p$ can
ends in an $F$-location or an $\bar{F}$-location.  If there exists
such a run $\pi$ taking several cycles in the suffix, the same
run $\tilde{\pi}$ whose suffix simply consists in the infinite
iteration of the first cycle taken by $\pi$ satisfies the required
properties. Therefore, the set of states of $\A_{cp}^F$ from which a
suitable suffix runs is computable. The set of cycle-free prefixes
satisfying the conditions and ending up in this set is also
computable. Moreover, adding cycle iterations to the prefix cannot
help to meet the conditions.  As a consequence, the existence of such
a $\pi$ is decidable.
\item[Case $r_{\min} =
  m$] %\amelie{lemme trivial, ecrit dans draft, pas dans ICALP... mais pas besoin ˆ mon avis}\fbox{Thanks to Lemma~\ref{lem:cas-part} (first item),}
  The infimum of the ratios can be the frequency of some Zeno
  run whose contraction is reward-diverging only if $r_{\min} =
  m=0$.  In this case, the infimum is reached by a Zeno run
  whose contraction ending up in $C$ is reward-diverging if and only
  if there is a reward-diverging contraction ending up in $C$ visiting
  only locations of $\bar{F}$, resetting infinitely often the clock
  and visiting finitely often the region $\{1\}$. On the other hand,
  the Zeno runs with reward-converging contraction can be
  treated similarly to the case $r_{\min} < m$ since the introduction
  of a finite number of minimal or reward-converging
  cycles in the minimal prefix cannot help to reach the bound.
\end{description}
We can thus decide whether the infimum is reached for a run of $\A$ by considering 
the question in subsets of runs (sorted with respect to the SCC in which the contraction ends up) forming a partition of the set of the runs of $\A$.

Similarly given $C$ an SCC of $\A_{cp}^F$:
\begin{enumerate}
\item the supremum of the frequencies of runs of $\A$ whose
  dilatation ends up in $C$ can be computed, and
\item it is decidable whether the bound is reached.
\end{enumerate}

Let us now conclude the proof. If the infimum (resp. supremum) of the
frequencies of runs in $\A$ is reached by some run, then
the ratio of its contraction (resp. dilatation) is equal to this
infimum (resp. supremum) frequency. By considering the bounds of the
sets of frequencies for runs of $\A$ whose contraction
(resp. dilatation) end up in each SCC, the bounds of the set of
frequencies of runs of $\A$ are respectively the minimum and the
maximum of these latter. Moreover, the two bounds are reached if and
only if they are reached for some SCC.
\end{proof}

%\subsection*{Proofs for Section~\ref{section:decid}}

\subsection*{Proofs for Section~\ref{section:univ}}
\label{app:univ}

The different variants of the universality problems for timed automata
under frequency-acceptance are incomparable, as illustrated by the
following examples:
\begin{figure}
\begin{center}
\subfigure[$\A_1$.]{
\begin{tikzpicture}
\everymath{\scriptstyle}
\draw(-3.8,0) node (init) {};
\draw(-2,1) node (init') {};
\draw(-3,0) node [circle,draw,inner sep=1.5pt] (l2) {$\ell_0$};
\draw(-2,0) node [circle,draw,inner sep=1.5pt,fill=black] (l0) {$\textcolor{white}{\ell_1}$};
\draw(-1,0) node [circle,draw,inner sep=1.5pt] (l1) {$\ell_2$};

\draw[-latex'] (init) -- (l2) node[pos=.5,above]{};
\draw[-latex'] (init') -- (l0) node[pos=.5,above]{};
\draw[-latex'] (l2) -- (l0) node[pos=.5,above]{$\Sigma$};
\draw[-latex'] (l2) .. controls +(120:1cm) and +(60:1cm) .. (l2) node[pos=.5,above]{$\Sigma$};
\draw[-latex'] (l0) -- (l1) node[pos=.5,above]{$\Sigma$};
\end{tikzpicture}
\label{fig:Afini}}
\hspace{1cm}
\subfigure[$\A_2$.]{
\begin{tikzpicture}
\everymath{\scriptstyle}
\draw(-3.8,0) node (init) {};
\draw(-3,0) node [circle,draw,inner sep=1.5pt] (l2) {$\ell_0$};
\draw(-1.5,0) node [circle,draw,inner sep=1.5pt,fill=black] (l0) {$\textcolor{white}{\ell_1}$};

\draw[-latex'] (init) -- (l2) node[pos=.5,above]{};
\draw[-latex'] (l2) -- (l0) node[pos=.5,above]{$\Sigma,\{x\}$};
\draw[-latex'] (l0) .. controls +(120:1cm) and +(60:1cm) .. (l0) node[pos=.5,above]{$x\le1,\Sigma$};
\draw[-latex'] (l2) .. controls +(120:1cm) and +(60:1cm) .. (l2) node[pos=.5,above]{$\Sigma$};
\end{tikzpicture}
\label{fig:AZ}}
\hspace{1cm}
\subfigure[$\A_3$.]{
\begin{tikzpicture}
\everymath{\scriptstyle}
\draw(-3.8,0) node (init) {};
\draw(-3,0) node [circle,draw,inner sep=1.5pt] (l2) {$\ell_0$};
\draw(-1.5,0) node [circle,draw,inner sep=1.5pt,fill=black] (l0) {$\textcolor{white}{\ell_1}$};

\draw[-latex'] (init) -- (l2) node[pos=.5,above]{};
\draw[-latex'] (l2) -- (l0) node[pos=.5,above]{$x\ge1,\Sigma$};
\draw[-latex'] (l0) .. controls +(120:1cm) and +(60:1cm) .. (l0) node[pos=.5,above]{$\Sigma$};
\draw[-latex'] (l2) .. controls +(120:1cm) and +(60:1cm) .. (l2) node[pos=.5,above]{$\Sigma$};
\end{tikzpicture}
\label{fig:AnZ}}
\caption{Counterexamples for the comparison between universality problems.}\label{fig:As}
\end{center}
\end{figure}
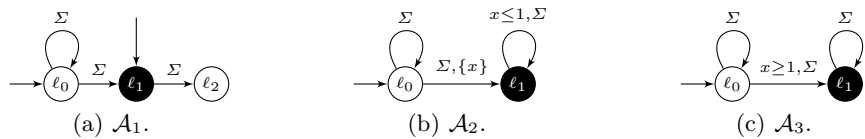

%\thomas{ai reduit. OK?} 
Let us explain why the universality problems
with positive-frequency acceptance are not comparable when considering
respectively finite timed words, Zeno timed words or non-Zeno timed
words.  The three timed automata of Fig.~\ref{fig:As} illustrate this. The timed automaton
$\A_1$ is universal for finite timed words but neither for the Zeno
ones or the non-Zeno ones.  In the same way, $\A_2$ and $\A_3$ are
universal respectively for Zeno timed words and non-Zeno timed words
but not for the other types of words.

\noindent\rule{\linewidth}{.5pt}
 \setcounter{theorem}{\value{ucomp}}
 \begin{theorem}\label{ucomp:infinite}
  % \nat{je couperais bien ce th en plusieurs pour mettre en valeur
  %   les r\'esultats}
  The universality problem for infinite (resp. non-Zeno, Zeno) timed
  words 
  % with positive frequency 
  in a one-clock timed automaton is non-primitive recursive. If two
  clocks are allowed, this problem is undecidable.
\end{theorem}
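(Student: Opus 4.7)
The plan is to reduce from the universality problem for finite timed words, which is non-primitive recursive for one-clock timed automata~\cite{OW05} and undecidable for timed automata with two or more clocks~\cite{AD-tcs94}. Given a timed automaton $\A$ over $\Sigma$ accepting finite words, we build a timed automaton $\B$ over the extended alphabet $\Sigma \cup \{c\}$ (with $c \notin \Sigma$), equipped with positive-frequency acceptance, as sketched in Figure~\ref{fig:redinfinite}. Concretely, $\B$ embeds a copy of $\A$ whose locations are all excluded from $F_\B$; it introduces a fresh location $l_{1c} \in F_\B$ carrying unguarded self-loops on every letter of $\Sigma \cup \{c\}$, reachable from each former accepting location of $\A$ via an unguarded reset-free $c$-transition; finally, $\B$ has an additional initial location $l_{0c} \in F_\B$ with unguarded self-loops on $\Sigma$ only. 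The number of clocks is preserved.

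For the ($\Rightarrow$) direction, assume $\A$ is universal for finite timed words, and consider any infinite timed word $w$ over $\Sigma \cup \{c\}$. If $w$ contains no occurrence of $c$, then the run looping in $l_{0c}$ accepts it with frequency $1$. Otherwise, let $i$ be the index of the first $c$ in $w$; universality provides a run of $\A$ on the finite prefix of $w$ ending just before this $c$ that reaches an accepting location of $\A$, which $\B$ extends by firing the $c$-transition into $l_{1c}$ and looping on the suffix, yielding positive frequency. This argument preserves Zenoness, so the forward direction covers the three variants simultaneously. For ($\Leftarrow$), given a finite timed word $u$ over $\Sigma$ with last timestamp $t$, we extend it to an infinite timed word $\hat u$ by appending letters $c$ at times $t+1, t+2, \ldots$ in the non-Zeno case and at times $t+1-2^{-k}$ ($k \geq 1$) in the Zeno case (either choice also settles the unrestricted variant). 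Since $l_{0c}$ has no outgoing $c$-transition, any accepting run of $\B$ on $\hat u$ must remain in the $\A$-copy until the first $c$ is fired, at which moment it must be located in a former accepting location of $\A$. Thus $\A$ accepts $u$, showing that $\A$ is universal.

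The main obstacle is to ensure that, in the backward direction, no shortcut in $\B$ bypasses a genuine accepting run of $\A$ on $u$; this is enforced by restricting $l_{0c}$'s self-loop to $\Sigma$ alone and by originating all $c$-edges solely at former accepting locations of $\A$, so that the first $c$ of $\hat u$ can only be consumed via an accepting state of $\A$. The resulting equivalence transfers both lower bounds---non-primitive-recursiveness for one clock and undecidability for two or more---to each of the three positive-frequency universality variants, proving Theorem~\ref{ucomp:infinite}.
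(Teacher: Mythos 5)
Your construction and equivalence argument coincide exactly with the paper's reduction (Fig.~\ref{fig:redinfinite}): a copy of $\A$ with no accepting locations, a fresh accepting sink $l_{1c}$ reached by $c$-edges from $\A$'s former accepting states, and a separate accepting initial location $l_{0c}$ looping on $\Sigma$ only. The paper leaves the two directions as "easy to check"; you have filled in precisely those details, including the case split on occurrences of $c$ and the Zeno/non-Zeno choice of timestamps for the appended $c$'s, so this is the same approach.
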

\begin{proof}%[of Theorem~\ref{ucomp:infinite}]
  We want to check whether $\A$ is universal for Zeno timed words with
  positive frequency. We first check that every Zeno timed word can be
  read in $\A$: this is equivalent to checking that all finite timed
  word can be read in $\A$, and this can be done~\cite{OW04}.
  % , this can be done in ***\pat{utile~?  reference + complexite}.
  Thus, w.l.o.g. we assume that $\A$ reads all Zeno timed words, and
  we now only need to take care of the accepting condition.

  From $\A$ we build the timed automaton $\B$ composed of two copies
  of $\A$, one with a tag $\mathsf{w}$ and one with a tag
  $\mathsf{b}$. The symbol $\mathsf{w}$ (resp. $\mathsf{b}$) is for
  white (resp. black).  If $\ell' \in F$, then for all transitions
  leading to $\ell'$, we will have transitions from the
  $\mathsf{w}$-copy to the $\mathsf{b}$-copy. All others are
  maintained.  In $\B$ once the $\mathsf{b}$-copy is entered, it is
  never left, and for Zeno timed words, the B\"uchi condition can be
  reduced to ``enter the $\mathsf{b}$-copy and read the rest of the
  word''.

  More formally, we define $\B = (L',L'_0,F',\Sigma,\{x\},E')$ as
  follows:
  \begin{itemize}
  \item $L' = L \times \{\mathsf{w},\mathsf{b}\}$, $L'_0 = L_0 \times
    \{\mathsf{w}\}$ and $F' = L \times \{\mathsf{b}\}$;
\item if $\ell \xrightarrow{g,a,X'} \ell'$ is in $E$, then the
  following edges are in $E'$:
  \begin{itemize}
  \item $(\ell, \mathsf{b}) \xrightarrow{g,a,X'} (\ell', \mathsf{b})$
    and $(\ell,\mathsf{w}) \xrightarrow{g,a,X'} (\ell', \mathsf{w})$;
  \item $(\ell,\mathsf{w}) \xrightarrow{g,a,X'} (\ell', \mathsf{b})$
    if $\ell' \in F$;
    % \item $(\ell,\mathsf{w}) \xrightarrow{g,a,X'} (\ell',
    %   \mathsf{w})$ if $\ell' \not\in F$;
    % \item $(\ell, \mathsf{b}) \xrightarrow{g,a,X'} (\ell',
    %   \mathsf{b})$.
  \end{itemize}
\end{itemize}
% A configuration of $\B$ is formally a pair $((\ell,\mathsf{tag},v))$
% where $\ell \in L$, $\mathsf{tag} \in \{\mathsf{w},\mathsf{b}\}$ and
% $v$ is a value for clock $x$, but we will simply write
% $(\ell,\mathsf{tag},v)$ in the following.

The correctness of the transformation is stated in the following
straightforward lemma.

\begin{lemmastyleS}
  $\A$ is universal with positive frequency for Zeno timed words iff
  $\B$ is universal with positive frequency for Zeno timed words.
\end{lemmastyleS}

% \begin{proof}
%   Assume that $w$ is a Zeno timed word accepted by $\A$. Then there is
%   a run $\varrho = (\ell_0,v_0) \rightarrow (\ell_1,v_1) \rightarrow
%   \dots$ which reads $w$ and visits $F$ (that is, there exists some
%   index $i$ such that $\ell_i \in F$). Then the run $\varrho' =
%   (\ell_0,\mathsf{w},v_0) \rightarrow \dots \rightarrow
%   (\ell_{i-1},\mathsf{w},v_{i-1}) \rightarrow (\ell_i,\mathsf{b},v_i)
%   \rightarrow (\ell_{i+1},\mathsf{b},v_{i+1}) \rightarrow \dots$ is an
%   accepting run for word $w$ in $\B$.  The converse also holds, and
%   can be proven similarly.
% \end{proof}

% \fbox{TB: Do we want to write the proof of the above lemma?}

In the following we will write configurations of $\B$ as triples
$(\ell,\mathsf{tag},v)$ where $\ell$ is a location of $\A$,
$\mathsf{tag} \in \{\mathsf{w},\mathsf{b}\}$ and $v$ is a value for
the clock. Furthermore we let $M$ be the maximal constant the clock is
compared with in $\B$, and if $0 \le c<M$, $I_c$ denotes the interval
$(c;c+1)$ whereas $I_M$ denotes the interval $(M;+\infty)$.

% We assume $M$ is the maximal constant the clock $x$ is compared to in
% $\mathcal{A}$.

%\pat{Thomas, je garde ma formulation, l'autre ne convient pas} 
An infinite execution $(\ell_0,\mathsf{tag}_0,v_0) \rightarrow
(\ell_1,\mathsf{tag}_1,v_1) \dots \rightarrow
(\ell_p,\mathsf{tag}_p,v_p) \rightarrow \dots$ in $\B$
\emph{stabilizes} after $n_0$ steps whenever there exists some integer
$c$ such that for every $n \ge n_0$, either $v_n \in I_c$, or $v_n=0$,
or $v_n \in I_0$. Note that every infinite run which reads a Zeno word
stabilizes.

\medskip Given a Zeno timed word $w$, our aim is to analyze all runs
that read $w$, so that we will be able to detect whether $w$ is
accepted or not. Therefore we need to be able to compute a uniform
bound after which all runs which read $w$ stabilizes. This is the aim
of the next lemma.

% Our aim is to check for non-universality of $\mathcal{A}$. Therefore
% we need to be able to analyze all executions that read a given Zeno
% timed word $w$. Hence we need to compute a uniform bound after which
% any execution that reads $w$ stabilizes.

\begin{lemmastyleS}[Uniform stabilization]
  \label{lemma-stabilize}
  Let $w$ be a Zeno timed word (which can be read in $\B$ by
  assumption).  Then, there exists some integer $n_0$ such that every
  execution that reads $w$ stabilizes after $n_0$ steps.
\end{lemmastyleS}

To prove this lemma we need the notion of duration of a timed word. If
$w = (t_0,a_0) \dots (t_k,a_k)$ is a finite timed word, the duration
of $w$ is $t_k$ and is denoted $\mathit{duration}(w)$. If $w$ is an
infinite timed word, we let $w_{\le n}$ be the $n$-th prefix of $w$,
then the duration of $w$ is $\mathit{duration}(w) = \lim_{n\to\infty}
\mathit{duration}(w_{\le n})$. Note that this is a finite value if and
only if $w$ is a Zeno timed word.

\begin{proof}
%   \fbox{\`a prouver, mais j'y crois :-)}
%
%   \noindent \fbox{C'est le fait que $n_0$ soit uniforme pour tous les
%     chemins qui n'est pas \'evident} 
%
%   Let $n$ be an integer such that the length (in terms of duration) of
%   $w$ after $n$ steps (limit duration of $w_{\ge n}$ \fbox{ou $w_{>
%       n}$}) is smaller than $1$, say $d$. Let $\rho$ be an execution
%   that reads $w$. Let $\{\alpha_1,\dots,\alpha_p\}$ be the various
%   values of clock $x$ along executions that read $w$ after $n$ steps
%   (there are finitely many such values). Assume w.l.o.g. that
%   $\textsf{frac}(\alpha_1) \le \textsf{frac}(\alpha_2) \le \dots$. Let
%   $i_0$ be the smallest integer such that $\alpha_i+d$ has not the
%   same integral part as $\alpha_i$, and let $t$ \fbox{finir, ca semble
%     ok, mais c'est moche pour le moment} 
%   \bigskip
%
  Let $D$ be the duration of $w$, and for every $n$, $d_n$ be the
  duration of $w_{\le n}$ (the prefix of length $n$ of $w$). We fix
  some integer $N$ such that $D-d_N<1$, and we write $V_N$ for the set
  of possible valuations for clock $x$ after having read prefix
  $w_{\le n}$ in $\B$. The set $V_N$ is finite.

  Let $\varrho = (\ell_0,\mathsf{tag}_0,v_0) \rightarrow
  (\ell_1,\mathsf{tag}_1,v_1) \dots$ be a run that reads $w$ in
  $\B$. Then $v_N \in V_N$, and either $\varrho$ stabilizes after $N$
  steps, or $\varrho$ stabilizes after $N+k$ steps, where $k$ is the
  smallest integer such that $\lfloor v_N \rfloor + 1 = \lfloor
  v_N+d_{N+k}-d_N \rfloor$. This property does not depend on the
  choice of the execution, but only on the value $v_N$. Since there
  are finitely many $v_N$, as $V_N$ is finite, we can find a maximal
  $k$, denoted $k_0$ which will work for all the $v_N$. We choose
  $n_0$ either as $N$ (in case $\lfloor v_N \rfloor = \lfloor
  v_N+d_{N+k}-d_N \rfloor$ for every $k$), or as $N+k_0$ where $k_0$
  is defined as above.  By the previous analysis, we are done, every
  run $\varrho$ which reads $w$ stabilizes after $n_0$ steps.
  % \thomas{Ai legerement modifie par rapport au $k_0$. OK?}
  % \fbox{TB: On peut juste dire qu'on prend $n_0=N+k_0$ sans
  %   distinguer les cas?}
  % 
  % Let $\varrho = (\ell_0,\mathsf{tag}_0,v_0) \rightarrow
  % (\ell_1,\mathsf{tag}_1,v_1) \dots$ be a run that reads $w$ in
  % $\B$. Then $v_N \in V_N$, and either $\varrho$ stabilizes after $N$
  % steps, or $\varrho$ stabilizes after $N+k$ steps, where $k$ is the
  % smallest integer such that $\mathit{integral}(v_N) + 1 =
  % \mathit{integral}(v_N+d_{N+k}-d_N)$. This property does not depend
  % on the choice of the execution, but only on the value $v_N$.  We
  % choose $n_0$ either as $N$ (in case $\mathit{integral}(v_N) =
  % \mathit{integral}(v_N+d_{N+k}-d_N)$ for every $k$), or as $N+k_0$
  % where $k_0$ is the smallest integer such that
  % $\mathit{integral}(v_N) + 1 = \mathit{integral}(v_N+d_{N+k}-d_N)$.
  % By the previous analysis, we are done, every run $\varrho$ which
  % reads $w$ stabilizes after $n_0$ steps.
\end{proof}

\begin{example}
  Take for instance the following timed automaton:
  \begin{center}
    \begin{tikzpicture}
      \everymath{\scriptstyle}

      \draw(2.2,0) node (init) {}; 

      \draw(3,0) node [circle,draw,inner sep=1.5pt] (l1)
      {$\ell_1$}; 

      \draw(6,0) node [circle,draw,inner sep=1.5pt] (l2) {$\ell_2$};
      
      \draw[-latex'] (init) -- (l1) node[pos=.5,above]{};
      \draw[-latex'] (l1) -- (l2) node[pos=.5,above]{$a,x:=0$};
      \draw [-latex'] (l1) .. controls +(120:1cm) and +(60:1cm) .. (l1) node [midway,above] {$x \ge 2,a$};
      \draw [-latex'] (l2) .. controls +(120:1cm) and +(60:1cm) .. (l2) node [midway,above] {$a$};
    \end{tikzpicture}
  \end{center}
  and the Zeno timed word $w = (a,2)(a,2+1/2)(1,2+3/4)... $ whose
  duration is $3$. There are infinitely many runs that read $w$, each
  one depends on the time where it takes the transition to
  $\ell_2$. We have that all runs that read $w$ stabilize after $2$
  steps.
\end{example}

Hence a Zeno word that is read in $\B$ will have a prefix (up to $n_0$
steps) and a Zeno tail that will satisfy clock constraints in a
``straightforward'' manner. 
% Furthermore the following property holds.
% \fbox{TB: We assume time strictly increasing? Otherwise guards $x=0$
%   could be problematic...}
%
% \fbox{TB: I imagine we can not bound $n_0$, but this is not important,
%   since we will use non determinism to guess it. Am I right?}
%
% \begin{lemmastyleS}
%   Assume that $w$ is a Zeno timed word which can be read from a
%   configuration $(\ell,v)$ along a path $\pi$ in $\A$. Assume
%   furthermore that $v$ and $v+\mathit{duration}(w)$ lie in the same
%   unit interval $(c;c+1)$. \pat{ajouter que $w$ peut etre egal a $c$}
%   Then for every $0<\lambda<1$, the timed word $w_\lambda$ defined by
%   contracting $w$ with $\lambda$ is also read in $\A$ along the path
%   $\pi$.
% \end{lemmastyleS}
%
% \begin{proof}
%   \patlong{needs to be done, in case it is useful :-)}
% \end{proof}
We will take advantage of this structure to draw an algorithm for
deciding whether there is a Zeno word that is not accepted by $\B$.
% We deal with words that are not read in $\mathcal{A}$ in a different
% widget.  \fbox{\`a faire, mais \c{c}a ne doit pas \^etre le plus
%   dur, si on a le droit \`a NPR}

\paragraph{\textbf{Tail of Zeno words.}}
We construct a finite automaton $\B_f$ that will somehow recognize the
tail of Zeno behaviours. We build the automaton as follows: the set of
states is $Q=L \times \{\mathsf{w},\mathsf{b}\} \times \left(\{0,1,
  \dots, M-1,M\}\right)$.
% A state $(\ell,c)$ is said to be sane for letter $a$ whenever either
% $c=0$ or $c>0$ and there is no transition in $\mathcal{A}$ of the
% form $\ell \xrightarrow{g,a,\{x\}} \ell'$ (\textit{i.e.} which
% resets clock $x$!)  with ``$(c<x<c+1) \subseteq g$''.
\begin{itemize}
\item There is a transition $(\ell,\mathsf{tag},c) \xrightarrow{a}
  (\ell',\mathsf{tag}',0)$ whenever there is a transition
  $(\ell,\mathsf{tag}) \xrightarrow{g,a,\{x\}} (\ell',\mathsf{tag}')$
  in $\B$ with $\val{x \in I_c} \subseteq \val{g}$;
\item There is a transition $(\ell,\mathsf{tag},c) \xrightarrow{a}
  (\ell',\mathsf{tag}',c)$ whenever there is a transition
  $(\ell,\mathsf{tag}) \xrightarrow{g,a,\emptyset}
  (\ell',\mathsf{tag}')$ in $\B$ with $\val{x \in I_c} \subseteq
  \val{g}$
\end{itemize}
% In both cases, $\mathsf{tag}' = \mathsf{b}$ whenever $\mathsf{tag} =
% \mathsf{b}$ or $\ell' \in F$, and $\mathsf{tag}' = \mathsf{w}$
% otherwise (same tag as in $\B$...).
A state $(\ell,\mathsf{tag},c)$ is accepting if $\mathsf{tag} =
\mathsf{b}$, and we assume a B\"uchi condition. We parameterize $\B_f$
with the set of initial states $Q_0 \subseteq Q$, and we then write
$\B_f^{Q_0}$.

% It is easy to see that the language of $\A_f$ is
% suffix-closed.\pat{pas vraiment ca, je veux dire que une fois qu'on
%   est acceptant, on accepte}

This abstraction is a region abstraction for tails of Zeno behaviours
in the following sense:

\begin{lemmastyleS}
  \label{lemma2}
  Let $q=(\ell,\mathsf{tag},c)$ be a state of $\B_f$, and $u$ be an
  infinite (untimed) word. There is an equivalence between the two
  following properties:
  \begin{enumerate}
  \item $u$ can be read along some path $\varpi$ from $q$ in $\B_f$;
  \item for every $v \in \{c\} \cup I_c$, for every increasing
    timestamps sequence $\tau$
    % \pat{increasing!}\thomas{to be sure, increasing means strictly
    %   increasing?}
    which is convergent, and such that $v+\mathit{duration}(\tau) \in
    \overline{I_c} \cap (v;v+1)$, the timed word $w = (u,\tau)$ can be
    read along some path $\pi$ in $\B$.
%     there exists a Zeno timed word $w$ which can be read from
%     $(\ell,v)$ (along some path $\pi$) in $\A$, s.t.
%     \begin{itemize}
%     \item if $v$ has positive integral part (which is always the case
%       if $c<M$), then $v$ and $v+\mathit{duration}(w)$ lie in the same
%       unit interval;
%     \item if $c=M$ and $v$ is an integer, then
%       $v+\mathit{duration}(w)$ is in $(v;v+1)$.
%     \end{itemize}
%     It is furthermore the case that for every infinite timed word $w'$
%     such that $\mathsf{untime}(w) = \mathsf{untime}(w')$ and
%     $\mathit{duration}(w') \le \mathit{duration}(w)$, $w'$ can be read
%     from $(\ell,v)$ in $\A$ along the same path $\pi$.
%     % $v$ and $v+\mathit{duration}(w)$ lie in the same unit interval
%     % (in case $v$ has positive fractional part), and if $c=M$ and $v$
%     % is an integer, then $v+\mathit{duration}(w)$ is in $(v;v+1)$.
  \end{enumerate}
  In this equivalence we can furthermore assume the sequence of
  locations and tags encountered along $\varpi$ coincide with the
  sequence of locations and tags encountered along $\pi$.
  % In particular the sequence of tags along $\varpi$ coincide with
  % the sequence of tags along $\pi$.  only depends on the initial
  % value $\mathsf{tag}$ and on the
  % encountered locations.
\end{lemmastyleS}

\begin{proof}
  We first prove the implication $2 \Rightarrow 1$. Assume $w$ is a
  Zeno timed word read along the path $\pi = (\ell,\mathsf{tag})
  \xrightarrow{g_0,a_0,Y_0} (\ell_1,\mathsf{tag}_1)
  \xrightarrow{g_1,a_1,Y_1} \dots$ from configuration
  $(\ell,\mathsf{tag},v)$. The corresponding run has the form
  $(\ell,\mathsf{tag},v) \xrightarrow{\tau_0,a_0}
  (\ell_1,\mathsf{tag}_1,v_1) \xrightarrow{\tau_1,a_1} \dots$ and by
  assumption for every $j \ge 0$, $v+\sum_{i=0}^j \tau_i \in I_c$.
  % lies in the same unit interval as $v$.
  \begin{itemize}
  \item Assume the clock $x$ is never reset along $\pi$ (all $Y_k$'s
    are empty), then for every $k \ge 1$, $v_k = v+\sum_{0 \le i < k}
    \tau_i$ and thus $v_k \in I_c$, which implies $\val{x \in I_c}
    \subseteq \val{g_k}$. In that case, by construction of $\B_f$, we
    get that
    % if $\mathsf{tag} \in \{\mathsf{b},\mathsf{w}\}$, then
    there is path $(\ell,\mathsf{tag},c) \xrightarrow{a_0}
    (\ell_1,\mathsf{tag}_1,c) \xrightarrow{a_1} \dots$ in $\B_f$.
    % the finite automaton where each $\mathsf{tag}_k$ is uniquely
    % determined by $\mathsf{tag}$ and the encountered locations
    % $\ell_j$ (for $1 \le j \le k$).
  \item Assume the clock $x$ is reset along $\pi$, and that $Y_k =
    \{x\}$ is the first time $x$ is reset along $\pi$. Then the same
    argument as before applies to the prefix of $\pi$ up to
    $\ell_{k-1}$. Then we have that for every $j>k$, $v_j$ is either
    $0$ (in case $Y_{j-1} = \{x\}$) or lies in $(0;1)$, and that
    $\val{x \in(0;1)} \subseteq \val{g_j}$ in any case (time is
    supposed to be strictly monotonic). Thus we can build a path
    $\varpi$ in the finite automaton which reads the untiming of $w$.
  \end{itemize}
  
  We now prove the implication $1 \Rightarrow 2$. Assume that $u = a_0
  a_1 \dots$ is an infinite (untimed) word which is read along some
  path $\varpi = (\ell_0,\mathsf{tag}_0,c_0) \xrightarrow{a_0}
  (\ell_1,\mathsf{tag}_1,c_1) \xrightarrow{a_1} \dots$ in $\B_f$ with
  $(\ell_0,\mathsf{tag}_0,c_0) = (\ell,\mathsf{tag},c)$. Take now a
  value $v \in \{c\} \cup I_c$ for clock $x$ and take an increasing
  timestamps sequence $(t_i)_{i \ge 0}$ that is convergent and such
  that $v+ \sup_i t_i \in \overline{I_c} \cap (v;v+1)$.

  By construction of $\B_f$ there is a path $\pi =
  (\ell_0,\mathsf{tag}_0,c_0) \xrightarrow{g_0,a_0,Y_0}
  (\ell_1,\mathsf{tag}_1,c_1) \xrightarrow{g_1,a_1,Y_1} \dots$ that
  corresponds to $\varpi$. In particular, if $Y_i = \emptyset$, then
  $c_{i+1} = c_i$, and if $Y_i = \{x\}$, then $c_{i+1} = 0$. We
  distinguish between two cases:
  \begin{itemize}
  \item the clock $x$ is never reset (for all $i \ge 0$, $Y_i =
    \emptyset$), in which case for all $i$, $c_i = c$. We define for
    every $i \ge 1$, $v_i = v+t_{i-1}$. By assumption on $(t_i)_i$, it
    holds that $v_i \in I_c = I_{c_i}$. Thus, the following run reads
    the timed word $w=((t_i)_{i \ge 0},u)$:
    \[
    (\ell,\mathsf{tag},v) = (\ell_0,\mathsf{tag}_0,v_0)
    \xrightarrow{t_0,a_0} (\ell_1,\mathsf{tag}_1,v_1)
    \xrightarrow{t_1-t_0,a_1} \dots
    \]
  \item the clock $x$ is reset along $\pi$, and $i_0$ is the smallest
    index such that $Y_{i_0} = \{x\}$. We then define $I =
    \{i_0<i_1<\dots\}$ the set of index of sets $Y_i$'s where $Y_i =
    \{x\}$. We then define
    \[
    v_{j+1} = \left\{\begin{array}{ll}
      v+t_j & \text{if}\ j<i_0 \\
      0 & \text{if}\ j \in I \\
      t_j-t_{i_k} & \text{if}\ i_k<j<i_{k+1}
    \end{array}\right.
    \]
    Then, the following run reads the timed word $w=((t_i)_{i \ge
      0},u)$:
    \[
    (\ell,\mathsf{tag},v) = (\ell_0,\mathsf{tag}_0,v_0)
    \xrightarrow{t_0,a_0} (\ell_1,\mathsf{tag}_1,v_1)
    \xrightarrow{t_1-t_0} \dots
    \]
    because the values of the clock never exceeds $1$ after having
    reset the clock for the first time (and time is increasing, hence
    the constraint $x \in I_0$ is then always satisfied when firing a
    transition).
  \end{itemize}
  This concludes the proof of the second implication.
\end{proof}

\begin{lemmastyleS}
  \label{lemma4}
  Let $Q_0 \subseteq Q$ be a set of initial states for $\B_f$. Then we
  can decide in polynomial space whether $\B_f^{Q_0}$ is universal.
\end{lemmastyleS}

\begin{proof}
  $\B_f$ is a B\"uchi automaton, this result is thus standard,
  see~\cite{MH84}.
\end{proof}

% \begin{lemmastyleS}
%   First condition of the previous lemma can be decided (easily :-)
% \end{lemmastyleS}

% \begin{proof}
%   Some clever subset construction, that needs to be handled carefully
%   with all aborting executions with a black state. For instance, this
%   automaton is a bit problematic:
%   \begin{center}
%     \begin{tikzpicture}
%       \draw (0,0) node [draw,circle] (A) {};
%       \draw (0,-2) node [draw,circle,fill=black] (B) {};
%       \draw [-latex'] (A) -- (B);
%       \draw [-latex'] (A) .. controls +(-45:1cm) and +(45:1cm) .. (A);
%     \end{tikzpicture}
%   \end{center}
% \end{proof}

% \begin{lemmastyleS}
%   Assume that $\B_f^{Q_0}$ is not universal and that $u$ is an
%   infinite (untimed) word not accepted by $\B_f^{Q_0}$. Define
%   inductively $Q_i$ as the set of states reachable from $Q_0$ after
%   having read the length-$i$ prefix of $u$. Then $\B_f^{Q_i}$ is not
%   universal, and the suffix of $u$ starting from letter $i+1$ is not
%   accepted by $\B_f^{Q_i}$.
% \end{lemmastyleS}

% \begin{proof}
%   \fbox{should be easy to prove this}
% \end{proof}

\paragraph{\textbf{Handling the prefix.}}

We use an abstraction which is now standard in the context of
single-clock timed automata,
see~\cite{ADM04,ADMN04,OW04,LW05,OW05,LW07,OW07}
% We refine this standard abstraction for our context.
This is a symbolic transition system associated with $\B$, which is
denoted $\mathsf{Abst}_{\B}$ and defined as follows. We let $\Gamma$
be the finite set $2^{L \times \{\mathsf{b},\mathsf{w}\} \times
  \{0,1,\dots,M\}}$. The states of $\mathsf{Abst}_{\B}$ are tuples
$(\gamma,h,\gamma')$ where $\gamma,\gamma' \in \Gamma$ and $h \in
\Gamma^*$ is a finite word over alphabet $\Gamma$.  Informally an
abstract state $(\gamma,h,\gamma')$ represents a set of configurations
of $\B$, where $\gamma$ are those states where the value of $x$ is an
integer, $\gamma'$ are those states where the value of $x$ is larger
than $M$, and $h$ encodes the order on the fractional part of the
other states.  More precisely, an abstract state $(\gamma,h=\gamma_1
\dots \gamma_m,\gamma')$ represents a set of states of $\B$
$S=\{(\ell_j,\mathsf{tag}_j,v_j) \mid j \in J\}$ such that:
\begin{itemize}
\item $\gamma = \{(\ell_j,\mathsf{tag}_j,v_j) \mid v_j \in
  \{0,1,\dots,M\}\}$
\item $\bigcup_{l=1}^m \gamma_l = \{(\ell_j,\mathsf{tag}_j,v_j) \mid j
  \in J,\ v_j \in I_{c_j},\ c_j<M\}$
\item $\gamma' = \{(\ell_j,\mathsf{tag}_j,M) \mid v_j>M\}$
\item if $(\ell_i,\mathsf{tag}_i,v_i) \in \gamma_i$ and
  $(\ell_j,\mathsf{tag}_j,v_j) \in \gamma_j$, $i \le j$ iff
  $\mathsf{frac}(v_i) \le \mathsf{frac}(v_j)$
\end{itemize}
We then write $\mathit{abstr}(S) = (\gamma,h,\gamma')$, and $S \in
\mathit{concr}((\gamma,h,\gamma'))$.

We omit the definition of the abstract transitions in
$\mathsf{Abst}_{\B}$, which is rather tedious and can be found for
instance in~\cite{OW05}.

% There is an abstract transition $(\gamma_1,h_1,\gamma'_1)
% \xrightarrow{a} (\gamma_2,h_2,\gamma'_2)$ iff there exist \fbox{...}

This abstraction `computes' the set of executions that can read
finite timed words in $\B$ in the following sense.
\begin{lemmastyleS}
  \label{lemma-abstract}
  Let $\mathit{Conf} = (\gamma,h,\gamma')$ be an abstract
  configuration. Then $\mathit{Conf}$ is reachable in
  $\mathsf{Abst}_{\B}$ iff there exists a finite timed word $w$ such
  that $\mathit{Conf} = \mathit{abstr}(S_w)$, where $S_w$ is the set
  of configurations that are reached after reading $w$ in $\B$.
  % Let $w$ be a finite timed word, and let $R$ be the set of
  % executions that read $w$ in $\B$. We denote by $\mathit{Conf} =
  % \{(\ell_j,\mathsf{tag}_j,v_j) \mid j \in J\}$ the set of
  % configurations at the end of an execution in $R$.
  %   % , with the additional information $\mathsf{tag}_j =
  %   % \mathsf{b}$
  %   % iff a location in $F$ has been visited.
  % Then $\mathit{abstr}(\mathit{Conf})$ is reachable in the abstract
  % transition system iff there is a finite timed word $w$ such that
  % \fbox{blabla}
\end{lemmastyleS}

\begin{proof}
  This is proven in close terms for instance in~\cite{OW05} (where the
  point-of-view of alternating timed automata is taken).
\end{proof}

\paragraph{\textbf{Gluing everything.}}
We define $\mathcal{Z}$ the set of all sets of states $Q_0 \subseteq
Q$ such that $\A_f^{Q_0}$ is universal. This set can be computed
thanks to Lemma~\ref{lemma4}. If $(\gamma,h,\gamma')$ is an abstract
state of the above symbolic transition system, we write
$\mathit{set}((\gamma,h,\gamma'))$ for the set $\gamma \cup \gamma'
\cup \bigcup_{l=1}^m \gamma_l$, assuming $h = \gamma_1 \dots
\gamma_m$.

\begin{propositionstyleS}
  \label{prop:correctness}
  There is a Zeno timed word not accepted by $\B$ iff in
  $\mathsf{Abst}_{\B}$, it is possible to reach a configuration
  $\mathit{Conf}$ such that $\mathit{set}(\mathit{Conf}) \not\in
  \mathcal{Z}$.
  % when we take the set of pairs $(\ell,c)$ involved in $\gamma$ (we
  % forget about the order), the first condition of Lemma~\ref{lemma2}
  % is satisfied.
\end{propositionstyleS}

% \fbox{en fait, on pourrait pr\'ecalculer les \'etats desquels on
%   peut avoir des mots Zeno rejet\'es}

\begin{proof}
  Assume that $\B$ is not universal, and consider a Zeno timed word
  $w$ which is not accepted by $\B$. By assumption it can be read in
  $\B$. Also there exists some integer $n_0$ such that any run in $\B$
  which reads $w$ stabilizes after $n_0$ steps
  (Lemma~\ref{lemma-stabilize}). Assume $w_1$ is the $n_0$-th prefix
  of $w$ and that $w = w_1 \cdot w_2$. Let
  $\mathit{Conf}=\mathit{abstr}(S_{w_1})$. We will prove that writing
  $Q_0$ for $\mathit{set}(\mathit{Conf})$, it is the case that
  $\B_f^{Q_0}$ does not accept $\mathsf{untime}(w_2)$. Towards a
  contradiction assume it is not the case. Then let $\pi_2$ be a path
  that accepts $\mathsf{untime}(w_2)$ in $\B_f^{Q_0}$. It starts from
  some $q_0 \in Q_0$. We have that $q_0 \in
  \mathit{set}(\mathit{abstr}(S_{w_1}))$, and thus there is some
  configuration $(\ell,\mathsf{tag},v) \in S_{w_1}$ which corresponds
  to $q_0$, and there is some run in $\B$ which reads $w_1$ and
  reaches $(\ell,\mathsf{tag},v)$. We can then apply
  Lemma~\ref{lemma2} and lift $\pi_2$ into a run $\varpi_2$ that
  accepts $w_2$ (it visits the same locations and the same tags as
  $\pi_2$, which is accepting). This is the expected contradiction.

  \medskip Assume now that in $\mathsf{Abst}_{\B}$, it is possible to
  reach a configuration $\mathit{Conf}$ such that $Q_0 =
  \mathit{set}(\mathit{Conf}) \not\in \mathcal{Z}$. Let $w_1$ be a
  finite timed word such that $\mathit{Conf} =
  \mathit{abstr}(S_{w_1})$ (Lemma~\ref{lemma-abstract}). As
  $\B_f^{Q_0}$ is not universal, there is an infinite (untimed) word
  $u_2$ which is not accepted by $\B_f^{Q_0}$. Let $\alpha$ be the
  largest fractional part involved in $S_{w_1}$, and $T$ be the
  largest timestamp of $w_1$. We define the (Zeno) timed word $w = w_1
  \cdot w_2$ where the untiming of $w_2$ is $u_2$ and the timestamps
  of $w_2$ are $T+\epsilon \cdot \left(\frac{1}{2}\right)$,
  $T+\epsilon \cdot \left(\frac{1}{2}+\frac{1}{4}\right)$, ...,
  $T+\epsilon \cdot
  \left(\frac{1}{2}+\frac{1}{4}+\dots+\frac{1}{2^k}\right)$, ... where
  $\epsilon$ is chosen such that $0<\epsilon<1-\alpha$. It is then
  obvious that any infinite run in $\B$ that reads $w$ stabilizes
  after $n_0 = |u_1|$ steps (because any infinite run which reads $w$
  starts with a prefix reading $w_1$, hence after $|u_1|$ steps, it is
  in a configuration of $S_{w_1}$).
  
  % The timed word $w$ is read along several runs in $\B$.
  Towards a contradiction, consider a run $\varrho$ in $\B$ which
  accepts $w$.
  % , and take $n_0$ an integer such that $\varrho$ stabilizes after
  % $n_0$ steps. W.l.o.g. we can assume that $n_0 \ge |u_1|$.
  The suffix $w_2$ of $w$ satisfies the second condition of
  Lemma~\ref{lemma2}. Applying the equivalence stated in this lemma,
  the first condition is also satisfied, and this corresponds to 
  the untimed word $u_2$.
  % an untimed word $u'$ which is a suffix of $u_2$ (because $n_0 \ge
  % |u_1|$).
  The corresponding path in $\B_f$, which starts from some $q \in
  Q_0$, is accepting because $\varrho$ is accepting, and tags and
  locations are preserved. Hence it is the case that $\B_f^{\{q\}}$
  accepts $u_2$. It is then the case that $\B_f^{Q_0}$ accepts $u_2$,
  which contradicts the assumption.
  % Assume now that in the abstract transition system we can reach an
  % abstract configuration $(\gamma,h,\gamma')$ such that if $Q_0 =
  % ...$ then $\A_f^{Q_0}$ is not universal. There is an infinite
  % (untimed) word $u$ such that all executions reading $u$ from any
  % of the states in $Q_0$ are tagged with $\mathsf{w}$. Write now
  % $\alpha$ for a finite timed word that is read and leads to a
  % concretization of $(\gamma,h,\gamma')$. Then from that
  % concretization we are able to build a timed word $\beta$ such that
  % $u$ is the untiming of $\beta$. We claim that $\alpha\beta$ is a
  % timed word which is not accepted by $\A$.
\end{proof}

The set $\mathcal{Z}$ is upward-closed, hence we can decide the
reachability problem w.r.t. $\mathcal{Z}$ in the well-structured
transition system $\mathsf{Abst}_{\B}$ (see for
instance~\cite{OW05}). Hence, as a consequence of
Proposition~\ref{prop:correctness} above, we get
Theorem~\ref{theo:univ-zeno-decidable}.
% the following result.

% Applying Theorem~? of~\cite{}, we get the following theorem:
% \begin{theorem}
%   The universality problem for Zeno timed words with positive
%   frequencies in one-clock timed automata is decidable.
% \end{theorem}

\begin{remarkstyleS}
  The construction made in the proof of the above theorem can be
  adapted to prove that the universality problem for Zeno timed words
  in one-clock timed automata with a standard B\"uchi acceptance
  condition is decidable.
\end{remarkstyleS}

% \patlong{en fait, je pense qu'une preuve un peu similaire pourrait
%   s'adapter au cas B\"uchi zeno}

\end{proof}
%%% Local Variables: 
%%% mode: latex
%%% TeX-master: "ICALP11"
%%% End: 

\putbib[ICALP11-2]
\end{bibunit}


\begin{thebibliography}{m}

\bibitem{ADMW-fossacs09}
R.~Alur, A.~Degorre, O.~Maler, and G.~Weiss.
\newblock On omega-languages defined by mean-payoff conditions.
\newblock In {\em Proc. 12th Intl Conf. on Foundations of Software Science and
  Computation Structures (FoSSaCS'09)}, LNCS 5504, p.~333--347. Springer, 2009.

\bibitem{AD-tcs94}
R.~Alur and D.~L. Dill.
\newblock A theory of timed automata.
\newblock {\em Theoretical Computer Science}, 126(2):183--235, 1994.

\bibitem{ALP-hscc01}
R.~Alur, S.~L. Torre, and G.~J. Pappas.
\newblock Optimal paths in weighted timed automata.
\newblock In {\em Proc. 4th Intl Workshop on Hybrid Systems: Computation and
  Control (HSCC'01)}, LNCS 2034, p.~49--62. Springer, 2001.

\bibitem{AD-concur09}
E.~Asarin and A.~Degorre.
\newblock Volume and entropy of regular timed languages: Discretization
  approach.
\newblock In {\em Proc. 20th Intl Conf. on Concurrency Theory (CONCUR'09)},
  LNCS 5710, p.~69--83. Springer, 2009.

\bibitem{BBBBG-lics08}
C.~Baier, N.~Bertrand, P.~Bouyer, {\relax Th}.~Brihaye, and M.~Gr{\"o}{\ss}er.
\newblock Almost-sure model checking of infinite paths in one-clock timed
  automata.
\newblock In {\em Proc. 23rd Annual IEEE Symp. on Logic in Computer Science
  (LICS'08)}, p.~217--226. IEEE Comp. Soc. Press, 2008.

\bibitem{BFHLPRV-hscc01}
G.~Behrmann, A.~Fehnker, {\relax Th}.~Hune, K.~G. Larsen, P.~Pettersson,
  J.~Romijn, and F.~W. Vaandrager.
\newblock Minimum-cost reachability for priced timed automata.
\newblock In {\em Proc. 4th Intl Workshop on Hybrid Systems: Computation and
  Control (HSCC'01)}, LNCS 2034, p.~147--161. Springer, 2001.

\bibitem{BFMM-qapl10}
A.~Bianco, M.~Faella, F.~Mogavero, and A.~Murano.
\newblock Quantitative fairness games.
\newblock In {\em Proc. 8th Workshop on Quantitative Aspects of Programming
  Languages (QAPL'10)}, ENTCS~28, p.~48--63, 2010.

\bibitem{BBL-fmsd08}
P.~Bouyer, E.~Brinksma, and K.~G. Larsen.
\newblock Optimal infinite scheduling for multi-priced timed automata.
\newblock {\em Formal Methods in System Design}, 32(1):3--23, 2008.

\bibitem{CDEHR-concur10}
K.~Chatterjee, L.~Doyen, H.~Edelsbrunner, {\relax Th}.~A. Henzinger, and
  {\relax Ph}.~Rannou.
\newblock Mean-payoff automaton expressions.
\newblock In {\em Proc. 21th Intl Conf. on Concurrency Theory (CONCUR'10)},
  LNCS 6269, p.~269--283. Springer, 2010.

\bibitem{CDH-acmtocl10}
K.~Chatterjee, L.~Doyen, and {\relax Th}.~A. Henzinger.
\newblock Quantitative languages.
\newblock {\em ACM Transactions on Computational Logic}, 11(4), 2010.

\bibitem{KNSS-tcs02}
M.~Z. Kwiatkowska, G.~Norman, R.~Segala, and J.~Sproston.
\newblock Automatic verification of real-time systems with discrete probability
  distributions.
\newblock {\em Theoretical Computer Science}, 282:101--150, 2002.

\bibitem{LMS-concur04}
F.~Laroussinie, N.~Markey, and {\relax Ph}.~Schnoebelen.
\newblock Model checking timed automata with one or two clocks.
\newblock In {\em Proc. 15th Intl Conf. on Concurrency Theory (CONCUR'04)},
  LNCS 3170, p.~387--401. Springer, 2004.

\bibitem{OW05}
J.~Ouaknine and J.~Worrell.
\newblock On the decidability of {M}etric {T}emporal {L}ogic.
\newblock In {\em Proc. 20th Annual Symp. on Logic in Computer Science
  (LICS'05)}, p.~188--197. IEEE Comp. Soc. Press, 2005.

\bibitem{TBG-fsttcs09}
M.~Tracol, C.~Baier, and M.~Gr{\"o}{\ss}er.
\newblock Recurrence and transience for probabilistic automata.
\newblock In {\em Proc. 29th IARCS Annual Conf. on Foundations of Software
  Technology and Theoretical Computer Science (FSTTCS'09)}, LIPIcs~4,
  p.~395--406. Schloss Dagstuhl - Leibniz-Zentrum fuer Informatik, 2009.

\end{thebibliography}


\begin{thebibliography}{h}

\bibitem{ADM04}
P.~A. Abdulla, J.~Deneux, and P.~Mahata.
\newblock Multi-clock timed networks.
\newblock In {\em Proc. 19th Annual Symp. on Logic in Computer Science
  (LICS'04)}, p.~345--354. IEEE Comp. Soc. Press, 2004.

\bibitem{ADMN04}
P.~A. Abdulla, J.~Deneux, P.~Mahata, and A.~Nylen.
\newblock Forward reachability analysis of timed {P}etri nets.
\newblock In {\em Proc. Joint Conf. on Formal Modelling and Analysis of Timed
  Systems and Formal Techniques in Real-Time and Fault Tolerant System
  (FORMATS+FTRTFT'04)}, LNCS 3253, p.~343--362. Springer, 2004.

\bibitem{BBL-fmsd08}
P.~Bouyer, E.~Brinksma, and K.~G. Larsen.
\newblock Optimal infinite scheduling for multi-priced timed automata.
\newblock {\em Formal Methods in System Design}, 32(1):3--23, 2008.

\bibitem{CDEHR-concur10}
K.~Chatterjee, L.~Doyen, H.~Edelsbrunner, {\relax Th}.~A. Henzinger, and
  {\relax Ph}.~Rannou.
\newblock Mean-payoff automaton expressions.
\newblock In {\em Proc. 21th Intl Conf. on Concurrency Theory (CONCUR'10)},
  LNCS 6269, p.~269--283. Springer, 2010.

\bibitem{LW05}
S.~Lasota and I.~Walukiewicz.
\newblock Alternating timed automata.
\newblock In {\em Proc. 8th Intl Conf. on Foundations of Software Science and
  Computation Structures (FoSSaCS'05)}, LNCS 3441, p.~250--265. Springer, 2005.

\bibitem{LW07}
S.~Lasota and I.~Walukiewicz.
\newblock Alternating timed automata.
\newblock {\em ACM Transactions on Computational Logic}, 9(2:10), 2008.

\bibitem{MH84}
S.~Miyano and T.~Hayashi.
\newblock Alternating finite automata on omega-words.
\newblock {\em Theorerical Computer Science}, 32:321--330, 1984.

\bibitem{OW04}
J.~Ouaknine and J.~Worrell.
\newblock On the language inclusion problem for timed automata: Closing a
  decidability gap.
\newblock In {\em Proc. 19th Annual Symp. on Logic in Computer Science
  (LICS'04)}, p.~54--63. IEEE Comp. Soc. Press, 2004.

\bibitem{OW05}
J.~Ouaknine and J.~Worrell.
\newblock On the decidability of {M}etric {T}emporal {L}ogic.
\newblock In {\em Proc. 20th Annual Symp. on Logic in Computer Science
  (LICS'05)}, p.~188--197. IEEE Comp. Soc. Press, 2005.

\bibitem{OW07}
J.~Ouaknine and J.~Worrell.
\newblock On the decidability and complexity of metric temporal logic over
  finite words.
\newblock {\em Logical Methods in Computer Science}, 3(1:8), 2007.

\end{thebibliography}
\end{document}